\documentclass[hidelinks,onefignum,onetabnum,final]{siamart220329}

\usepackage{lipsum}
\usepackage{amsfonts}
\usepackage{graphicx}
\usepackage{epstopdf}
\usepackage{enumitem}
\usepackage{amsmath} 
\usepackage{amssymb}
\usepackage{stmaryrd}
\usepackage{mathtools}
\usepackage{mathrsfs}
\usepackage{mdframed}
\ifpdf
  \DeclareGraphicsExtensions{.eps,.pdf,.png,.jpg}
\else
  \DeclareGraphicsExtensions{.eps}
\fi

\newsiamremark{remark}{Remark}
\newsiamremark{hypothesis}{Hypothesis}
\newsiamremark{example}{Example}
\crefname{hypothesis}{Hypothesis}{Hypotheses}
\newsiamthm{claim}{Claim}

\title{Spectral Analysis of the Schrödinger Operator for the Incommensurate System\thanks{Submitted to the editors DATE.
\funding{This work was supported by the National Natural Science Foundation of China under grant  12571446, and the National Key R \& D Program of China under grants 2025YFA1016600 and 2025YFA1016601. }}}

\author{Yan Li\thanks{SKLMS, Academy of Mathematics and Systems Science, Chinese Academy of Sciences, Beijing 100190, China; and School of Mathematical Sciences, University of Chinese Academy of Sciences, Beijing 100049, China (\email{liyan2021@lsec.cc.ac.cn}, \email{azhou@lsec.cc.ac.cn}).} 
\and Yujian Song\thanks{School of Mathematical Sciences, Peking University, Beijing 100871, China (\email{songyujian@stu.pku.edu.cn})}
\and Aihui Zhou\footnotemark[2]}

\usepackage{amsopn}

\newcommand{\inner}[1]{\left\langle #1 \right\rangle}
\usepackage[version=4]{mhchem}
\usepackage{algorithm}
\usepackage{algpseudocode}
\usepackage{esint}
\usepackage{dsfont}
\makeatletter
\newcommand*\l@section[2]{
  \ifnum \c@tocdepth >\z@
    \addpenalty\@secpenalty
    \addvspace{1.0em \@plus\p@}
    \setlength\@tempdima{1.5em}
    \begingroup
      \parindent \z@ \rightskip \@pnumwidth
      \parfillskip -\@pnumwidth
      \leavevmode \bfseries
      \advance\leftskip\@tempdima
      \hskip -\leftskip
      #1\nobreak
      \leaders\hbox{$\m@th\mkern \@dotsep mu\hbox{.}\mkern \@dotsep mu$}\hfill
      \nobreak\hb@xt@\@pnumwidth{\hss #2}\par
    \endgroup
  \fi}

\newcommand*\l@subsection[2]{
  \ifnum \c@tocdepth >\z@
    \addpenalty\@secpenalty
    \addvspace{0.5em \@plus\p@}
    \setlength\@tempdima{2.3em}
    \begingroup
      \parindent \z@ \rightskip \@pnumwidth
      \parfillskip -\@pnumwidth
      \leavevmode \bfseries
      \advance\leftskip\@tempdima
      \hskip -\leftskip
      #1\nobreak
      \leaders\hbox{$\m@th\mkern \@dotsep mu\hbox{.}\mkern \@dotsep mu$}\hfill
      \nobreak\hb@xt@\@pnumwidth{\hss #2}\par
    \endgroup
  \fi}

\renewcommand{\tableofcontents}{
    \@starttoc{toc}
}
\makeatother

\begin{document}
	
	\maketitle
	
	\begin{abstract}
Many novel and unique physical phenomena in incommensurate systems can be illustrated and predicted using their spectral structure and electronic state distributions. However, the absence of periodicity in these systems poses significant challenges for obtaining the associated information. In this paper, by embedding the system into higher dimensions together with introducing a regularization technique, we prove that the spectrum of the Schrödinger operator for the incommensurate system can be approximated by the spectra of a family of regularized Schrödinger operators, which are elliptic, retain periodicity, and enjoy favorable analytic and spectral properties. We also show the well-posedness of the probability density describing the electronic state distribution of the incommensurate system, which can be approximated by the ones generated by the Bloch solutions to the regularized model. Our analysis provides theoretical support for understanding and computing incommensurate systems.
\end{abstract}

	\begin{keywords}
	incommensurate system, quasi-periodic Schrödinger operator, regularization, spectrum, electronic state distribution
	\end{keywords}
	
	\begin{MSCcodes}
35J10, 35J70, 47A10, 81Q10
	\end{MSCcodes}
	
	

		\section{Introduction}\label{sec1}
Incommensurate systems have drawn significant interest from physicists and materials scientists due to their unique electronic, optical, and mechanical properties \cite{andrei2020graphene, britnell2013strong, cao2018correlated, cao2018unconventional,geim2013van,   kennes2021moire,xu2013graphene}. The incommensurate structure is produced by stacking the single layers of low-dimensional materials on top of one
another with small misalignment. When two atomically thin layers are stacked with a slight twist in orientation, they form moiré patterns and fundamental electronic behaviors appear, such as flat electronic bands and van Hove singularities near the Fermi level (see, e.g., \cite{cao2018unconventional, carr2017twistronics, dean2013hofstadter, woods2014commensurate}). For instance, the flat bands amplify electronic correlations, leading to emergent phenomena such as superconductivity, correlated insulator states, and unconventional magnetism, as prominently demonstrated in twisted bilayer graphene at the magic angle \cite{cao2018correlated, dindorkar2023magical, hao2024robust}. It is significant to investigate these structures from both theoretical and computational perspectives, which involves the spectral analysis of the systems and helps us predict and control their properties by changing material type, twist angle, strain, or external fields. 

Simulating incommensurate systems is computationally challenging primarily because the lack of a common periodicity destroys the translational symmetry that many numerical methods exploit. Without Bloch’s theorem, one cannot reduce the problem to a single unit cell, rendering many efficient algorithms ineffective and causing the required computational cost to increase dramatically. One conventional method for simulating incommensurate systems is to construct a supercell approximation with artificial strain \cite{koda2016coincidence, komsa2013electronic, loh2015graphene}.  Therefore, the techniques commonly employed for periodic systems, such as those based on Bloch’s theorem, can be applied. However, to achieve reasonable computational accuracy, the supercell sizes required are often extremely large, resulting in a substantial computational cost. Moreover, to the best of our knowledge, there is no mathematical support for the supercell approximation approach in the literature. The tight-binding models are also applied to the computation of incommensurate systems \cite{cances2023simple, etter2020modeling, massatt2018incommensurate, massatt2017electronic, watson2023bistritzer}, in which there are inherent modeling errors. The planewave approximation framework proposed in \cite{ZHOU201999} and developed in \cite{chen2021plane, dai2023extended, CiCP-30-5, doi:10.1137/23M1553650, CiCP-35-4} is both model-accurate and computationally efficient. We note that \cite{doi:10.1137/23M1553650} provides the convergence analysis of the planewave approximation for the density of states defined in the sense of the thermodynamic limit. However, rigorous mathematical analysis of the spectral properties and electronic state distributions in incommensurate systems remains incomplete, posing a significant challenge to a deeper understanding and reliable numerical computation of these systems.

We see that the incommensurate system is a typical quasi-periodic system. The study of the Schrödinger operator with the quasi-periodic potential has attracted considerable interest among mathematicians. The spectra of one-dimensional quasi-periodic systems were investigated in e.g., \cite{avron1981almost, avron1983almost,dinaburg1975one, eliasson1992floquet}. In \cite{karpeshina2019extended}, the existence of an absolutely continuous spectrum at high energy in the two-dimensional case was discussed. Nevertheless, it is significant and open to understand the properties of the spectrum of the quasi-periodic Schrödinger operators in multi-dimensions.

In this paper, we aim to investigate the spectrum of the following Schrödinger operator for the incommensurate system
\begin{align*}
    \mathscr{H}:=-\frac{1}{2}\Delta+V_{1}(\mathbf{r})+V_{2}(\mathbf{r})
\end{align*}
defined in $L^{2}(\mathbb{R}^{d})$ on the domain $H^{2}(\mathbb{R}^{d}) (d=1,2)$, where each potential $V_{j} (j=1,2)$ is periodic, while the two periods are incommensurate  (see more details in Section 2). We prove that the spectrum of $\mathscr{H}$ can be
approximated by the spectra of a family of regularized Schrödinger operators, which are elliptic,
retain periodicity, and enjoy favorable analytic and spectral properties. We also show the well-posedness of the probability density, and the probability density can be approximated by the ones generated by the Bloch solutions to the regularized model.

We see that it is quite convenient to restore the periodicity, for which we embed and view the system into  higher dimensions and investigate the following extended Schrödinger operator \cite{ZHOU201999}
\begin{align*}
    \tilde{\mathscr{H}}:=-\frac{1}{2}\sum_{i=1}^{d}(\partial_{\mathbf{r}_{i}}+\partial_{\mathbf{r}_{i}'})^{2}+V_{1}(\mathbf{r})+V_{2}(\mathbf{r}')
\end{align*}
in $L^{2}(\mathbb{R}^{2d})$. However, we observe that the operator $\tilde{\mathscr{H}}$ is
degenerate elliptic and is not self-adjoint on the domain $H^{2}(\mathbb{R}^{2d})$. To avoid addressing a non-self-adjoint operator, we perform a self-adjointization by constructing a self-adjoint extension of $\tilde{\mathscr{H}}$ and show the spectral consistency between this extension and the original incommensurate operator $\mathscr{H}$ (see Theorem \ref{prop:susbs}). We also apply the Bloch-Floquet transform to this extension and obtain the corresponding direct integral decomposition (see details in Section 3). 
To handle the degenerate ellipticity of $\tilde{\mathscr{H}}$, in particular, we introduce a regularization of the extended Schrödinger operator (see details in Section 4): given $\delta>0$, 
\begin{align*}
    \tilde{\mathscr{H}}^{\delta}:= \tilde{\mathscr{H}}-\frac{\delta}{2}\sum_{i=1}^{d}(\partial_{\mathbf{r}_{i}}-\partial_{\mathbf{r}_{i}'})^{2}
 \end{align*}
in $L^{2}(\mathbb{R}^{2d})$ on the domain $H^{2}(\mathbb{R}^{2d})$. By the approximation theory of the spectrum and the spectral consistency between $\mathscr{H}$ and $\tilde{\mathscr{H}}$, we prove that the spectrum of the Schrödinger operator $\mathscr{H}$ for the incommensurate system can be approximated by that of the regularized operator $\tilde{\mathscr{H}}^{\delta}$ as $\delta\rightarrow0^{+}$. 
In addition, we establish the well-posedness of the probability density associated with electronic states in the incommensurate system and show that the probability density can be approximated by those generated from Bloch solutions of the regularized model. Such a regularized model is well-posed: all relevant physical observable quantities are rigorously defined, and the classical Bloch theory, together with a broad range of efficient numerical algorithms developed for periodic systems, is applicable. Our analysis provides theoretical support and an efficient approach for understanding and computing the physical observable quantities for incommensurate systems, and can be extended to systems in any dimension and with any number of stacked material layers.

The rest of this paper is organized as follows. We introduce the incommensurate system, its embedding into higher dimensions, and the classical Bloch-Floquet transform together with relevant notation and existing results in Section 2. In Section 3, we construct a self-adjoint extension of the extended Schrödinger operator and apply the Bloch-Floquet transform to the extension. In Section 4, we introduce a regularization technique and investigate the relationship between the spectra of the extended Schrödinger operator and the regularized Schrödinger operator. In Section 5, by establishing the spectral consistency between the original and extended Schrödinger operators, we derive the property of the spectrum of the Schrödinger operator for the incommensurate system, together with the property of the probability density that describes the electronic state distributions. Finally, we give some concluding remarks in Section 6.

\section{Preliminaries}
In this section, we introduce the notation for the Schrödinger equation for the incommensurate system and its embedding into higher dimensions, together with the analytical tools and results that will be used in the subsequent analysis. 

\subsection{Incommensurate system} 
 Consider two periodic systems in $d$ dimensions stacked in parallel along the $(d+1)$th dimension ($d=1, 2$). For simplicity, we neglect the $(d+1)$th dimension and the distance between the layers.

Each $d$-dimensional periodic system can be described by a Bravais lattice
\begin{align*}
    \mathcal{R}_{j}:=\{A_{j}n:n\in\mathbb{Z}^{d}\},\quad j=1,2,
\end{align*}
where $A_{j}(j=1,2)\in\mathbb{R}^{d\times d}$ is invertible. We denote the unit cell for the $j$-th layer by 
\begin{align*}
    \Gamma_{j}:=\{A_{j}\alpha:\alpha\in[0,1)^{d}\},\quad j=1,2.
\end{align*}
Subsequently, the associated reciprocal lattice and reciprocal unit cell are given by
\begin{align*}
    \mathcal{R}_{j}^{*}:=\{2\pi A_{j}^{-\mathrm{T}}n:n\in\mathbb{Z}^{d}\},\quad \Gamma_{j}^{*}:=\{2\pi A_{j}^{-\mathrm{T}}\alpha:\alpha\in[0,1)^{d}\},\quad j=1,2.
\end{align*}

We note that each $\mathcal{R}_{j}$ is periodic with the translation invariance with respect to its lattice vectors:
\begin{align*}
    \mathcal{R}_{j}=A_{j}n+\mathcal{R}_{j},\quad\forall n\in\mathbb{Z}^{d},j=1,2.
\end{align*}
However, the stacked system $\mathcal{R}_{1}\cup\mathcal{R}_{2}$ may lose the periodicity and the translation invariance property. It follows the definition of the incommensurate system consisting of two periodic lattices.
\begin{definition}
Two Bravais lattices $\mathcal{R}_{1}$ and $\mathcal{R}_{2}$ are called incommensurate if
\begin{align*}
    \mathcal{R}_{1}\cup\mathcal{R}_{2}+\tau=\mathcal{R}_{1}\cup\mathcal{R}_{2}\Longleftrightarrow\tau=\mathbf{0}\in\mathbb{R}^{d}.
\end{align*}
\end{definition}

In this paper, we shall use the standard notation for Sobolev spaces with associated norms (see, e.g. \cite{adams2003sobolev}).
We focus on the following Schrödinger operator $\mathscr{H}$ in $L^{2}(\mathbb{R}^{d})$ on the domain $H^{2}(\mathbb{R}^{d})$ for the bilayer incommensurate system,
\begin{align*}
    \mathscr{H}:=-\frac{1}{2}\Delta+V_{1}(\mathbf{r})+V_{2}(\mathbf{r}),
\end{align*}
where the potential $V_{j}: \mathbb{R}^{d}\rightarrow \mathbb{R}$ is $\mathcal{R}_{j}$-periodic, i.e., 
\begin{align*}
    V_{j}(\mathbf{r}+\tau)=V_{j}(\mathbf{r}),\quad\forall\mathbf{r}\in\mathbb{R}^{d},\quad\forall\tau\in\mathcal{R}_{j},\quad j=1,2,
\end{align*}
and $\mathcal{R}_{1}$ and $\mathcal{R}_{2}$ are incommensurate.
We assume that $V_{j}\in C^{0}(\Gamma_{j}) (j=1,2)$ in this paper.

 The absence of periodicity leads to fundamental difficulties for the study of the Schrödinger operator $\mathscr{H}$. Notably, the bilayer incommensurate system can be embedded in a higher-dimensional space $\mathbb{R}^{d}\times\mathbb{R}^{d}$ and the periodicity can be restored by the extended Schrödinger operator \cite{ZHOU201999}
\begin{align*}
    \tilde{\mathscr{H}}:=-\frac{1}{2}\tilde{D}+V_{1}(\mathbf{r})+V_{2}(\mathbf{r}')
\end{align*}
in $L^{2}(\mathbb{R}^{2d})$ on the domain $H^{2}(\mathbb{R}^{2d})$, where for $\tilde{\mathbf{r}}:=(\mathbf{r}, \mathbf{r}')\in\mathbb{R}^{d}\times\mathbb{R}^{d}$ and $\mathbf{r}=(\mathbf{r}_{1},\ldots,\mathbf{r}_{d})$, $\mathbf{r}'=(\mathbf{r}_{1}',\ldots,\mathbf{r}_{d}')$,  $\tilde{D}$ is defined by
\begin{align*}
    \left(\tilde{D}\tilde{u}\right)(\mathbf{r}, \mathbf{r}' ):=\left(\sum_{i=1}^{d}(\partial_{\mathbf{r}_{i}}+\partial_{\mathbf{r}_{i}'})^{2}\tilde{u}\right)(\mathbf{r}, \mathbf{r}' ).
\end{align*}

Since $V_{1}$ and $V_{2}$ are periodic in $\mathbb{R}^{d}$ with respect to $\mathcal{R}_{1}$ and $\mathcal{R}_{2}$, respectively, the potential $V_{1}(\mathbf{r})+V_{2}(\mathbf{r}')$ is periodic in $\mathbb{R}^{d}\times\mathbb{R}^{d}$ with respect to a higher dimensional Bravais lattice
\begin{align*}
    \tilde{\mathcal{R}}:=\mathcal{R}_{1}\times\mathcal{R}_{2}=\left\{(A_{1}m, A_{2}n):(m,n)\in\mathbb{Z}^{d}\times\mathbb{Z}^{d}\right\}.
\end{align*}
Therefore, the operator $\tilde{\mathscr{H}}$ is periodic and translation invariant with respect to $\tilde{\mathcal{R}}$. Subsequently, we introduce the unit cell for the lattice $\tilde{\mathcal{R}}$ by 
\begin{align*}
    \tilde{\Gamma}:=\Gamma_{1}\times\Gamma_{2},
\end{align*}
and the associated reciprocal lattice and reciprocal unit cell are given by
\begin{align*}
    \tilde{\mathcal{R}}^{*}:=\mathcal{R}_{1}^{*}\times\mathcal{R}_{2}^{*},\quad
    \tilde{\Gamma}^{*}:=\Gamma_{1}^{*}\times\Gamma_{2}^{*}.
\end{align*}

The Schrödinger equation 
\begin{align}\label{eq:incom_sys}
    \mathscr{H}u=\lambda u
\end{align}
for the incommensurate system is then transformed into the extended Schrödinger equation \begin{align}\label{eq:perio_sys}
    \tilde{\mathscr{H}}\tilde{u}=\lambda \tilde{u}
\end{align} for the periodic system. In Section 5, we will show the spectral consistency between  $\mathscr{H}$ and $\tilde{\mathscr{H}}$ under some assumptions.

\subsection{Bloch-Floquet transform}
Note that the periodicity of the extended Schrödinger operator $\tilde{\mathscr{H}}$ with respect to the Bravais lattice $\tilde{\mathcal{R}}$ is restored. To study the property of $\tilde{\mathscr{H}}$, the following Bloch-Floquet transform on $L^{2}(\mathbb{R}^{2d})$ will be used (see, e.g., \cite{gontier2016convergence, kuchment2016overview, li2024mathematical}).

\begin{definition}\label{def:b_f_trans}
The Bloch-Floquet transform $\mathcal{U}_{\tilde{\mathcal{R}}}$ on $ L^{2}(\mathbb{R}^{2d})$ acts as follows:
\begin{equation*}
    f(\tilde{\mathbf{r}})\mapsto\mathcal{U}_{\tilde{\mathcal{R}}}f(\tilde{\mathbf{r}},\tilde{\mathbf{k}}):=\sum_{\tilde{\tau}\in\tilde{\mathcal{R}}}f(\tilde{\mathbf{r}}+\tilde{\tau})e^{-\mathrm{i}\tilde{\mathbf{k}}\cdot \tilde{\tau}},\quad \forall f\in L^{2}(\mathbb{R}^{2d}).
\end{equation*}
\end{definition}

Due to $\mathbb{R}^{2d}=\bigcup_{\tilde{\tau}\in \tilde{\mathcal{R}}}(\tilde{\Gamma}+\tilde{\tau})$, we observe that $\mathcal{U}_{\tilde{\mathcal{R}}}f(\tilde{\mathbf{r}},\tilde{\mathbf{k}})$ indeed amounts to taking the Fourier series of $f(\tilde{\mathbf{r}})$ over each shifted unit cell $\tilde{\Gamma}+\tilde{\tau}$. It is easy to see that $\mathcal{U}_{\tilde{\mathcal{R}}}f(\tilde{\mathbf{r}},\tilde{\mathbf{k}})$ adjusts the amplitude periodically involving $\tilde{\mathbf{r}}$ over $\tilde{\mathcal{R}}$ and is periodic involving $\tilde{\mathbf{k}}$ over $\tilde{\mathcal{R}}^{*}$. That is,
\begin{align*}
     \mathcal{U}_{\tilde{\mathcal{R}}}f(\tilde{\mathbf{r}}+\tilde{\tau}, \tilde{\mathbf{k}})=&e^{\mathrm{i}\tilde{\mathbf{k}}\cdot \tilde{\tau}}\mathcal{U}_{\tilde{\mathcal{R}}}f(\tilde{\mathbf{r}},\tilde{\mathbf{k}}),\quad \tilde{\tau}\in\tilde{\mathcal{R}},\\
    \mathcal{U}_{\tilde{\mathcal{R}}}f(\tilde{\mathbf{r}},\tilde{\mathbf{k}}+\tilde{\tau}^{*})=&\mathcal{U}_{\tilde{\mathcal{R}}}f(\tilde{\mathbf{r}},\tilde{\mathbf{k}}), \quad \tilde{\tau}^{*}\in\mathcal{R}^{*}.
\end{align*}
And the inverse of $\mathcal{U}_{\tilde{\mathcal{R}}}$ is given by averaging over the Brillouin zone,  
\begin{align}\label{eq:inverse}
    f(\tilde{\mathbf{r}})=\fint_{\tilde{\Gamma}^{*}}\mathcal{U}_{\tilde{\mathcal{R}}}f(\tilde{\mathbf{r}},\tilde{\mathbf{k}})d\tilde{\mathbf{k}}, \quad \tilde{\mathbf{r}}\in\mathbb{R}^{2d},
\end{align}
where 
\begin{align*}
   \fint_{\tilde{\Gamma}^{*}}:= \frac{1}{|\tilde{\Gamma}^{*}|}\int_{\tilde{\Gamma}^{*}}. 
\end{align*}

There hold the following conclusions for the Bloch-Floquet transform $\mathcal{U}_{\tilde{\mathcal{R}}}$ acting on $L^{2}(\mathbb{R}^{2d})$ (see, e.g., \cite{gontier2016convergence, kuchment2012floquet,kuchment2016overview}).
\begin{lemma}\label{lemma:isome}
The Bloch-Floquet transform $\mathcal{U}_{\tilde{\mathcal{R}}}$ acting on $L^{2}(\mathbb{R}^{2d})$ has the following properties:
    \begin{enumerate}[label=(\alph*)]
        \item  $\mathcal{U}_{\tilde{\mathcal{R}}}$ is isometry from $L^{2}(\mathbb{R}^{2d})$ onto $L^{2}(\tilde{\Gamma}^{*}, L^{2}(\tilde{\Gamma}))$;
        \item $\mathcal{U}_{\tilde{\mathcal{R}}}$ is isometry from $H^{2}(\mathbb{R}^{2d})$ onto $L^{2}(\tilde{\Gamma}^{*}, \mathcal{H})$, where
        \begin{align*}
        \mathcal{H}:=&\bigcup_{\tilde{\mathbf{k}}\in\tilde{\Gamma}^{*}}H_{\tilde{\mathbf{k}}}^{2}(\tilde{\Gamma}),\\
           H_{\tilde{\mathbf{k}}}^{2}(\tilde{\Gamma}):=&\{f|_{\tilde{\Gamma}}: f\in H_{\mathrm{loc}}^{2}(\mathbb{R}^{2d}), f(\tilde{\mathbf{r}}+\tilde{\tau})=e^{\mathrm{i}\tilde{\mathbf{k}}\cdot \tilde{\tau}}f(\tilde{\mathbf{r}})\,  a.e., \forall \tilde{\tau}\in\tilde{\mathcal{R}}\}.
        \end{align*}
    \end{enumerate}
\end{lemma}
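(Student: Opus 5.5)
The plan is to prove (a) first on a dense subspace and extend by density, and then obtain (b) from (a) by showing that the transform commutes with derivatives. Throughout, the norm on $L^{2}(\tilde{\Gamma}^{*},L^{2}(\tilde{\Gamma}))$ is the normalized one, $\fint_{\tilde{\Gamma}^{*}}\|g(\cdot,\tilde{\mathbf{k}})\|_{L^{2}(\tilde{\Gamma})}^{2}d\tilde{\mathbf{k}}$, which is the normalization compatible with the inversion formula \eqref{eq:inverse}.

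\textbf{Step 1 (Plancherel on a dense set).} Take $f\in C_c^{\infty}(\mathbb{R}^{2d})$. For each fixed $\tilde{\mathbf{r}}\in\tilde{\Gamma}$, the sum $\sum_{\tilde{\tau}}f(\tilde{\mathbf{r}}+\tilde{\tau})e^{-\mathrm{i}\tilde{\mathbf{k}}\cdot\tilde{\tau}}$ is finite. It is a trigonometric polynomial in $\tilde{\mathbf{k}}$ whose coefficients are $\{f(\tilde{\mathbf{r}}+\tilde{\tau})\}_{\tilde{\tau}\in\tilde{\mathcal{R}}}$. The exponentials $e^{-\mathrm{i}\tilde{\mathbf{k}}\cdot\tilde{\tau}}$ form an orthonormal basis of $L^{2}(\tilde{\Gamma}^{*})$ with respect to the averaged measure $\fint$, because $\tilde{\tau}\cdot\tilde{\tau}^{*}\in2\pi\mathbb{Z}$. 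Parseval therefore gives
\begin{align*}
\fint_{\tilde{\Gamma}^{*}}|\mathcal{U}_{\tilde{\mathcal{R}}}f(\tilde{\mathbf{r}},\tilde{\mathbf{k}})|^{2}d\tilde{\mathbf{k}}=\sum_{\tilde{\tau}\in\tilde{\mathcal{R}}}|f(\tilde{\mathbf{r}}+\tilde{\tau})|^{2}.
\end{align*}
Integrating over $\tilde{\mathbf{r}}\in\tilde{\Gamma}$, applying Fubini, and using the tiling $\mathbb{R}^{2d}=\bigcup_{\tilde{\tau}}(\tilde{\Gamma}+\tilde{\tau})$ yields $\|\mathcal{U}_{\tilde{\mathcal{R}}}f\|^{2}=\|f\|_{L^{2}(\mathbb{R}^{2d})}^{2}$. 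Hence $\mathcal{U}_{\tilde{\mathcal{R}}}$ extends by density to an isometry on all of $L^{2}(\mathbb{R}^{2d})$; the lattice sum then converges in $L^{2}(\tilde{\Gamma}\times\tilde{\Gamma}^{*})$.

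\textbf{Step 2 (surjectivity, the main point).} Let $g\in L^{2}(\tilde{\Gamma}^{*},L^{2}(\tilde{\Gamma}))$. Define $f$ on each cell by
\begin{align*}
f(\tilde{\mathbf{r}}+\tilde{\tau}):=\fint_{\tilde{\Gamma}^{*}}g(\tilde{\mathbf{r}},\tilde{\mathbf{k}})e^{\mathrm{i}\tilde{\mathbf{k}}\cdot\tilde{\tau}}d\tilde{\mathbf{k}},\qquad \tilde{\mathbf{r}}\in\tilde{\Gamma},\ \tilde{\tau}\in\tilde{\mathcal{R}},
\end{align*}
that is, by the Fourier coefficients of $g(\tilde{\mathbf{r}},\cdot)$. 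By Parseval again, $f\in L^{2}(\mathbb{R}^{2d})$ with $\|f\|=\|g\|$. Moreover, $\mathcal{U}_{\tilde{\mathcal{R}}}f(\tilde{\mathbf{r}},\cdot)$ is the Fourier series of $g(\tilde{\mathbf{r}},\cdot)$, which converges to $g(\tilde{\mathbf{r}},\cdot)$ in $L^{2}$ for a.e.\ $\tilde{\mathbf{r}}$. This gives surjectivity, and the same computation is the inversion formula \eqref{eq:inverse}. For this step the function $\mathcal{U}_{\tilde{\mathcal{R}}}f(\cdot,\tilde{\mathbf{k}})$ is identified with its restriction to $\tilde{\Gamma}$; it has the quasi-periodicity recorded before the lemma.

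\textbf{Step 3 (the $H^{2}$ statement).} For $f\in C_c^{\infty}$, the transform commutes with $\partial^{\alpha}$, since the sum is finite. Define $e_{\tilde{\mathbf{k}}}(\tilde{\mathbf{r}}):=e^{\mathrm{i}\tilde{\mathbf{k}}\cdot\tilde{\mathbf{r}}}$. Then $e_{-\tilde{\mathbf{k}}}\mathcal{U}_{\tilde{\mathcal{R}}}f(\cdot,\tilde{\mathbf{k}})$ is $\tilde{\mathcal{R}}$-periodic, and its $H^{2}$ norm on the torus is equivalent, uniformly in $\tilde{\mathbf{k}}$ over the bounded set $\tilde{\Gamma}^{*}$, to the $H_{\tilde{\mathbf{k}}}^{2}(\tilde{\Gamma})$ norm of $\mathcal{U}_{\tilde{\mathcal{R}}}f(\cdot,\tilde{\mathbf{k}})$. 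Applying Step 1 to each $\partial^{\alpha}f$ with $|\alpha|\le2$ gives
\begin{align*}
\fint_{\tilde{\Gamma}^{*}}\sum_{|\alpha|\le2}\|\partial^{\alpha}\mathcal{U}_{\tilde{\mathcal{R}}}f(\cdot,\tilde{\mathbf{k}})\|_{L^{2}(\tilde{\Gamma})}^{2}d\tilde{\mathbf{k}}=\|f\|_{H^{2}(\mathbb{R}^{2d})}^{2},
\end{align*}
and density extends this identity to all of $H^{2}(\mathbb{R}^{2d})$.

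\textbf{Main obstacle: surjectivity onto the $H^{2}$ fibre space.} Given $g$ with $g(\cdot,\tilde{\mathbf{k}})\in H_{\tilde{\mathbf{k}}}^{2}$ and finite integrated norm, take $f$ from Step 2. One must show that $f\in H^{2}_{\mathrm{loc}}$ across cell boundaries, not merely cellwise. To do this, extend $g(\cdot,\tilde{\mathbf{k}})$ quasi-periodically to $\mathbb{R}^{2d}$; the quasi-periodicity of the extension is exactly what makes the glued function smooth across cell boundaries. The inversion formula $f(\tilde{\mathbf{r}})=\fint_{\tilde{\Gamma}^{*}}g(\tilde{\mathbf{r}},\tilde{\mathbf{k}})d\tilde{\mathbf{k}}$ then holds on all of $\mathbb{R}^{2d}$. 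Differentiation in the distributional sense passes under the $\tilde{\mathbf{k}}$-integral: test against $\varphi\in C_c^{\infty}$ and apply Fubini, justified by Cauchy--Schwarz on the compact support of $\varphi$. This gives $\partial^{\alpha}f=\fint_{\tilde{\Gamma}^{*}}\partial^{\alpha}g\,d\tilde{\mathbf{k}}$. Each $\partial^{\alpha}g$ lies in $L^{2}(\tilde{\Gamma}^{*},L^{2}(\tilde{\Gamma}))$, so Step 2 shows $\partial^{\alpha}f\in L^{2}(\mathbb{R}^{2d})$. Hence $f\in H^{2}(\mathbb{R}^{2d})$ and $\mathcal{U}_{\tilde{\mathcal{R}}}f=g$.
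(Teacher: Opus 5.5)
Your proof is correct and is essentially the standard argument. The paper does not prove this lemma itself; it cites it as known \cite{gontier2016convergence, kuchment2016overview}, where one likewise applies Parseval to the Fourier series in $\tilde{\mathbf{k}}$ fibrewise, commutes $\partial^{\alpha}$ with the transform, and recovers $H^{2}$-surjectivity by gluing the quasi-periodic extensions through the inversion formula. Your averaged measure $\fint_{\tilde{\Gamma}^{*}}$ is the normalization consistent with \eqref{eq:inverse}; by contrast, the paper's later computation \eqref{eq:bb} uses the unnormalized $\int_{\tilde{\Gamma}^{*}}$, which is off by the harmless factor $|\tilde{\Gamma}^{*}|$.
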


Combining Lemma \ref{lemma:isome} and the concept of direct integral decompositions (see, e.g., \cite{corwin1990representations, nielsen2020direct}), we have the following conclusion.
\begin{lemma}\label{lem:direct_integral}
For the operator $\tilde{\mathscr{H}}$ in $L^{2}(\mathbb{R}^{2d})$ on the domain $H^{2}(\mathbb{R}^{2d})$, there hold the following direct integral decompositions 
\begin{align*}
         L^{2}(\mathbb{R}^{2d})=\fint_{\tilde{\Gamma}^{*}}^{\oplus}L^{2}(\tilde{\Gamma}),\quad
         H^{2}(\mathbb{R}^{2d})=\fint_{\tilde{\Gamma}^{*}}^{\oplus}H_{ \tilde{\mathbf{k}}}^{2}(\tilde{\Gamma}),\quad  \tilde{\mathscr{H}}=\fint_{\tilde{\Gamma}^{*}}^{\oplus}\tilde{\mathscr{H}}|_{H_{ \tilde{\mathbf{k}}}^{2}(\tilde{\Gamma})}.
    \end{align*}
\end{lemma}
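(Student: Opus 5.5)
The plan is to deduce the three decompositions of Lemma \ref{lem:direct_integral} from Lemma \ref{lemma:isome}, and then check that $\tilde{\mathscr{H}}$ commutes with the Bloch--Floquet transform fiberwise.

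The first two identities are essentially restatements of Lemma \ref{lemma:isome}. By (a), $\mathcal{U}_{\tilde{\mathcal{R}}}$ is a unitary map from $L^{2}(\mathbb{R}^{2d})$ onto $L^{2}(\tilde{\Gamma}^{*},L^{2}(\tilde{\Gamma}))$, with the normalized measure $\fint_{\tilde{\Gamma}^{*}}$ and inverse \eqref{eq:inverse}. This is exactly the constant-fiber direct integral $\fint^{\oplus}_{\tilde{\Gamma}^{*}}L^{2}(\tilde{\Gamma})$. For $H^{2}$, I use (b). The fibers $H^{2}_{\tilde{\mathbf{k}}}(\tilde{\Gamma})$ depend on $\tilde{\mathbf{k}}$, so I first check that they form a measurable field of Hilbert spaces. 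Conjugating by the multiplication $e^{-\mathrm{i}\tilde{\mathbf{k}}\cdot\tilde{\mathbf{r}}}$ maps $H^{2}_{\tilde{\mathbf{k}}}(\tilde{\Gamma})$ isomorphically onto the fixed space $H^{2}_{\mathrm{per}}(\tilde{\Gamma})$, and this map depends smoothly on $\tilde{\mathbf{k}}$. With this identification, (b) gives $H^{2}(\mathbb{R}^{2d})=\fint^{\oplus}_{\tilde{\Gamma}^{*}}H^{2}_{\tilde{\mathbf{k}}}(\tilde{\Gamma})$.

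For the operator identity, I would verify $\mathcal{U}_{\tilde{\mathcal{R}}}\tilde{\mathscr{H}}f(\cdot,\tilde{\mathbf{k}})=\tilde{\mathscr{H}}\,\mathcal{U}_{\tilde{\mathcal{R}}}f(\cdot,\tilde{\mathbf{k}})$ for $f\in H^{2}(\mathbb{R}^{2d})$. I first take $f\in C_{c}^{\infty}(\mathbb{R}^{2d})$, where the lattice sum is finite on compacts. Differentiation commutes with the sum, so $\tilde{D}$ passes through. The potential $V_{1}(\mathbf{r})+V_{2}(\mathbf{r}')$ is $\tilde{\mathcal{R}}$-periodic, so $V(\tilde{\mathbf{r}}+\tilde{\tau})f(\tilde{\mathbf{r}}+\tilde{\tau})=V(\tilde{\mathbf{r}})f(\tilde{\mathbf{r}}+\tilde{\tau})$ and the potential factors out of the sum. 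I then extend to all of $H^{2}(\mathbb{R}^{2d})$ by density. $\tilde{\mathscr{H}}$ is bounded from $H^{2}$ to $L^{2}$, because $\tilde{D}$ is a second-order constant-coefficient operator and $V$ is bounded and continuous on the compact cell. By (a) and (b), $\mathcal{U}_{\tilde{\mathcal{R}}}$ is continuous from $H^{2}$ to $L^{2}(\tilde{\Gamma}^{*},\mathcal{H})$ and from $L^{2}$ to $L^{2}(\tilde{\Gamma}^{*},L^{2}(\tilde{\Gamma}))$. Hence both sides converge along approximating sequences, and the identity holds for a.e. $\tilde{\mathbf{k}}$. So $\mathcal{U}_{\tilde{\mathcal{R}}}\tilde{\mathscr{H}}\mathcal{U}_{\tilde{\mathcal{R}}}^{-1}$ acts as multiplication by the fiber family $\tilde{\mathscr{H}}|_{H^{2}_{\tilde{\mathbf{k}}}(\tilde{\Gamma})}$. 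Each fiber operator maps $H^{2}_{\tilde{\mathbf{k}}}(\tilde{\Gamma})$ into $L^{2}(\tilde{\Gamma})$, since $\tilde{D}$ preserves the quasi-periodicity condition and $V$ is periodic.

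The main obstacle is the measurability of the operator field $\tilde{\mathbf{k}}\mapsto\tilde{\mathscr{H}}|_{H^{2}_{\tilde{\mathbf{k}}}}$, which is needed for the direct integral to be meaningful. After the gauge transform above, the fiber becomes
\[
-\tfrac12\sum_{i=1}^{d}\bigl(\partial_{\mathbf{r}_i}+\partial_{\mathbf{r}_i'}+\mathrm{i}(\mathbf{k}_i+\mathbf{k}'_i)\bigr)^{2}+V
\]
on the fixed domain $H^{2}_{\mathrm{per}}(\tilde{\Gamma})$. This operator depends polynomially on $\tilde{\mathbf{k}}$ as a bounded map $H^{2}_{\mathrm{per}}\to L^{2}$, which gives measurability immediately.

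I will state the operator identity only as a unitary equivalence of operators defined on $H^{2}$. I will not claim self-adjointness of the fibers, since $\tilde{\mathscr{H}}$ is degenerate and not self-adjoint on $H^{2}$. This is consistent with the paper's later self-adjoint extension in Section 3.
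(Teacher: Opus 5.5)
Your proposal is correct and takes essentially the same route as the paper. The paper obtains all three decompositions by combining Lemma \ref{lemma:isome} with the general notion of a direct integral; the fiberwise action of $\tilde{\mathscr{H}}$ comes from the same lattice-sum commutation (using $\tilde{\mathcal{R}}$-periodicity of $V_{1}(\mathbf{r})+V_{2}(\mathbf{r}')$) that is written out later in the proof of Proposition \ref{prop:b_int}. Your additions make rigorous what the paper leaves implicit: the $C_{c}^{\infty}$ density argument, the gauge-transform measurability check, and reading the operator identity on the domain $\fint^{\oplus}_{\tilde{\Gamma}^{*}}H^{2}_{\tilde{\mathbf{k}}}(\tilde{\Gamma})$. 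That choice of domain is the right one, since the maximal fiberwise domain $\{s\in\fint^{\oplus}_{\tilde{\Gamma}^{*}}L^{2}(\tilde{\Gamma}):\ s(\tilde{\mathbf{k}})\in H^{2}_{\tilde{\mathbf{k}}}(\tilde{\Gamma})\ \text{a.e.},\ \tilde{\mathscr{H}}s\in\fint^{\oplus}_{\tilde{\Gamma}^{*}}L^{2}(\tilde{\Gamma})\}$ is strictly larger because of the degeneracy along $\xi+\xi'=0$.
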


For $\tilde{\mathbf{k}}\in\tilde{\Gamma}^{*}$ and $f\in H^{2}(\mathbb{R}^{2d})$, we obtain from Lemma \ref{lem:direct_integral} that 
\begin{equation*}
    \tilde{\mathscr{H}}(\mathcal{U}_{\tilde{\mathcal{R}}}f)(\cdot, \tilde{\mathbf{k}})=\tilde{\mathscr{H}}|_{H_{ \tilde{\mathbf{k}}}^{2}(\tilde{\Gamma})}(\mathcal{U}_{\tilde{\mathcal{R}}}f(\cdot,\tilde{\mathbf{k}})),
\end{equation*}
 which implies that $\tilde{\mathscr{H}}$ is block diagonal involving $\tilde{\mathbf{k}}\in\tilde{\Gamma}^{*}$ in the view of the Bloch-Floquet transform. Consequently, we focus on the property of  $\tilde{\mathscr{H}}$ restricted on each domain $H_{\tilde{\mathbf{k}}}^{2}(\tilde{\Gamma})$ for $\tilde{\mathbf{k}}\in\tilde{\Gamma}^{*}$.
 
 A direct calculation shows that
\begin{align}\label{eq:discr}
	H_{\tilde{\mathbf{k}}}^{2}(\tilde{\Gamma})=e^{\mathrm{i}\tilde{\mathbf{k}}\cdot\tilde{\mathbf{r}}}\left(H^{2}(\mathbb{T}^{2d})\right),\quad\forall \tilde{\mathbf{k}}\in\tilde{\Gamma}^{*},
\end{align}
where the torus $\mathbb{T}^{2d}:=\mathbb{R}^{2d}/\tilde{\mathcal{R}}$ and
\begin{align*}
 L^{2}(\mathbb{T}^{2d}):=&\{v|_{\tilde{\Gamma}}: v\in L_{\mathrm{loc}}^{2}(\mathbb{R}^{2d}),   v(\tilde{\mathbf{r}}+\tilde{\tau})=v(\tilde{\mathbf{r}})\,  a.e., \forall \tilde{\tau}\in\tilde{\mathcal{R}}\},\\
    H^{2}(\mathbb{T}^{2d}):=&\{v|_{\tilde{\Gamma}}: v\in H^{2}(\tilde{\Gamma}),  v(\tilde{\mathbf{r}}+\tilde{\tau})=v(\tilde{\mathbf{r}})\,  a.e., \forall \tilde{\tau}\in\tilde{\mathcal{R}}\}.
\end{align*}
We see that
\begin{align*}
    \tilde{\mathscr{H}}(\tilde{\mathbf{k}}):=e^{-\mathrm{i}\tilde{\mathbf{k}}\cdot\tilde{\mathbf{r}}}\circ\tilde{\mathscr{H}}\circ e^{\mathrm{i}\tilde{\mathbf{k}}\cdot\tilde{\mathbf{r}}}=\tilde{\mathscr{H}}-\mathrm{i}(\mathbf{k}+\mathbf{k}')\cdot(\nabla_{\mathbf{r}}+\nabla_{\mathbf{r}'})+\frac{1}{2}|\mathbf{k}+\mathbf{k}'|^{2}
\end{align*}
on the domain $H^{2}(\mathbb{T}^{2d})$, where $\mathbf{k}\in\Gamma_{1}^{*}, \mathbf{k}'\in\Gamma_{2}^{*}$ and $\tilde{\mathbf{k}}=(\mathbf{k}, \mathbf{k}')\in\tilde{\Gamma}^{*}$. Therefore, we have two equivalent representations for the block diagonal part of $\tilde{\mathscr{H}}$: one is a fixed operator $\tilde{\mathscr{H}}$ acting on a family of domains $\{H_{ \tilde{\mathbf{k}}}^{2}(\tilde{\Gamma})\}_{\tilde{\mathbf{k}}\in\tilde{\Gamma}^{*}}$; the other is a family of differential operators $\{\tilde{\mathscr{H}}(\tilde{\mathbf{k}})\}_{\tilde{\mathbf{k}}\in \tilde{\Gamma}^{*}}$ with one same domain $H^{2}(\mathbb{T}^{2d})$. Both of the representations will be used in the subsequent analysis.
 
\section{The extended Schrödinger operator $\tilde{\mathscr{H}}$}
To study the spectrum of the quasi-periodic Schrödinger operator $\mathscr{H}$, we first investigate the spectrum of the extended Schrödinger operator $\tilde{\mathscr{H}}$ embedded in higher dimensions. In Section 5, we will show the relationship between the spectrum of $\mathscr{H}$ and the spectrum of $\tilde{\mathscr{H}}$ after a self-adjoint extension.

\subsection{Self-adjoint extension}
We observe that $\tilde{\mathscr{H}}$ is not elliptic and its principal symbol
\begin{align*}
    \frac{1}{2}\sum_{i=1}^{d}(\xi_{i}+\xi_{i}')^{2}
\end{align*}
vanishes along the direction $\xi+\xi'=0$, where $\tilde{\xi}:=(\xi, \xi')$ and $\xi=(\xi_{1},\ldots,\xi_{d}), \xi'=(\xi_{1}',\ldots,\xi_{d}')\in\mathbb{R}^{d}$. That is, the operator $\tilde{\mathscr{H}}$ is degenerate along $\xi+\xi'=0$ in the frequency domain. Then there does not hold the Gårding's inequality and the `graph norm' $\Vert\tilde{\mathscr{H}}\cdot\Vert_{L^{2}}+\Vert\cdot\Vert_{L^{2}}$ is not equivalent to the norm $\Vert\cdot\Vert_{H^{2}}$. Therefore,  $\tilde{\mathscr{H}}$ is not even a closed operator on the domain $H^{2}(\mathbb{R}^{2d})$ and the spectrum $\sigma(\tilde{\mathscr{H}}|_{H^{2}(\mathbb{R}^{2d})})=\mathbb{C}$ (see Proposition 1.5.6 in \cite{de2008intermediate}). Hence, $\tilde{\mathscr{H}}$ on the domain $H^{2}(\mathbb{R}^{2d})$ is not self-adjoint.
Fortunately, we have
\begin{proposition}\label{prop:ess_a}
The extended Schrödinger operator $\tilde{\mathscr{H}}: L^{2}(\mathbb{R}^{2d})\rightarrow L^{2}(\mathbb{R}^{2d})$ on the domain $H^{2}(\mathbb{R}^{2d})$ is essentially self-adjoint.
\end{proposition}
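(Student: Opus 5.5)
The plan is to split $\tilde{\mathscr{H}}=-\frac12\tilde D+\tilde V$ with $\tilde V(\mathbf{r},\mathbf{r}'):=V_1(\mathbf{r})+V_2(\mathbf{r}')$, and to treat the two pieces separately. The potential is harmless: since $V_j\in C^0(\Gamma_j)$ is periodic, it is continuous and bounded on $\mathbb{R}^d$. Hence $\tilde V$ is a bounded real multiplication operator on $L^2(\mathbb{R}^{2d})$, and so it is bounded and self-adjoint. By the Kato--Rellich theorem, or more simply by the elementary fact that adding a bounded symmetric operator to an essentially self-adjoint operator preserves essential self-adjointness on the same domain, it suffices to prove that $-\frac12\tilde D$ is essentially self-adjoint on $H^2(\mathbb{R}^{2d})$. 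In that case the closure of $\tilde{\mathscr{H}}$ equals $\overline{-\frac12\tilde D}+\tilde V$ on the domain of $\overline{-\frac12\tilde D}$.

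For the free part I work in Fourier variables. Let $\mathcal F$ be the unitary Fourier transform on $L^2(\mathbb{R}^{2d})$. Then $-\frac12\tilde D=\mathcal F^{-1}M_p\mathcal F$, where $M_p$ is multiplication by the real, nonnegative, continuous symbol $p(\xi,\xi'):=\frac12|\xi+\xi'|^2$. The operator $A$ of multiplication by $p$ on its maximal domain $D_{\max}:=\{g\in L^2: pg\in L^2\}$ is self-adjoint, which is standard for real measurable multipliers. The restriction to $H^2(\mathbb{R}^{2d})$ corresponds to the subspace $\mathcal F H^2=\{g:(1+|\tilde\xi|^2)g\in L^2\}\subset D_{\max}$. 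The operator $-\frac12\tilde D|_{H^2}$ is therefore symmetric. To see that its closure is $A$, I show that $\mathcal F H^2$ is a core for $M_p$. Given $g\in D_{\max}$, set $g_n:=\chi_{\{|\tilde\xi|\le n\}}g$. Then $g_n\in\mathcal F H^2$, and by dominated convergence $g_n\to g$ and $pg_n\to pg$ in $L^2$. So the graph closure of the restriction contains the graph of $A$. Since $A$ is closed and extends the restriction, the closure equals $A$, which is self-adjoint. Equivalently, one can verify $\ker(T^*\pm \mathrm{i})=\{0\}$: any $g$ with $(p\mp\mathrm{i})g=0$ a.e. vanishes because $p$ is real.

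Combining the two steps, the closure of $\tilde{\mathscr{H}}$ on $H^2(\mathbb{R}^{2d})$ is $\mathcal F^{-1}M_p\mathcal F+\tilde V$ on $\mathcal F^{-1}D_{\max}$, which is self-adjoint. I expect no real obstacle here. The only point needing care is conceptual: the degeneracy of $p$ along $\xi+\xi'=0$ means that $D_{\max}$ is strictly larger than $\mathcal F H^2$, so $\tilde{\mathscr{H}}$ on $H^2$ is not closed. The core argument by frequency truncation handles this, because it needs no ellipticity, only that $p$ is finite and real. I would also remark that the self-adjoint closure has domain $\{u\in L^2:\tilde D u\in L^2\}$, which is the natural extension used later.
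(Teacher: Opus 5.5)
Your proof is correct and follows essentially the same route as the paper. The paper also splits off the bounded real potential $V_1(\mathbf{r})+V_2(\mathbf{r}')$, shows via the Fourier transform that the free part is essentially self-adjoint on $H^2(\mathbb{R}^{2d})$ (using exactly the deficiency argument $\ker(\tilde D^*\pm\mathrm{i}I)=\{0\}$ you mention as an alternative), and concludes by Kato--Rellich. Your frequency-truncation core argument additionally identifies the closure explicitly, with domain $\{u\in L^2(\mathbb{R}^{2d}):\tilde D u\in L^2(\mathbb{R}^{2d})\}$, which is the space $\mathcal{G}$ the paper uses afterwards.
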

\begin{proof}
It is clear that $V_{1}(\mathbf{r})+V_{2}(\mathbf{r}')\in  L^{\infty}(\tilde{\Gamma})$ and as an operator
\begin{align*}
    L^{2}(\mathbb{R}^{2d})\rightarrow L^{2}(\mathbb{R}^{2d})
\end{align*}
on the domain $H^{2}(\mathbb{R}^{2d})$ is self-adjoint. Next, we consider the symmetric operator $\tilde{D}=\sum_{i=1}^{d}(\partial_{\mathbf{r}_{i}}+\partial_{\mathbf{r}_{i}'})^{2}$. For $u\in \operatorname{ker}(\tilde{D}^{*}\pm\mathrm{i}I)$, there holds that
\begin{align*}
    \mp\mathrm{i} \mathcal{F}_{\mathbb{R}^{2d}}u(\tilde{\xi})= \mathcal{F}_{\mathbb{R}^{2d}}(\tilde{D}^{*}u)(\tilde{\xi})=\sum_{i=1}^{d}(\xi_{i}+\xi_{i}')^{2} \mathcal{F}_{\mathbb{R}^{2d}}u(\tilde{\xi}),
\end{align*}
where $\mathcal{F}_{\mathbb{R}^{2d}}$ denotes the Fourier transform on $L^{2}(\mathbb{R}^{2d})$ and $I$ is the identity operator. Note that $\sum_{i=1}^{d}(\xi_{i}+\xi_{i}')^{2}\in\mathbb{R}$, then $\mathcal{F}_{\mathbb{R}^{2d}}u=0$ and $\operatorname{ker}(\tilde{D}^{*}\pm\mathrm{i}I)=\{0\}$, which implies that $\tilde{D}$ is essentially self-adjoint (see \cite{reed1980methods}). Therefore, by the Kato–Rellich theorem (see, e.g., \cite{kato2013perturbation, reed1980methods}), the symmetric operator 
\begin{align*}
    \tilde{\mathscr{H}}=-\frac{1}{2}\tilde{D}+V_{1}(\mathbf{r})+V_{2}(\mathbf{r}')
\end{align*}
on the domain $H^{2}(\mathbb{R}^{2d})$ is essentially self-adjoint.
\end{proof}

The essential self-adjointness of $\tilde{\mathscr{H}}$ on the domain $H^{2}(\mathbb{R}^{2d})$ implies that $\tilde{\mathscr{H}}$ has one and only one self-adjoint extension, i.e., the closure of $\tilde{\mathscr{H}}$. For simplicity, we still denote this closure by $\tilde{\mathscr{H}}: L^{2}(\mathbb{R}^{2d})\rightarrow L^{2}(\mathbb{R}^{2d})$ with the domain $\mathcal{G}$, where $H^{2}(\mathbb{R}^{2d})\subset\mathcal{G}\subset L^{2}(\mathbb{R}^{2d})$ is defined by
\begin{align*}
    \mathcal{G}:=\left\{u\in L^{2}(\mathbb{R}^{2d}): \Vert u\Vert^{2}_{L^{2}(\mathbb{R}^{2d})}+\Vert\tilde{\mathscr{H}} u\Vert^{2}_{L^{2}(\mathbb{R}^{2d})}<\infty\right\},
\end{align*}
and is equipped with the graph norm 
\begin{align*}
    \Vert\cdot\Vert_{\mathcal{G}}:=\left(\Vert\cdot\Vert^{2}_{L^{2}(\mathbb{R}^{2d})}+\Vert\tilde{\mathscr{H}}\cdot\Vert^{2}_{L^{2}(\mathbb{R}^{2d})}\right)^{1/2}.
\end{align*}

\subsection{Bloch-Floquet transform on $\mathcal{G}$}
The following conclusion about the operator $\tilde{\mathscr{H}}|_{H_{ \tilde{\mathbf{k}}}^{2}(\tilde{\Gamma})}$ follows from the similar discussions in Proposition \ref{prop:ess_a}. The only difference is that we need to make a slight adjustment to the quasi-periodic boundary conditions.
\begin{lemma}
For $\tilde{\mathbf{k}}\in\tilde{\Gamma}^{*}$, the operator $\tilde{\mathscr{H}}: L_{ \tilde{\mathbf{k}}}^{2}(\tilde{\Gamma})\rightarrow L_{ \tilde{\mathbf{k}}}^{2}(\tilde{\Gamma})$ on the domain $H_{\tilde{\mathbf{k}}}^{2}(\tilde{\Gamma})$ is essential self-adjoint, where 
\begin{align*}
    L_{ \tilde{\mathbf{k}}}^{2}(\tilde{\Gamma}):=\{f|_{\tilde{\Gamma}}: f\in L_{\mathrm{loc}}^{2}(\mathbb{R}^{2d}), f(\tilde{\mathbf{r}}+\tilde{\tau})=e^{\mathrm{i}\tilde{\mathbf{k}}\cdot \tilde{\tau}}f(\tilde{\mathbf{r}})\,  a.e., \forall \tilde{\tau}\in\tilde{\mathcal{R}}\}.
\end{align*}
\end{lemma}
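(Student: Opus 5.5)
We follow the proof of Proposition \ref{prop:ess_a}, replacing the Fourier transform on $\mathbb{R}^{2d}$ by a Fourier series adapted to the quasi-periodic boundary conditions. The plan is first to split off the potential, then show that the degenerate kinetic part is essentially self-adjoint on $H_{\tilde{\mathbf{k}}}^{2}(\tilde{\Gamma})$ by the deficiency-index criterion, and finally add the potential back with Kato--Rellich.

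\emph{The potential.} Since $V_{1}(\mathbf{r})+V_{2}(\mathbf{r}')$ is real, continuous and $\tilde{\mathcal{R}}$-periodic, multiplication by it preserves the quasi-periodicity condition. It is therefore a bounded self-adjoint operator on $L_{\tilde{\mathbf{k}}}^{2}(\tilde{\Gamma})$, with bound $\Vert V_{1}\Vert_{L^{\infty}}+\Vert V_{2}\Vert_{L^{\infty}}$. A bounded symmetric perturbation has relative bound $0$ with respect to any operator, so by Kato--Rellich it suffices to show that $\tilde{D}$ on $H_{\tilde{\mathbf{k}}}^{2}(\tilde{\Gamma})$ is essentially self-adjoint in $L_{\tilde{\mathbf{k}}}^{2}(\tilde{\Gamma})$.

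\emph{Diagonalizing $\tilde{D}$.} By \eqref{eq:discr}, every $f\in L_{\tilde{\mathbf{k}}}^{2}(\tilde{\Gamma})$ can be written as $f=e^{\mathrm{i}\tilde{\mathbf{k}}\cdot\tilde{\mathbf{r}}}v$ with $v\in L^{2}(\mathbb{T}^{2d})$. Hence the family
\begin{align*}
e_{\tilde{G}}(\tilde{\mathbf{r}}):=|\tilde{\Gamma}|^{-1/2}e^{\mathrm{i}(\tilde{\mathbf{k}}+\tilde{G})\cdot\tilde{\mathbf{r}}},\qquad \tilde{G}=(G,G')\in\tilde{\mathcal{R}}^{*},
\end{align*}
is an orthonormal basis of $L_{\tilde{\mathbf{k}}}^{2}(\tilde{\Gamma})$. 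Each $e_{\tilde{G}}$ lies in $H_{\tilde{\mathbf{k}}}^{2}(\tilde{\Gamma})$ and satisfies
\begin{align*}
-\tilde{D}e_{\tilde{G}}=|\mathbf{k}+G+\mathbf{k}'+G'|^{2}e_{\tilde{G}}.
\end{align*}
So $\tilde{D}$ acts diagonally, with real eigenvalues, in this basis. The eigenvalues are not bounded away from $0$ and have infinite multiplicity, but neither fact matters for the argument.

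\emph{Deficiency indices.} Let $u\in\ker(\tilde{D}^{*}\pm\mathrm{i}I)$. Since $e_{\tilde{G}}$ belongs to the domain $H_{\tilde{\mathbf{k}}}^{2}(\tilde{\Gamma})$ and $\tilde{D}$ is symmetric there,
\begin{align*}
\mp\mathrm{i}\inner{u,e_{\tilde{G}}}=\inner{\tilde{D}^{*}u,e_{\tilde{G}}}=\inner{u,\tilde{D}e_{\tilde{G}}}=-|\mathbf{k}+G+\mathbf{k}'+G'|^{2}\inner{u,e_{\tilde{G}}}.
\end{align*}
The left-hand coefficient is purely imaginary and nonzero while the right-hand one is real, so every Fourier coefficient $\inner{u,e_{\tilde{G}}}$ vanishes. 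Hence $u=0$, both deficiency indices are zero, and $\tilde{D}$ is essentially self-adjoint. Kato--Rellich then gives essential self-adjointness of $\tilde{\mathscr{H}}$ on $H_{\tilde{\mathbf{k}}}^{2}(\tilde{\Gamma})$.

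\emph{The main obstacle.} The delicate point is the symmetry of $\tilde{D}$ on $H_{\tilde{\mathbf{k}}}^{2}(\tilde{\Gamma})$, i.e.\ that the integration by parts over $\tilde{\Gamma}$ produces no boundary terms. For $f,g\in H_{\tilde{\mathbf{k}}}^{2}(\tilde{\Gamma})$, the product $f\bar{g}$ and the products of their first derivatives are $\tilde{\mathcal{R}}$-periodic, because the factors $e^{\mathrm{i}\tilde{\mathbf{k}}\cdot\tilde{\tau}}$ cancel. The contributions on opposite faces of $\tilde{\Gamma}$ therefore cancel. Alternatively, one can conjugate by $e^{\mathrm{i}\tilde{\mathbf{k}}\cdot\tilde{\mathbf{r}}}$ and work with $\tilde{\mathscr{H}}(\tilde{\mathbf{k}})$ on $H^{2}(\mathbb{T}^{2d})$, where symmetry is immediate from periodicity. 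This is the ``slight adjustment to the quasi-periodic boundary conditions'' the lemma refers to.
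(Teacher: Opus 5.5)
Your proposal is correct and follows the paper's route. The paper proves this lemma only by pointing to the argument of Proposition~\ref{prop:ess_a} (bounded potential, vanishing deficiency indices of $\tilde{D}$ via Fourier analysis, then Kato--Rellich) ``with a slight adjustment to the quasi-periodic boundary conditions''; your Fourier-series basis $e_{\tilde{G}}$ and the explicit symmetry check on $H_{\tilde{\mathbf{k}}}^{2}(\tilde{\Gamma})$ make that adjustment precise. One side remark is wrong but harmless: the eigenvalues of $\tilde{D}$ need not have infinite multiplicity (for instance, $d=1$ with $a_{1}/a_{2}$ irrational gives multiplicity at most $2$), and as you say this plays no role in the argument.
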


Therefore, by the essential self-adjointness, the closure of $\tilde{\mathscr{H}}|_{H_{ \tilde{\mathbf{k}}}^{2}(\tilde{\Gamma})}$ is self-adjoint, which is denoted by $\tilde{\mathscr{H}}: L_{ \tilde{\mathbf{k}}}^{2}(\tilde{\Gamma})\rightarrow L_{ \tilde{\mathbf{k}}}^{2}(\tilde{\Gamma})$ with the domain $\{\mathcal{G}_{\tilde{\mathbf{k}}}(\tilde{\Gamma})\}_{\tilde{\mathbf{k}}\in\tilde{\Gamma}^{*}}$ for simplicity. For $\tilde{\mathbf{k}}\in\tilde{\Gamma}^{*}$, the domain $H_{\tilde{\mathbf{k}}}^{2}(\tilde{\Gamma})\subset\mathcal{G}_{\tilde{\mathbf{k}}}(\tilde{\Gamma})\subset  L_{ \tilde{\mathbf{k}}}^{2}(\tilde{\Gamma})$ is defined by
\begin{align*}
    \mathcal{G}_{\tilde{\mathbf{k}}}(\tilde{\Gamma}):=\left\{u\in  L_{ \tilde{\mathbf{k}}}^{2}(\tilde{\Gamma}): \Vert u\Vert^{2}_{L^{2}(\tilde{\Gamma})}+\Vert\tilde{\mathscr{H}} u\Vert^{2}_{L^{2}(\tilde{\Gamma})}<\infty\right\}
\end{align*}
and is equipped with the graph norm 
\begin{align*}
    \Vert\cdot\Vert_{\mathcal{G}(\tilde{\Gamma})}:=\left(\Vert\cdot\Vert^{2}_{L^{2}(\tilde{\Gamma})}+\Vert\tilde{\mathscr{H}}\cdot\Vert^{2}_{L^{2}(\tilde{\Gamma})}\right)^{1/2}.
\end{align*}

Recalling the Bloch-Floquet transform and the direct integral decomposition of $\tilde{\mathscr{H}}$  on the domain $H^{2}(\mathbb{R}^{2d})$ addressed in Lemmas \ref{lemma:isome} and \ref{lem:direct_integral}, we have following proposition.
\begin{proposition}\label{prop:b_int}
If the Bloch-Floquet transform $\mathcal{U}_{\tilde{\mathcal{R}}}$ is defined by Definition \ref{def:b_f_trans}, then $\mathcal{U}_{\tilde{\mathcal{R}}}$ is isometry from $\mathcal{G}$ onto $L^{2}(\tilde{\Gamma}^{*}, \bigcup_{\tilde{\mathbf{k}}\in\tilde{\Gamma}^{*}}\{\mathcal{G}_{\tilde{\mathbf{k}}}(\tilde{\Gamma})\}_{\tilde{\mathbf{k}}\in\tilde{\Gamma}^{*}})$. Therefore, 
\begin{align*}
    \mathcal{G}=\fint_{\tilde{\Gamma}^{*}}^{\oplus}\mathcal{G}_{\tilde{\mathbf{k}}}(\tilde{\Gamma}),\quad  \tilde{\mathscr{H}}|_{\mathcal{G}}=\fint_{\tilde{\Gamma}^{*}}^{\oplus}\tilde{\mathscr{H}}|_{\mathcal{G}_{\tilde{\mathbf{k}}}(\tilde{\Gamma})}.
\end{align*}
\end{proposition}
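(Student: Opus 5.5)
We will transfer the closure procedure through the Bloch--Floquet transform, relying on the fact that $\mathcal{U}_{\tilde{\mathcal{R}}}$ is unitary on $L^{2}$ and intertwines $\tilde{\mathscr{H}}$ on $H^{2}$ fibrewise. By Lemma \ref{lemma:isome}(a), $\mathcal{U}_{\tilde{\mathcal{R}}}$ is a unitary map $L^{2}(\mathbb{R}^{2d})\to L^{2}(\tilde{\Gamma}^{*},L^{2}(\tilde{\Gamma}))$ (with the normalized measure $\fint$). By Lemma \ref{lem:direct_integral}, for $f\in H^{2}(\mathbb{R}^{2d})$ we have
\begin{align*}
(\mathcal{U}_{\tilde{\mathcal{R}}}\tilde{\mathscr{H}}f)(\cdot,\tilde{\mathbf{k}})=\tilde{\mathscr{H}}|_{H^{2}_{\tilde{\mathbf{k}}}(\tilde{\Gamma})}\,\mathcal{U}_{\tilde{\mathcal{R}}}f(\cdot,\tilde{\mathbf{k}}) \quad \text{for a.e. } \tilde{\mathbf{k}}.
\end{align*}
Squaring and integrating over $\tilde{\Gamma}^{*}$ then gives
\begin{align*}
\Vert f\Vert_{\mathcal{G}}^{2}=\fint_{\tilde{\Gamma}^{*}}\Vert \mathcal{U}_{\tilde{\mathcal{R}}}f(\cdot,\tilde{\mathbf{k}})\Vert^{2}_{\mathcal{G}(\tilde{\Gamma})}\,d\tilde{\mathbf{k}}.
\end{align*}
So $\mathcal{U}_{\tilde{\mathcal{R}}}$ is an isometry from $(H^{2}(\mathbb{R}^{2d}),\Vert\cdot\Vert_{\mathcal{G}})$ into $L^{2}(\tilde{\Gamma}^{*},\{\mathcal{G}_{\tilde{\mathbf{k}}}(\tilde{\Gamma})\})$.

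Next, we pass to closures. $\mathcal{G}$ is the completion of $H^{2}(\mathbb{R}^{2d})$ in the graph norm, since it is the domain of the closure; by the preceding lemma, the same holds for each $\mathcal{G}_{\tilde{\mathbf{k}}}(\tilde{\Gamma})$ relative to $H^{2}_{\tilde{\mathbf{k}}}(\tilde{\Gamma})$. An isometry extends uniquely to the completion, and this extension coincides with $\mathcal{U}_{\tilde{\mathcal{R}}}$ restricted to $\mathcal{G}$, by $L^{2}$-continuity. Indeed, if $f_{n}\to f$ in $\mathcal{G}$, then $\mathcal{U}f_{n}$ is Cauchy in the fibrewise graph norm and converges in $L^{2}$ to $\mathcal{U}f$. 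Passing to a subsequence, for a.e. $\tilde{\mathbf{k}}$ the fibres $\mathcal{U}f_{n}(\cdot,\tilde{\mathbf{k}})$ converge in $\mathcal{G}(\tilde{\Gamma})$. Closedness of the fibre operators then shows that $\mathcal{U}f(\cdot,\tilde{\mathbf{k}})\in\mathcal{G}_{\tilde{\mathbf{k}}}(\tilde{\Gamma})$ and that $\mathcal{U}\tilde{\mathscr{H}}f=\tilde{\mathscr{H}}\,\mathcal{U}f$ fibrewise. This yields the isometry and the intertwining relation $\tilde{\mathscr{H}}|_{\mathcal{G}}=\fint^{\oplus}\tilde{\mathscr{H}}|_{\mathcal{G}_{\tilde{\mathbf{k}}}}$ on $\mathcal{G}$.

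The main obstacle is surjectivity: every measurable section $g(\tilde{\mathbf{k}})\in\mathcal{G}_{\tilde{\mathbf{k}}}(\tilde{\Gamma})$ with finite fibrewise graph norm must be shown to be the image of some element of $\mathcal{G}$. One route is density: show that $\mathcal{U}(H^{2}(\mathbb{R}^{2d}))=L^{2}(\tilde{\Gamma}^{*},\mathcal{H})$, which Lemma \ref{lemma:isome}(b) supplies, is dense in the graph-norm section space. Then the range, being complete because it is the isometric image of a complete space, is closed and dense, hence everything. I would first try to prove density fibrewise using the Fourier picture. On the torus, after conjugation by $e^{\mathrm{i}\tilde{\mathbf{k}}\cdot\tilde{\mathbf{r}}}$, truncate the Fourier series of $g(\tilde{\mathbf{k}})$ to $|\tilde{\xi}|\le N$. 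The potential is bounded, so $\tilde{\mathscr{H}}$ differs from $-\frac12\tilde D$ by a bounded operator. Hence the graph norm is equivalent to $\Vert u\Vert+\Vert\tilde D u\Vert$, which is diagonal in Fourier space, and truncation converges in it uniformly measurably in $\tilde{\mathbf{k}}$. Dominated convergence in $\tilde{\mathbf{k}}$ then gives density of $H^{2}$-sections.

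An alternative route to surjectivity is to use the unitary equivalence directly. The direct integral $\fint^{\oplus}\tilde{\mathscr{H}}|_{\mathcal{G}_{\tilde{\mathbf{k}}}}$ is self-adjoint, since its fibres are self-adjoint, and by the previous step it extends the self-adjoint operator $\mathcal{U}\tilde{\mathscr{H}}|_{\mathcal{G}}\mathcal{U}^{-1}$. A self-adjoint operator has no proper symmetric extension, so the domains coincide. This also gives the stated decomposition $\mathcal{G}=\fint^{\oplus}\mathcal{G}_{\tilde{\mathbf{k}}}(\tilde{\Gamma})$.
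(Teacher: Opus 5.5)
Your proof is correct, but it is organized differently from the paper's.

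\textbf{The paper's route.} It works directly on $\mathcal{G}$. For $f\in\mathcal{G}$ it commutes $\tilde{\mathscr{H}}$ with the lattice sum term by term and obtains the norm identity (\ref{eq:bb}). Injectivity then comes from Lemma \ref{lemma:isome}(a). For surjectivity it pulls a section $\mathcal{V}$ back to some $f\in L^{2}(\mathbb{R}^{2d})$ and reads (\ref{eq:bb}) backwards to conclude $\Vert f\Vert_{\mathcal{G}}<\infty$.

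\textbf{Your route.} You prove the identity only on the core $H^{2}(\mathbb{R}^{2d})$, where Lemma \ref{lem:direct_integral} already gives the fibrewise intertwining. You extend to $\mathcal{G}$ by density in the graph norm together with closedness of the fibre operators. You then get surjectivity in one of two ways: from density of $H^{2}$-sections via fibrewise Fourier truncation, or from the fact that a self-adjoint operator has no proper symmetric extension.

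\textbf{Comparison.} The paper's argument is shorter. As written, however, its surjectivity step applies (\ref{eq:bb}), which was derived for $f\in\mathcal{G}$, to an $f$ only known to lie in $L^{2}(\mathbb{R}^{2d})$. It therefore tacitly needs two things:
\begin{enumerate}[label=(\roman*)]
\item the distributional intertwining $\mathcal{U}_{\tilde{\mathcal{R}}}(\tilde{\mathscr{H}}f)=\tilde{\mathscr{H}}\,\mathcal{U}_{\tilde{\mathcal{R}}}f$ for arbitrary $f\in L^{2}(\mathbb{R}^{2d})$;
\item the identification of $\mathcal{G}$ with the maximal domain.
\end{enumerate}
The same implicit step already appears in the paper's term-by-term commutation for $f\in\mathcal{G}\setminus H^{2}(\mathbb{R}^{2d})$.

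Your route avoids both points. You only ever apply $\tilde{\mathscr{H}}$ on $H^{2}$ and pass to limits through closedness, and the maximality argument makes surjectivity essentially automatic. The price is an extra standard input, depending on the variant:
\begin{enumerate}[label=(\roman*)]
\item for the maximality variant, the theorem that a measurable direct integral of self-adjoint fibres is self-adjoint;
\item for the truncation variant, the uniform-in-$\tilde{\mathbf{k}}$ equivalence of the fibre graph norm with $\Vert u\Vert+\Vert\tilde{D}u\Vert$, which holds because the potential is bounded.
\end{enumerate}
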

\begin{proof}
Note that for $\tilde{\mathbf{k}}\in\tilde{\Gamma}^{*}$ and $f\in \mathcal{G}$,
\begin{align*}
 \tilde{\mathscr{H}}\left(\mathcal{U}_{\tilde{\mathcal{R}}}f(\cdot, \tilde{\mathbf{k}})\right)=&\tilde{\mathscr{H}}\sum_{\tilde{\tau}\in\tilde{\mathcal{R}}}f(\cdot+\tilde{\tau})e^{-\mathrm{i}\tilde{\mathbf{k}}\cdot \tilde{\tau}}\\=&\sum_{\tilde{\tau}\in\tilde{\mathcal{R}}}\left(\tilde{\mathscr{H}}f\right)(\cdot+\tilde{\tau})e^{-\mathrm{i}\tilde{\mathbf{k}}\cdot \tilde{\tau}}=\left(\mathcal{U}_{\tilde{\mathcal{R}}}\left(\tilde{\mathscr{H}}f\right)\right)(\cdot, \tilde{\mathbf{k}}).
\end{align*}
We obtain from Lemma \ref{lemma:isome} and the definition of $\mathcal{G}$ that
\begin{equation}
    \begin{aligned}\label{eq:bb}
     &\int_{\tilde{\mathbf{k}}\in\tilde{\Gamma}^{*}}\left\Vert\left(\mathcal{U}_{\tilde{\mathcal{R}}}f\right)(\cdot, \tilde{\mathbf{k}})\right\Vert^{2}_{\mathcal{G}(\tilde{\Gamma})}d\tilde{\mathbf{k}}\\=&\int_{\tilde{\mathbf{k}}\in\tilde{\Gamma}^{*}}\left\Vert\left(\mathcal{U}_{\tilde{\mathcal{R}}}f\right)(\cdot, \tilde{\mathbf{k}})\right\Vert^{2}_{L^{2}(\tilde{\Gamma})}d\tilde{\mathbf{k}}+\int_{\tilde{\mathbf{k}}\in\tilde{\Gamma}^{*}}\left\Vert\tilde{\mathscr{H}}\left(\mathcal{U}_{\tilde{\mathcal{R}}}f(\cdot, \tilde{\mathbf{k}})\right)\right\Vert^{2}_{L^{2}(\tilde{\Gamma})}d\tilde{\mathbf{k}}
   \\=&\Vert f\Vert^{2}_{L^{2}(\mathbb{R}^{2d})}+\int_{\tilde{\mathbf{k}}\in\tilde{\Gamma}^{*}}\left\Vert\left(\mathcal{U}_{\tilde{\mathcal{R}}}\left(\tilde{\mathscr{H}}f\right)\right)(\cdot, \tilde{\mathbf{k}})\right\Vert^{2}_{L^{2}(\tilde{\Gamma})}d\tilde{\mathbf{k}}\\=&\Vert f\Vert^{2}_{L^{2}(\mathbb{R}^{2d})}+\Vert \tilde{\mathscr{H}}f\Vert^{2}_{L^{2}(\mathbb{R}^{2d})}=\Vert f\Vert^{2}_{\mathcal{G}}<\infty,
\end{aligned}
\end{equation}
which yields that 
\begin{align*}
    \mathcal{U}_{\tilde{\mathcal{R}}}f\in L^{2}(\tilde{\Gamma}^{*}, \bigcup_{\tilde{\mathbf{k}}\in\tilde{\Gamma}^{*}}\{\mathcal{G}_{\tilde{\mathbf{k}}}(\tilde{\Gamma})\}_{\tilde{\mathbf{k}}\in\tilde{\Gamma}^{*}}).
\end{align*}

 Since $\mathcal{G}\subset L^{2}(\mathbb{R}^{2d})$ and $L^{2}(\tilde{\Gamma}^{*}, \bigcup_{\tilde{\mathbf{k}}\in\tilde{\Gamma}^{*}}\{\mathcal{G}_{\tilde{\mathbf{k}}}(\tilde{\Gamma})\}_{\tilde{\mathbf{k}}\in\tilde{\Gamma}^{*}})\subset L^{2}(\tilde{\Gamma}^{*}, L^{2}(\tilde{\Gamma}))$, we see that $\mathcal{U}_{\tilde{\mathcal{R}}}|_{\mathcal{G}}$ is an injection by Lemma \ref{lemma:isome}. Thus, we only need to show $\mathcal{U}_{\tilde{\mathcal{R}}}|_{\mathcal{G}}$ is a surjection.

Since $L^{2}(\tilde{\Gamma}^{*}, \bigcup_{\tilde{\mathbf{k}}\in\tilde{\Gamma}^{*}}\{\mathcal{G}_{\tilde{\mathbf{k}}}(\tilde{\Gamma})\}_{\tilde{\mathbf{k}}\in\tilde{\Gamma}^{*}})\subset L^{2}(\tilde{\Gamma}^{*}, L^{2}(\tilde{\Gamma}))$ and $\mathcal{U}_{\tilde{\mathcal{R}}}$ is isometry from $L^{2}(\mathbb{R}^{2d})$ onto $L^{2}(\tilde{\Gamma}^{*}, L^{2}(\tilde{\Gamma}))$, for given $\mathcal{V}\in L^{2}(\tilde{\Gamma}^{*}, \bigcup_{\tilde{\mathbf{k}}\in\tilde{\Gamma}^{*}}\{\mathcal{G}_{\tilde{\mathbf{k}}}(\tilde{\Gamma})\}_{\tilde{\mathbf{k}}\in\tilde{\Gamma}^{*}})$, there exists $f\in L^{2}(\mathbb{R}^{2d})$ such that 
\begin{align*}
    \mathcal{V}=\mathcal{U}_{\tilde{\mathcal{R}}}f.
\end{align*}

Note that (\ref{eq:bb}) implies
\begin{align*}
    \Vert f\Vert_{\mathcal{G}}^{2}=\int_{\tilde{\mathbf{k}}\in\tilde{\Gamma}^{*}}\left\Vert\left(\mathcal{U}_{\tilde{\mathcal{R}}}f\right)(\cdot, \tilde{\mathbf{k}})\right\Vert^{2}_{\mathcal{G}(\tilde{\Gamma})}d\tilde{\mathbf{k}}=\int_{\tilde{\mathbf{k}}\in\tilde{\Gamma}^{*}}\left\Vert\mathcal{V}(\cdot, \tilde{\mathbf{k}})\right\Vert^{2}_{\mathcal{G}(\tilde{\Gamma})}d\tilde{\mathbf{k}}<\infty,
\end{align*}
which shows that $f\in \mathcal{G}$ and $\mathcal{U}_{\tilde{\mathcal{R}}}|_{\mathcal{G}}$ is a surjection. That is, $\mathcal{U}_{\tilde{\mathcal{R}}}$ is isometry from $\mathcal{G}$ onto $L^{2}(\tilde{\Gamma}^{*}, \bigcup_{\tilde{\mathbf{k}}\in\tilde{\Gamma}^{*}}\{\mathcal{G}_{\tilde{\mathbf{k}}}(\tilde{\Gamma})\}_{\tilde{\mathbf{k}}\in\tilde{\Gamma}^{*}})$.

This completes the proof.
\end{proof}

A direct calculation leads to
\begin{align*}
	\mathcal{G}_{\tilde{\mathbf{k}}}(\tilde{\Gamma})=e^{\mathrm{i}\tilde{\mathbf{k}}\cdot\tilde{\mathbf{r}}}\left(\mathcal{G}(\mathbb{T}^{2d})\right),\quad\forall \tilde{\mathbf{k}}\in\tilde{\Gamma}^{*},
\end{align*}
where 
\begin{align*}
    \mathcal{G}(\mathbb{T}^{2d}):=\{v|_{\tilde{\Gamma}}: v\in \mathcal{G}(\tilde{\Gamma}),  v(\tilde{\mathbf{r}}+\tilde{\tau})=v(\tilde{\mathbf{r}})\,  a.e., \forall \tilde{\tau}\in\tilde{\mathcal{R}}\},
\end{align*}
which together with Proposition \ref{prop:b_int} shows the closure of $\tilde{\mathscr{H}}(\tilde{\mathbf{k}})$ on the domain $H^{2}(\mathbb{T}^{2d})$. For simplicity, we denote this closure
by $\tilde{\mathscr{H}}(\tilde{\mathbf{k}})$ with the domain $\mathcal{G}(\mathbb{T}^{2d})$. We see that $\tilde{\mathscr{H}}(\tilde{\mathbf{k}})$ is self-adjoint on $\mathcal{G}(\mathbb{T}^{2d})$. Let $\sigma_{\mathcal{G}}(\tilde{\mathscr{H}}(\tilde{\mathbf{k}}))$ be the spectrum set of $\tilde{\mathscr{H}}(\tilde{\mathbf{k}})$ on the domain $\mathcal{G}(\mathbb{T}^{2d})$ for $\tilde{\mathbf{k}}\in\tilde{\Gamma}^{*}$ and there holds that
\begin{align*}
\sigma_{\mathcal{G}}(\tilde{\mathscr{H}}(\tilde{\mathbf{k}}))=\sigma(\tilde{\mathscr{H}}|_{\{\mathcal{G}_{\tilde{\mathbf{k}}}(\tilde{\Gamma})\}_{\tilde{\mathbf{k}}\in\tilde{\Gamma}^{*}}}),\quad \tilde{\mathbf{k}}\in\tilde{\Gamma}^{*}.
\end{align*}

Let $\sigma_{\mathcal{G}}(\tilde{\mathscr{H}})$ be the spectrum set of $\tilde{\mathscr{H}}$ on the domain $\mathcal{G}$. We have the relationship between $\sigma_{\mathcal{G}}(\tilde{\mathscr{H}})$ and $\sigma_{\mathcal{G}}(\tilde{\mathscr{H}}(\tilde{\mathbf{k}}))$ as follows.

\begin{theorem}\label{prop:dense}
There holds that
\begin{align*}
    \sigma_{\mathcal{G}}(\tilde{\mathscr{H}})\subset\overline{\bigcup_{\tilde{\mathbf{k}}\in\tilde{\Gamma}^{*}}\sigma_{\mathcal{G}}(\tilde{\mathscr{H}}(\tilde{\mathbf{k}}))}.
\end{align*}
That is, for $\tilde{\lambda}\in\sigma_{\mathcal{G}}(\tilde{\mathscr{H}})$ and $\varepsilon>0$, there exists $\tilde{\mathbf{k}}\in\tilde{\Gamma}^{*}$ and $\tilde{\lambda}_{\varepsilon}\in\sigma_{\mathcal{G}}(\tilde{\mathscr{H}}(\tilde{\mathbf{k}}))$ such that 
\begin{align*}
    \left|\tilde{\lambda}-\tilde{\lambda}_{\varepsilon}\right|<\varepsilon.
\end{align*}
\end{theorem}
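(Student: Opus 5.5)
We prove the contrapositive: if $\tilde{\lambda}$ lies at positive distance from $S:=\bigcup_{\tilde{\mathbf{k}}\in\tilde{\Gamma}^{*}}\sigma_{\mathcal{G}}(\tilde{\mathscr{H}}(\tilde{\mathbf{k}}))$, then $\tilde{\lambda}$ lies in the resolvent set of $\tilde{\mathscr{H}}$ on $\mathcal{G}$. The tools are the direct integral decomposition $\tilde{\mathscr{H}}|_{\mathcal{G}}=\fint^{\oplus}_{\tilde{\Gamma}^{*}}\tilde{\mathscr{H}}|_{\mathcal{G}_{\tilde{\mathbf{k}}}(\tilde{\Gamma})}$ from Proposition \ref{prop:b_int} and the self-adjointness of each fiber.

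Fix $\tilde{\lambda}\notin\overline{S}$ and set $\eta:=\operatorname{dist}(\tilde{\lambda},S)>0$.

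\textbf{Step 1: fiberwise resolvent bounds.} For every $\tilde{\mathbf{k}}$, the operator $\tilde{\mathscr{H}}|_{\mathcal{G}_{\tilde{\mathbf{k}}}(\tilde{\Gamma})}$ is self-adjoint and its spectrum avoids the disc of radius $\eta$ about $\tilde{\lambda}$. By the spectral theorem,
\begin{align*}
\big\|(\tilde{\mathscr{H}}|_{\mathcal{G}_{\tilde{\mathbf{k}}}(\tilde{\Gamma})}-\tilde{\lambda})^{-1}\big\|_{L^{2}(\tilde{\Gamma})\to L^{2}(\tilde{\Gamma})}\le \eta^{-1}
\end{align*}
uniformly in $\tilde{\mathbf{k}}$.

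\textbf{Step 2: assemble a candidate inverse.} For $g\in L^{2}(\mathbb{R}^{2d})$, define
\begin{align*}
R g:=\mathcal{U}_{\tilde{\mathcal{R}}}^{-1}\Big[\tilde{\mathbf{k}}\mapsto (\tilde{\mathscr{H}}|_{\mathcal{G}_{\tilde{\mathbf{k}}}(\tilde{\Gamma})}-\tilde{\lambda})^{-1}(\mathcal{U}_{\tilde{\mathcal{R}}}g)(\cdot,\tilde{\mathbf{k}})\Big].
\end{align*}
Lemma \ref{lemma:isome} together with Step 1 gives $\|Rg\|_{L^{2}}\le\eta^{-1}\|g\|_{L^{2}}$.

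Write $v(\tilde{\mathbf{k}})$ for the fiber of $Rg$. Then $\tilde{\mathscr{H}}v(\tilde{\mathbf{k}})=(\mathcal{U}g)(\tilde{\mathbf{k}})+\tilde{\lambda}v(\tilde{\mathbf{k}})$, so the fiberwise graph norm is square integrable in $\tilde{\mathbf{k}}$. The surjectivity part of Proposition \ref{prop:b_int} then yields $Rg\in\mathcal{G}$ with $(\tilde{\mathscr{H}}-\tilde{\lambda})Rg=g$.

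\textbf{Step 3: $R$ is a two-sided inverse.} For $f\in\mathcal{G}$, the identity $\mathcal{U}(\tilde{\mathscr{H}}f)=\tilde{\mathscr{H}}\,\mathcal{U}f$ fiberwise (established in the proof of Proposition \ref{prop:b_int}) gives $R(\tilde{\mathscr{H}}-\tilde{\lambda})f=f$. Hence $\tilde{\mathscr{H}}-\tilde{\lambda}$ is a bijection $\mathcal{G}\to L^{2}$ with bounded inverse, so $\tilde{\lambda}\notin\sigma_{\mathcal{G}}(\tilde{\mathscr{H}})$. The $\varepsilon$-formulation of the theorem is just a restatement of membership in the closure.

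\textbf{Main obstacle: measurability of the fiber map.} The delicate point is showing that $\tilde{\mathbf{k}}\mapsto(\tilde{\mathscr{H}}|_{\mathcal{G}_{\tilde{\mathbf{k}}}}-\tilde{\lambda})^{-1}(\mathcal{U}g)(\tilde{\mathbf{k}})$ is a measurable $L^{2}(\tilde{\Gamma})$-valued function, since the domains $\mathcal{G}_{\tilde{\mathbf{k}}}(\tilde{\Gamma})$ vary with $\tilde{\mathbf{k}}$. We handle this with the unitary conjugation by $e^{\mathrm{i}\tilde{\mathbf{k}}\cdot\tilde{\mathbf{r}}}$, which moves everything onto the fixed domain $\mathcal{G}(\mathbb{T}^{2d})$.

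On that domain, $\tilde{\mathscr{H}}(\tilde{\mathbf{k}})$ is diagonal in the Fourier basis of $\mathbb{T}^{2d}$ up to the bounded potential. Its kinetic symbol is $\frac12|\mathbf{G}+\mathbf{G}'+\mathbf{k}+\mathbf{k}'|^{2}$, which depends continuously on $\tilde{\mathbf{k}}$. Consequently, for a fixed nonreal $z$, $(\tilde{\mathscr{H}}(\tilde{\mathbf{k}})-z)^{-1}$ is strongly continuous in $\tilde{\mathbf{k}}$: each such resolvent is bounded by $|\operatorname{Im}z|^{-1}$, and the second resolvent identity applies on the common core of trigonometric polynomials. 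Strong continuity then extends to $\tilde{\lambda}$ by the uniform bound from Step 1.

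If this extension proves awkward, the fallback is to express $(\cdot-\tilde{\lambda})^{-1}$ on the spectrum as a uniform limit of functions of the resolvent at nonreal points. This is possible because, on the closed set $\mathbb{R}\setminus(\tilde{\lambda}-\eta,\tilde{\lambda}+\eta)$, $1/(x-\tilde{\lambda})$ vanishes at infinity and so lies in the algebra generated by the resolvents; measurability then follows from the nonreal case.
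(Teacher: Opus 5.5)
Your proposal is correct and follows essentially the same route as the paper. Both argue by contraposition, use self-adjointness of each fiber to get the uniform bound $\|(\tilde{\mathscr{H}}(\tilde{\mathbf{k}})-\tilde{\lambda})^{-1}\|\le \eta^{-1}$, and assemble the direct integral of the fiber resolvents into a bounded two-sided inverse of $\tilde{\lambda}-\tilde{\mathscr{H}}$ on $\mathcal{G}$. Your extra care on two points fills in what the paper leaves implicit: measurability of the fiber resolvents, via the fixed domain $\mathcal{G}(\mathbb{T}^{2d})$ and strong continuity in $\tilde{\mathbf{k}}$, and membership $Rg\in\mathcal{G}$, via Proposition \ref{prop:b_int}.
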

\begin{proof}
Note that $\tilde{\mathscr{H}}$ on the domain $\mathcal{G}$ and $\tilde{\mathscr{H}}(\tilde{\mathbf{k}})$ on the domain $\mathcal{G}(\mathbb{T}^{2d})$ are self-adjoint. Hence, 
\begin{align*}
    \sigma_{\mathcal{G}}(\tilde{\mathscr{H}})\subset\mathbb{R}\quad\text{and}\quad\overline{\bigcup_{\tilde{\mathbf{k}}\in\tilde{\Gamma}^{*}}\sigma_{\mathcal{G}}(\tilde{\mathscr{H}}(\tilde{\mathbf{k}}))}\subset\mathbb{R}.
\end{align*}

If $\tilde{\lambda}\notin\overline{\bigcup_{\tilde{\mathbf{k}}\in\tilde{\Gamma}^{*}}\sigma_{\mathcal{G}}(\tilde{\mathscr{H}}(\tilde{\mathbf{k}}))}$, then there exists $\delta>0$ such that 
\begin{align*}
    (\tilde{\lambda}-\delta, \tilde{\lambda}+\delta)\cap\overline{\bigcup_{\tilde{\mathbf{k}}\in\tilde{\Gamma}^{*}}\sigma_{\mathcal{G}}(\tilde{\mathscr{H}}(\tilde{\mathbf{k}}))}=\emptyset.
\end{align*}
That is, 
\begin{align*}
    (\tilde{\lambda}-\delta, \tilde{\lambda}+\delta)\cap\sigma_{\mathcal{G}}(\tilde{\mathscr{H}}(\tilde{\mathbf{k}}))=\emptyset,\quad\forall \tilde{\mathbf{k}}\in\tilde{\Gamma}^{*},
\end{align*}
and the distance between $\tilde{\lambda}$ and $\sigma_{\mathcal{G}}(\tilde{\mathscr{H}}(\tilde{\mathbf{k}}))$ satisfies
\begin{align*}
    \operatorname{dist}(\tilde{\lambda}, \sigma_{\mathcal{G}}(\tilde{\mathscr{H}}(\tilde{\mathbf{k}}))):=\inf_{\hat{\lambda}\in\sigma_{\mathcal{G}}(\tilde{\mathscr{H}}(\tilde{\mathbf{k}}))}\left|\tilde{\lambda}-\hat{\lambda}\right|\geqslant\delta, \quad\forall \tilde{\mathbf{k}}\in\tilde{\Gamma}^{*}.
\end{align*}

Hence, for $\tilde{\mathbf{k}}\in\tilde{\Gamma}^{*}$, $\tilde{\lambda} I-\tilde{\mathscr{H}}(\tilde{\mathbf{k}})$ is invertible and the norm of $\left(\tilde{\lambda} I-\tilde{\mathscr{H}}(\tilde{\mathbf{k}})\right)^{-1}$ can be estimated as
\begin{align*}
    \left\Vert\left(\tilde{\lambda} I-\tilde{\mathscr{H}}(\tilde{\mathbf{k}})\right)^{-1} \right\Vert\leqslant\frac{1}{ \operatorname{dist}(\tilde{\lambda}, \sigma_{\mathcal{G}}(\tilde{\mathscr{H}}(\tilde{\mathbf{k}})))}\leqslant\frac{1}{\delta}, \quad \forall \tilde{\mathbf{k}}\in\tilde{\Gamma}^{*}.
\end{align*}

Define the following direct integral
decomposition by
\begin{align*}
    \mathscr{G}:=\fint_{\tilde{\Gamma}^{*}}^{\oplus}\left(\tilde{\lambda} I-\tilde{\mathscr{H}}(\tilde{\mathbf{k}})\right)^{-1}.
\end{align*}
We see that
\begin{align}
    \Vert\mathscr{G}\Vert=\operatorname{esssup}_{\tilde{\mathbf{k}}\in\tilde{\Gamma}^{*}} \left\Vert\left(\tilde{\lambda} I-\tilde{\mathscr{H}}(\tilde{\mathbf{k}})\right)^{-1} \right\Vert\leqslant\frac{1}{\delta}.
\end{align}
That is, $\mathscr{G}$ is bounded in $L^{2}(\mathbb{R}^{2d})$.

Next, we show that $\mathscr{G}$ is the inverse of $\tilde{\lambda} I-\tilde{\mathscr{H}}$. Indeed, we have
\begin{align*}
    \mathscr{G}\left(\tilde{\lambda} I-\tilde{\mathscr{H}}\right)=\fint_{\tilde{\Gamma}^{*}}^{\oplus}\left(\tilde{\lambda} I-\tilde{\mathscr{H}}(\tilde{\mathbf{k}})\right)^{-1}\left(\tilde{\lambda} I-\tilde{\mathscr{H}}(\tilde{\mathbf{k}})\right)=\fint_{\tilde{\Gamma}^{*}}^{\oplus}I=I.
\end{align*}
Similarly, $\left(\tilde{\lambda} I-\tilde{\mathscr{H}}\right)\mathscr{G}=I$. Consequently, combining with the boundedness of $\mathscr{G}$, we obtain that $\tilde{\lambda}\notin\sigma_{\mathcal{G}}(\tilde{\mathscr{H}})$, which completes the proof.
\end{proof}

\section{A regularized Schrödinger operator $\tilde{\mathscr{H}}^{\delta}$} In this section, to address the degenerate ellipticity of the extended Schrödinger operator $\tilde{\mathscr{H}}$,  we introduce a regularization technique and investigate the resulting regularized Schrödinger operator, including its fundamental properties and spectral results.
\subsection{Regularization} We regularize the extended Schrödinger operator $\tilde{\mathscr{H}}$  and introduce the following regularized Schrödinger operator.

\begin{definition}
 Given $\delta>0$, define the regularized Schrödinger operator $\tilde{\mathscr{H}}^{\delta}: L^{2}(\mathbb{R}^{2d})\rightarrow L^{2}(\mathbb{R}^{2d})$ on the domain $H^{2}(\mathbb{R}^{2d})$ by 
 \begin{align*}
    \tilde{\mathscr{H}}^{\delta}:= \tilde{\mathscr{H}}-\frac{\delta}{2}\sum_{i=1}^{d}(\partial_{\mathbf{r}_{i}}-\partial_{\mathbf{r}_{i}'})^{2}.
 \end{align*}
\end{definition}

We see that $\tilde{\mathscr{H}}^{\delta} (\delta>0)$ is uniformly elliptic due to its principal symbol 
\begin{align}\label{ine:ell}
\frac{1}{2}\sum_{i=1}^{d}(\xi_{i}+\xi_{i}')^{2}+\frac{\delta}{2}\sum_{i=1}^{d}(\xi_{i}-\xi_{i}')^{2}\geqslant\min\{1, \delta \}\sum_{i=1}^{d}\left(\xi_{i}^{2}+\xi_{i}'^{2}\right).
\end{align}

The following proposition follows directly from Lemmas \ref{lemma:isome} and \ref{lem:direct_integral}, which tells the direct integral decomposition for the operator $\tilde{\mathscr{H}}^{\delta}$ on the domain $H^{2}(\mathbb{R}^{2d})$.

\begin{proposition}\label{thm:direct_integral}
Given $\delta>0$, there holds
 \begin{align*}
    \tilde{\mathscr{H}}^{\delta}=\fint_{\tilde{\Gamma}^{*}}^{\oplus}\tilde{\mathscr{H}}^{\delta}|_{H_{ \tilde{\mathbf{k}}}^{2}(\tilde{\Gamma})}.
    \end{align*}   
\end{proposition}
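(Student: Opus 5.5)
The plan is to argue exactly as in Lemma \ref{lem:direct_integral}, after checking that the regularized operator commutes with lattice translations. Write
\begin{align*}
\tilde{\mathscr{H}}^{\delta}=-\frac{1}{2}\tilde{D}-\frac{\delta}{2}\tilde{D}_{-}+V_{1}(\mathbf{r})+V_{2}(\mathbf{r}'),\qquad \tilde{D}_{-}:=\sum_{i=1}^{d}(\partial_{\mathbf{r}_{i}}-\partial_{\mathbf{r}_{i}'})^{2}.
\end{align*}
Both $\tilde{D}$ and $\tilde{D}_{-}$ are constant-coefficient differential operators, so they commute with every translation. The potential $V_{1}(\mathbf{r})+V_{2}(\mathbf{r}')$ is $\tilde{\mathcal{R}}$-periodic. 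Hence for every $f\in H^{2}(\mathbb{R}^{2d})$ and every $\tilde{\tau}\in\tilde{\mathcal{R}}$ we have $(\tilde{\mathscr{H}}^{\delta}f)(\cdot+\tilde{\tau})=\tilde{\mathscr{H}}^{\delta}(f(\cdot+\tilde{\tau}))$.

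\emph{Step 1 (intertwining).} For $f\in H^{2}(\mathbb{R}^{2d})$, I would show that
\begin{align*}
\tilde{\mathscr{H}}^{\delta}\big(\mathcal{U}_{\tilde{\mathcal{R}}}f(\cdot,\tilde{\mathbf{k}})\big)=\big(\mathcal{U}_{\tilde{\mathcal{R}}}\tilde{\mathscr{H}}^{\delta}f\big)(\cdot,\tilde{\mathbf{k}}) \quad\text{for a.e. } \tilde{\mathbf{k}}\in\tilde{\Gamma}^{*}.
\end{align*}
This is the same computation as in the proof of Proposition \ref{prop:b_int}. The only point needing care is interchanging the differential operator with the lattice sum. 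I would handle it in two stages. First, for $f\in C_{c}^{\infty}(\mathbb{R}^{2d})$ the sum is locally finite, so the identity holds pointwise. Second, for general $f$, take a limit in the $H^{2}$ norm. The map $\tilde{\mathscr{H}}^{\delta}:H^{2}(\mathbb{R}^{2d})\to L^{2}(\mathbb{R}^{2d})$ is bounded, since $V_{1}$ and $V_{2}$ are continuous and periodic, hence bounded. By Lemma \ref{lemma:isome}(a) and (b), $\mathcal{U}_{\tilde{\mathcal{R}}}$ is an isometry from $H^{2}(\mathbb{R}^{2d})$ onto $L^{2}(\tilde{\Gamma}^{*},\mathcal{H})$ and from $L^{2}$ onto $L^{2}(\tilde{\Gamma}^{*},L^{2}(\tilde{\Gamma}))$. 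Together these give the identity for a.e. $\tilde{\mathbf{k}}$.

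\emph{Step 2 (fibers are well defined).} For each $\tilde{\mathbf{k}}$, the restriction $\tilde{\mathscr{H}}^{\delta}|_{H^{2}_{\tilde{\mathbf{k}}}(\tilde{\Gamma})}$ maps into $L^{2}_{\tilde{\mathbf{k}}}(\tilde{\Gamma})$. This holds because translation-commutation preserves the quasi-periodicity condition $f(\tilde{\mathbf{r}}+\tilde{\tau})=e^{\mathrm{i}\tilde{\mathbf{k}}\cdot\tilde{\tau}}f(\tilde{\mathbf{r}})$. The fiber operators are measurable in $\tilde{\mathbf{k}}$. The cleanest way to see this is through \eqref{eq:discr}: the conjugated operator $e^{-\mathrm{i}\tilde{\mathbf{k}}\cdot\tilde{\mathbf{r}}}\tilde{\mathscr{H}}^{\delta}e^{\mathrm{i}\tilde{\mathbf{k}}\cdot\tilde{\mathbf{r}}}$ acts on the fixed space $H^{2}(\mathbb{T}^{2d})$ and depends polynomially on $\tilde{\mathbf{k}}$. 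Its extra terms are
\begin{align*}
-\mathrm{i}\delta(\mathbf{k}-\mathbf{k}')\cdot(\nabla_{\mathbf{r}}-\nabla_{\mathbf{r}'})+\frac{\delta}{2}|\mathbf{k}-\mathbf{k}'|^{2}.
\end{align*}

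\emph{Step 3 (conclusion).} Combining Step 1 with the decompositions $L^{2}(\mathbb{R}^{2d})=\fint^{\oplus}L^{2}(\tilde{\Gamma})$ and $H^{2}(\mathbb{R}^{2d})=\fint^{\oplus}H^{2}_{\tilde{\mathbf{k}}}(\tilde{\Gamma})$ from Lemma \ref{lem:direct_integral} gives $\mathcal{U}_{\tilde{\mathcal{R}}}\tilde{\mathscr{H}}^{\delta}\mathcal{U}_{\tilde{\mathcal{R}}}^{-1}=\fint^{\oplus}\tilde{\mathscr{H}}^{\delta}|_{H^{2}_{\tilde{\mathbf{k}}}(\tilde{\Gamma})}$, which is the claim. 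Unlike the case of $\tilde{\mathscr{H}}$, no closure is needed. By the ellipticity \eqref{ine:ell}, $\tilde{\mathscr{H}}^{\delta}$ is self-adjoint on $H^{2}(\mathbb{R}^{2d})$ and each fiber is self-adjoint on $H^{2}_{\tilde{\mathbf{k}}}(\tilde{\Gamma})$, so the domains on both sides match exactly.

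The main obstacle is the justification in Step 1 of passing the operator through the infinite lattice sum. Density of $C_{c}^{\infty}$ in $H^{2}$ together with the isometry in Lemma \ref{lemma:isome}(b) should settle it.
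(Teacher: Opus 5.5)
Your proposal is correct and takes the same route as the paper, which simply asserts that the decomposition follows directly from Lemmas \ref{lemma:isome} and \ref{lem:direct_integral}, i.e.\ from the $\tilde{\mathcal{R}}$-periodicity of $\tilde{\mathscr{H}}^{\delta}$ and the Bloch--Floquet isometry; your Steps 1--3 supply the details the paper omits, namely the density argument for passing $\tilde{\mathscr{H}}^{\delta}$ through the lattice sum (as in the proof of Proposition \ref{prop:b_int}) and the identification of domains. On the domains, the short justification is this: Step 1 shows that the direct integral of the self-adjoint fibers extends $\mathcal{U}_{\tilde{\mathcal{R}}}\tilde{\mathscr{H}}^{\delta}\mathcal{U}_{\tilde{\mathcal{R}}}^{-1}$, and since a self-adjoint operator has no proper self-adjoint extension, the two operators and their domains coincide, as you claim.
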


Given $\delta>0$ and $\tilde{\mathbf{k}}\in \tilde{\Gamma}^{*}$, we obtain from (\ref{eq:discr}) and simple calculations that the following regularized operator
    \begin{align*}
    \tilde{\mathscr{H}}^{\delta}(\tilde{\mathbf{k}}):=&e^{-\mathrm{i}\tilde{\mathbf{k}}\cdot\tilde{\mathbf{r}}}\circ\tilde{\mathscr{H}}^{\delta}\circ e^{\mathrm{i}\tilde{\mathbf{k}}\cdot\tilde{\mathbf{r}}}\\=&\tilde{\mathscr{H}}(\tilde{\mathbf{k}})-\frac{\delta}{2}\sum_{i=1}^{d}(\partial_{\mathbf{r}_{i}}-\partial_{\mathbf{r}_{i}'})^{2}-\mathrm{i}\delta(\mathbf{k}-\mathbf{k}')\cdot(\nabla_{\mathbf{r}}-\nabla_{\mathbf{r}'})+\frac{\delta}{2}|\mathbf{k}-\mathbf{k}'|^{2}
\end{align*}
on the domain $H^{2}(\mathbb{T}^{2d})$, where $\mathbf{k}\in\Gamma_{1}^{*}, \mathbf{k}'\in\Gamma_{2}^{*}$ and $\tilde{\mathbf{k}}=(\mathbf{k}, \mathbf{k}')\in\tilde{\Gamma}^{*}$. The regularized operator $\tilde{\mathscr{H}}^{\delta}(\tilde{\mathbf{k}})$ provides another view of the operator $\tilde{\mathscr{H}}^{\delta}|_{H_{ \tilde{\mathbf{k}}}^{2}(\tilde{\Gamma})}$, and is self-adjoint on the domain $H^{2}(\mathbb{T}^{2d})$. We see that 
\begin{align*}
\sigma(\tilde{\mathscr{H}}^{\delta}|_{H_{ \tilde{\mathbf{k}}}^{2}(\tilde{\Gamma})})=\sigma(\tilde{\mathscr{H}}^{\delta}(\tilde{\mathbf{k}})),\quad \tilde{\mathbf{k}}\in\tilde{\Gamma}^{*}.
\end{align*} 

Next we address the spectral structure of the regularized Schrödinger operator $\tilde{\mathscr{H}}^{\delta}$ on the domain $H^{2}(\mathbb{R}^{2d})$.

 Given $\delta>0$, the operator $\tilde{\mathscr{H}}^{\delta}(\tilde{\mathbf{k}})$ on the domain $H^{2}(\mathbb{T}^{2d})$ is Fredholm for $\tilde{\mathbf{k}}\in\tilde{\Gamma}^{*}$ due to the uniform ellipticity of operators (see, e.g., \cite{hormander2007analysis}). Based on the Fredholm property and Proposition \ref{thm:direct_integral}, we have the following relationship between $\sigma(\tilde{\mathscr{H}}^{\delta})$ and $\sigma(\tilde{\mathscr{H}}^{\delta}(\tilde{\mathbf{k}}))$ from Theorem 4.5.1 in \cite{kuchment2012floquet}.

\begin{proposition}\label{prop:spec_dec}
For $\delta>0$, there holds that 
\begin{align*}
    \sigma(\tilde{\mathscr{H}}^{\delta})=\bigcup_{\tilde{\mathbf{k}}\in\tilde{\Gamma}^{*}}\sigma(\tilde{\mathscr{H}}^{\delta}|_{H_{ \tilde{\mathbf{k}}}^{2}(\tilde{\Gamma})})=\bigcup_{\tilde{\mathbf{k}}\in\tilde{\Gamma}^{*}}\sigma(\tilde{\mathscr{H}}^{\delta}(\tilde{\mathbf{k}})).
\end{align*}
\end{proposition}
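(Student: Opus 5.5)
The equality $\sigma(\tilde{\mathscr{H}}^{\delta}|_{H_{\tilde{\mathbf{k}}}^{2}(\tilde{\Gamma})})=\sigma(\tilde{\mathscr{H}}^{\delta}(\tilde{\mathbf{k}}))$ was already noted: the map $e^{\mathrm{i}\tilde{\mathbf{k}}\cdot\tilde{\mathbf{r}}}$ is a unitary from $L^{2}(\mathbb{T}^{2d})$ onto $L_{\tilde{\mathbf{k}}}^{2}(\tilde{\Gamma})$ carrying $H^{2}(\mathbb{T}^{2d})$ onto $H^{2}_{\tilde{\mathbf{k}}}(\tilde{\Gamma})$ by (\ref{eq:discr}). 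So the real content is the first equality, and I prove the two inclusions separately, working with the family $\tilde{\mathscr{H}}^{\delta}(\tilde{\mathbf{k}})$ on the fixed domain $H^{2}(\mathbb{T}^{2d})$.

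\emph{Inclusion $\subset$ (closure not needed).} First, $\tilde{\mathbf{k}}\mapsto\tilde{\mathscr{H}}^{\delta}(\tilde{\mathbf{k}})$ is a polynomial in $\tilde{\mathbf{k}}$ with coefficients bounded from $H^{2}(\mathbb{T}^{2d})$ to $L^{2}(\mathbb{T}^{2d})$. It is therefore a continuous, indeed analytic, family of operators between fixed spaces. Second, by the ellipticity (\ref{ine:ell}) and compactness of the torus, each $\tilde{\mathscr{H}}^{\delta}(\tilde{\mathbf{k}})$ has compact resolvent, so its spectrum is a discrete sequence of eigenvalues $\lambda_{1}(\tilde{\mathbf{k}})\le\lambda_{2}(\tilde{\mathbf{k}})\le\cdots$. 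The min-max principle, together with the uniform relative bound of the $\tilde{\mathbf{k}}$-dependent first-order terms, shows that each $\lambda_{n}$ is continuous on the compact closure of $\tilde{\Gamma}^{*}$. By the $\tilde{\mathcal{R}}^{*}$-periodicity in $\tilde{\mathbf{k}}$, each $\lambda_{n}$ is a periodic continuous function. Hence $\bigcup_{\tilde{\mathbf{k}}}\sigma(\tilde{\mathscr{H}}^{\delta}(\tilde{\mathbf{k}}))=\bigcup_{n}\lambda_{n}(\overline{\tilde{\Gamma}^{*}})$ is a union of compact intervals. Since $\lambda_{n}\ge c\,n^{1/(2d)}-C\to\infty$ uniformly in $\tilde{\mathbf{k}}$ (Weyl-type lower bound from ellipticity), this union is locally finite and thus closed.

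Now take $\lambda\notin\bigcup_{\tilde{\mathbf{k}}}\sigma(\tilde{\mathscr{H}}^{\delta}(\tilde{\mathbf{k}}))$. By closedness there is $\eta>0$ with $\operatorname{dist}(\lambda,\sigma(\tilde{\mathscr{H}}^{\delta}(\tilde{\mathbf{k}})))\ge\eta$ for all $\tilde{\mathbf{k}}$. I then repeat verbatim the argument of Theorem \ref{prop:dense}, using Proposition \ref{thm:direct_integral} in place of Proposition \ref{prop:b_int}. The direct integral of the resolvents $(\lambda-\tilde{\mathscr{H}}^{\delta}(\tilde{\mathbf{k}}))^{-1}$ is bounded by $1/\eta$ and inverts $\lambda-\tilde{\mathscr{H}}^{\delta}$. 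Its range is in $H^{2}(\mathbb{R}^{2d})$, because elliptic regularity gives uniform $L^{2}\to H^{2}$ bounds on the fibers and Lemma \ref{lemma:isome}(b) applies.

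\emph{Inclusion $\supset$.} Let $\lambda=\lambda_{n}(\tilde{\mathbf{k}}_{0})$ with normalized eigenfunction $\phi$, and form a Weyl sequence by wave packets. For small $\epsilon>0$ set $\psi_{\epsilon}:=\mathcal{U}_{\tilde{\mathcal{R}}}^{-1}\big(\chi_{\epsilon}(\tilde{\mathbf{k}})\,e^{\mathrm{i}\tilde{\mathbf{k}}\cdot\tilde{\mathbf{r}}}\phi\big)$, where $\chi_{\epsilon}$ is the normalized indicator of an $\epsilon$-ball around $\tilde{\mathbf{k}}_{0}$. Then $\|\psi_{\epsilon}\|=1$, and by the isometry of Lemma \ref{lemma:isome},
\begin{align*}
\|(\tilde{\mathscr{H}}^{\delta}-\lambda)\psi_{\epsilon}\|^{2}\lesssim\sup_{|\tilde{\mathbf{k}}-\tilde{\mathbf{k}}_{0}|<\epsilon}\|(\tilde{\mathscr{H}}^{\delta}(\tilde{\mathbf{k}})-\tilde{\mathscr{H}}^{\delta}(\tilde{\mathbf{k}}_{0}))\phi\|^{2}\to0,
\end{align*}
using the continuity of the family on $H^{2}(\mathbb{T}^{2d})$ with $\phi\in H^{2}(\mathbb{T}^{2d})$. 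This shows $\lambda\in\sigma(\tilde{\mathscr{H}}^{\delta})$.

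The main obstacle is the closedness of the union in the first inclusion. Continuity of the band functions is standard, but the uniform-in-$\tilde{\mathbf{k}}$ growth $\lambda_{n}\to\infty$ needs a G\aa rding estimate uniform in $\tilde{\mathbf{k}}$. I would obtain it by comparing the quadratic form with $\min\{1,\delta\}(-\Delta)$ minus a constant, bounding the first-order terms by $\epsilon$ times the principal part plus a constant depending on $\sup|\tilde{\mathbf{k}}|$. Alternatively, the result can simply be quoted from Theorem 4.5.1 in \cite{kuchment2012floquet}, whose hypotheses (analytic Fredholm fibers) are exactly what was verified above.
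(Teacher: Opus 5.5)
Your proof is correct, but it takes a different route from the paper's. The paper gives no argument of its own. It notes that each fiber $\tilde{\mathscr{H}}^{\delta}(\tilde{\mathbf{k}})$ is Fredholm on $H^{2}(\mathbb{T}^{2d})$ by uniform ellipticity, and, with the direct integral decomposition of Proposition~\ref{thm:direct_integral}, quotes Theorem 4.5.1 of \cite{kuchment2012floquet}; this is your closing alternative.

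You instead prove both inclusions directly in the self-adjoint setting:
\begin{itemize}
\item the band union is closed, by continuity of the band functions and their uniform growth;
\item $\subset$ follows from a bounded direct integral of the fiber resolvents;
\item $\supset$ follows from a wave-packet Weyl sequence built through $\mathcal{U}_{\tilde{\mathcal{R}}}^{-1}$.
\end{itemize}

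The citation buys generality. Kuchment's theorem rests on analytic Fredholm theory, so it needs neither self-adjointness nor the identity $\Vert(\lambda-A)^{-1}\Vert=1/\operatorname{dist}(\lambda,\sigma(A))$. Your argument buys transparency: it shows exactly where ellipticity enters. The bound $\inf_{\tilde{\mathbf{k}}}\tilde{\lambda}^{\delta}_{n}(\tilde{\mathbf{k}})\to\infty$ is what makes the union closed. That is precisely what is unavailable for the degenerate $\tilde{\mathscr{H}}$, and why Theorem~\ref{prop:dense} only yields inclusion in the closure. You also make explicit the check that the direct-integral inverse has range in the operator domain, which the proof of Theorem~\ref{prop:dense} leaves implicit.

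Two cosmetic remarks. First, the natural Weyl growth is $n^{1/d}$, not $n^{1/(2d)}$; your bound is weaker but still true, and only divergence matters. Second, the uniform lower bound is easiest in the quasi-periodic picture, where you need no G\aa rding estimate with first-order terms. There the plane waves $e^{\mathrm{i}(\tilde{\mathbf{k}}+\tilde{G})\cdot\tilde{\mathbf{r}}}$ diagonalize the kinetic part, and (\ref{ine:ell}) gives $\tilde{\lambda}^{\delta}_{n}(\tilde{\mathbf{k}})\geqslant\min\{1,\delta\}\,\nu_{n}(\tilde{\mathbf{k}})-\Vert V_{1}\Vert_{L^{\infty}}-\Vert V_{2}\Vert_{L^{\infty}}$. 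Here $\nu_{n}(\tilde{\mathbf{k}})$ is the $n$-th smallest value of $|\tilde{\mathbf{k}}+\tilde{G}|^{2}$ over $\tilde{G}\in\tilde{\mathcal{R}}^{*}$, which tends to infinity uniformly for $\tilde{\mathbf{k}}$ in the bounded set $\overline{\tilde{\Gamma}^{*}}$.
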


The spectral structure of each regularized
operator $\tilde{\mathscr{H}}^{\delta}(\tilde{\mathbf{k}})$ on the domain $H^{2}(\mathbb{T}^{2d})$ for $\tilde{\mathbf{k}}\in\tilde{\Gamma}^{*}$ and $\delta>0$ is stated in the following proposition.
\begin{proposition}\label{prop:disrete_po}
For $\delta>0$ and $\tilde{\mathbf{k}}\in\tilde{\Gamma}^{*}$, the spectrum of $\tilde{\mathscr{H}}^{\delta}(\tilde{\mathbf{k}})$ on the domain $H^{2}(\mathbb{T}^{2d})$ consists of real discrete point spectrum. In particular, the eigenvalues of $\tilde{\mathscr{H}}^{\delta}(\tilde{\mathbf{k}})$ on the domain $H^{2}(\mathbb{T}^{2d})$ have ﬁnite multiplicity.
\end{proposition}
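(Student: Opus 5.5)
The plan is to show that $\tilde{\mathscr{H}}^{\delta}(\tilde{\mathbf{k}})$ on $H^{2}(\mathbb{T}^{2d})$ is self-adjoint, bounded below, and has compact resolvent. The spectral theorem for self-adjoint operators with compact resolvent then gives that the spectrum is real, purely discrete, and made of eigenvalues of finite multiplicity accumulating only at $+\infty$.

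\emph{Self-adjointness and lower bound.} Write $\tilde{\mathscr{H}}^{\delta}(\tilde{\mathbf{k}})=T_{\tilde{\mathbf{k}}}+W$, where $W=V_{1}(\mathbf{r})+V_{2}(\mathbf{r}')$ and $T_{\tilde{\mathbf{k}}}$ is the constant-coefficient part. On the torus, $T_{\tilde{\mathbf{k}}}$ is diagonal in the Fourier basis $\{e^{\mathrm{i}\tilde{G}\cdot\tilde{\mathbf{r}}}\}_{\tilde{G}\in\tilde{\mathcal{R}}^{*}}$, since $e^{-\mathrm{i}\tilde{\mathbf{k}}\cdot\tilde{\mathbf{r}}}$ conjugation just shifts frequencies. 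Its symbol is
\begin{align*}
t(\tilde{G}):=\frac{1}{2}\left|(\mathbf{G}+\mathbf{k})+(\mathbf{G}'+\mathbf{k}')\right|^{2}+\frac{\delta}{2}\left|(\mathbf{G}+\mathbf{k})-(\mathbf{G}'+\mathbf{k}')\right|^{2},\qquad \tilde{G}=(\mathbf{G},\mathbf{G}'),
\end{align*}
which is real and nonnegative. By (\ref{ine:ell}),
\begin{align*}
t(\tilde{G})\geqslant\min\{1,\delta\}\,|\tilde{G}+\tilde{\mathbf{k}}|^{2}.
\end{align*}
Hence $T_{\tilde{\mathbf{k}}}$ is self-adjoint and nonnegative on its natural domain $\{u:\sum_{\tilde{G}} t(\tilde{G})^{2}|\hat u(\tilde{G})|^{2}<\infty\}$. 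This domain equals $H^{2}(\mathbb{T}^{2d})$ because $t(\tilde{G})\sim 1+|\tilde{G}|^{2}$ up to constants depending on $\delta$ and $\tilde{\mathbf{k}}$, and the lower bound above is exactly where $\delta>0$ is used. Since $V_{j}\in C^{0}$ are periodic, $W$ is a bounded real multiplication operator. Kato--Rellich, or simply the bounded perturbation theorem, then gives self-adjointness of $\tilde{\mathscr{H}}^{\delta}(\tilde{\mathbf{k}})$ on $H^{2}(\mathbb{T}^{2d})$, with $\tilde{\mathscr{H}}^{\delta}(\tilde{\mathbf{k}})\geqslant-\Vert W\Vert_{L^{\infty}}$. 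In particular the spectrum is real.

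\emph{Compact resolvent.} Take $\mu=\Vert W\Vert_{L^{\infty}}+1$. Then $\tilde{\mathscr{H}}^{\delta}(\tilde{\mathbf{k}})+\mu$ is invertible with bounded inverse from $L^{2}(\mathbb{T}^{2d})$ onto $H^{2}(\mathbb{T}^{2d})$. The graph norm is equivalent to the $H^{2}$ norm, which follows from the Fourier bound above together with the boundedness of $W$. The embedding $H^{2}(\mathbb{T}^{2d})\hookrightarrow L^{2}(\mathbb{T}^{2d})$ is compact by Rellich--Kondrachov on the compact torus; concretely, it is the limit of the finite-rank Fourier truncations, since $(1+|\tilde{G}|^{2})^{-1}\to0$. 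So the resolvent is a compact self-adjoint operator.

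\emph{Conclusion.} By the Hilbert--Schmidt spectral theorem, $(\tilde{\mathscr{H}}^{\delta}(\tilde{\mathbf{k}})+\mu)^{-1}$ has discrete nonzero eigenvalues $\nu_{n}\to0$, each with a finite-dimensional eigenspace. The map $\lambda=\nu^{-1}-\mu$ transfers this to $\tilde{\mathscr{H}}^{\delta}(\tilde{\mathbf{k}})$. Its spectrum is a sequence of real eigenvalues of finite multiplicity tending to $+\infty$, with no essential spectrum, which is the claim.

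The main obstacle is the domain identification: checking that the graph norm is equivalent to the $H^{2}(\mathbb{T}^{2d})$ norm uniformly enough to yield compactness. This is exactly the step that fails for $\delta=0$, where $\mathcal{G}(\mathbb{T}^{2d})\supsetneq H^{2}$ and infinitely many Fourier modes with $\mathbf{G}+\mathbf{G}'+\mathbf{k}+\mathbf{k}'$ small give infinite multiplicity. The Fourier computation with (\ref{ine:ell}) resolves it directly, as an elementary alternative to invoking elliptic regularity from \cite{hormander2007analysis}.
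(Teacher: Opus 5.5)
Your proof is correct and follows essentially the same route as the paper: self-adjointness, a resolvent mapping $L^{2}(\mathbb{T}^{2d})$ into $H^{2}(\mathbb{T}^{2d})$, compactness via Rellich--Kondrachov, and the spectral theory of compact operators. You use the real shift $-\mu$ with Hilbert--Schmidt where the paper uses a non-real point with Riesz--Schauder, which is immaterial, and your Fourier-symbol computation usefully supplies the justification of self-adjointness on $H^{2}(\mathbb{T}^{2d})$ and of the $L^{2}\to H^{2}$ mapping property, which the paper only asserts from uniform ellipticity (strictly, the comparison should read $1+t(\tilde{G})\sim 1+|\tilde{G}|^{2}$, since $t$ vanishes at $\tilde{G}=\mathbf{0}$ when $\tilde{\mathbf{k}}=\mathbf{0}$).
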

\begin{proof}
Note that the regularized operator $\tilde{\mathscr{H}}^{\delta}(\tilde{\mathbf{k}})$ is self-adjoint for $\tilde{\mathbf{k}}\in\tilde{\Gamma}^{*}$ and $\delta>0$, then its spectrum is real. If $I$ is the identical mapping and $\tilde{\lambda}^{\delta}\in\mathbb{C}$ is non-real, then $\left(\tilde{\lambda}^{\delta} I-\tilde{\mathscr{H}}^{\delta}(\tilde{\mathbf{k}})\right)^{-1}$ exists. We obtain from Rellich–Kondrachov theorem that $H^{2}(\mathbb{T}^{2d})$ is compactly embedded in $L^{2}(\mathbb{T}^{2d})$. Then we see that $\left(\tilde{\lambda}^{\delta} I-\tilde{\mathscr{H}}^{\delta}(\tilde{\mathbf{k}})\right)^{-1}$ maps $L^{2}(\mathbb{T}^{2d})$ into $H^{2}(\mathbb{T}^{2d})$ and is compact. Riesz-Schauder theory implies that the spectrum of $\left(\tilde{\lambda}^{\delta} I-\tilde{\mathscr{H}}^{\delta}(\tilde{\mathbf{k}})\right)^{-1}$ is discrete and consists of eigenvalues of ﬁnite multiplicity. Therefore, the spectrum of $\tilde{\mathscr{H}}^{\delta}(\tilde{\mathbf{k}})$ on the domain $H^{2}(\mathbb{T}^{2d})$ is real and also discrete,  and its eigenvalues have finite multiplicity. We complete the proof.
\end{proof}

 Proposition \ref{prop:disrete_po} tells that the spectrum of $\tilde{\mathscr{H}}^{\delta}(\tilde{\mathbf{k}})$ on the domain $H^{2}(\mathbb{T}^{2d})$ consists of real discrete point spectrum, i.e.,
\begin{equation*}
    \sigma(\tilde{\mathscr{H}}^{\delta}(\tilde{\mathbf{k}}))=\{\tilde{\lambda}^{\delta}_{j}(\tilde{\mathbf{k}}):\tilde{\lambda}^{\delta}_{1}(\tilde{\mathbf{k}})\leqslant \cdots \leqslant \tilde{\lambda}^{\delta}_{n}(\tilde{\mathbf{k}})\leqslant\cdots\}.
\end{equation*}
The standard perturbation theory (see, e.g., \cite{kato2013perturbation}) shows that each `band' function $\tilde{\lambda}^{\delta}_{j}(\tilde{\mathbf{k}})$ is continuous with respect to $\tilde{\mathbf{k}}\in \tilde{\Gamma}^{*}$, which together with Proposition \ref{prop:spec_dec} implies that the spectrum of the regularized operator $\tilde{\mathscr{H}}^{\delta}$ is absolutely continuous.

Consider the following regularized model: given $\delta>0$, for some $\tilde{\lambda}^{\delta}\in\mathbb{R}$, find the solution (generalized eigenfunction) $\tilde{u}^{\delta}$ satisfying
\begin{align}\label{eq:re_ehn}
    \tilde{\mathscr{H}}^{\delta}\tilde{u}^{\delta}=\tilde{\lambda}^{\delta}\tilde{u}^{\delta}.
\end{align}

The following proposition tells the relationship between the Bloch solutions to the equation (\ref{eq:re_ehn}) and the spectrum of the regularized elliptic operator $\tilde{\mathscr{H}}^{\delta}$ on the domain $H^{2}(\mathbb{R}^{2d})$ for $\delta>0$.
\begin{proposition}\label{thm:thm_bloch}
   Given $\delta>0$, $\tilde{\lambda}^{\delta}\in\sigma(\tilde{\mathscr{H}}^{\delta})$ is equivalent to the existence of $\tilde{\mathbf{k}}\in\tilde{\Gamma}^{*}$ and the Bloch solution $\tilde{u}^{\delta}\in H^{2}_{\mathrm{loc}}(\mathbb{R}^{2d})$ of (\ref{eq:re_ehn}) satisfying  
    \begin{equation}\label{eq:bloch}
        \tilde{u}^{\delta}(\tilde{\mathbf{r}}+\tilde{\tau})=e^{\mathrm{i}\tilde{\mathbf{k}}\cdot \tilde{\tau}}\tilde{u}^{\delta}(\tilde{\mathbf{r}}), \quad\forall \tilde{\tau}\in\tilde{\mathcal{R}}.
    \end{equation}
That is, there exists $\tilde{v}^{\delta}\in H^{2}(\mathbb{T}^{2d})$ such that 
\begin{align}\label{eq:blochhhhh}
    \tilde{u}^{\delta}(\tilde{\mathbf{r}})=e^{\mathrm{i}\tilde{\mathbf{k}}\cdot\tilde{\mathbf{r}}}\tilde{v}^{\delta}(\tilde{\mathbf{r}}).
\end{align}
\end{proposition}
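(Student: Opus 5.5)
The plan is to reduce everything to the fiber operators through Proposition \ref{prop:spec_dec} and Proposition \ref{prop:disrete_po}, and then to pass between $H^{2}(\mathbb{T}^{2d})$ and $H_{\tilde{\mathbf{k}}}^{2}(\tilde{\Gamma})$ using the identification (\ref{eq:discr}).

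For the direction ``$\Rightarrow$'', let $\tilde{\lambda}^{\delta}\in\sigma(\tilde{\mathscr{H}}^{\delta})$. By Proposition \ref{prop:spec_dec} there is $\tilde{\mathbf{k}}\in\tilde{\Gamma}^{*}$ with $\tilde{\lambda}^{\delta}\in\sigma(\tilde{\mathscr{H}}^{\delta}(\tilde{\mathbf{k}}))$. Proposition \ref{prop:disrete_po} says this spectrum is purely discrete. Hence $\tilde{\lambda}^{\delta}$ is an eigenvalue, and there is a nonzero $\tilde{v}^{\delta}\in H^{2}(\mathbb{T}^{2d})$ with $\tilde{\mathscr{H}}^{\delta}(\tilde{\mathbf{k}})\tilde{v}^{\delta}=\tilde{\lambda}^{\delta}\tilde{v}^{\delta}$. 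We extend $\tilde{v}^{\delta}$ $\tilde{\mathcal{R}}$-periodically to an $H^{2}_{\mathrm{loc}}(\mathbb{R}^{2d})$ function and set $\tilde{u}^{\delta}:=e^{\mathrm{i}\tilde{\mathbf{k}}\cdot\tilde{\mathbf{r}}}\tilde{v}^{\delta}$. The conjugation identity $\tilde{\mathscr{H}}^{\delta}(\tilde{\mathbf{k}})=e^{-\mathrm{i}\tilde{\mathbf{k}}\cdot\tilde{\mathbf{r}}}\circ\tilde{\mathscr{H}}^{\delta}\circ e^{\mathrm{i}\tilde{\mathbf{k}}\cdot\tilde{\mathbf{r}}}$ holds as an identity of differential operators acting on $H^{2}_{\mathrm{loc}}$ functions. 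It therefore gives $\tilde{\mathscr{H}}^{\delta}\tilde{u}^{\delta}=\tilde{\lambda}^{\delta}\tilde{u}^{\delta}$ in $L^{2}_{\mathrm{loc}}(\mathbb{R}^{2d})$. Periodicity of $\tilde{v}^{\delta}$ gives (\ref{eq:bloch}) directly, and (\ref{eq:blochhhhh}) holds by construction.

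For the direction ``$\Leftarrow$'', suppose $\tilde{u}^{\delta}\in H^{2}_{\mathrm{loc}}(\mathbb{R}^{2d})$ is a nontrivial solution of (\ref{eq:re_ehn}) satisfying (\ref{eq:bloch}) for some $\tilde{\mathbf{k}}$. Define $\tilde{v}^{\delta}:=e^{-\mathrm{i}\tilde{\mathbf{k}}\cdot\tilde{\mathbf{r}}}\tilde{u}^{\delta}$. This function is $\tilde{\mathcal{R}}$-periodic and lies in $H^{2}_{\mathrm{loc}}$, so its restriction belongs to $H^{2}(\mathbb{T}^{2d})$; this is exactly (\ref{eq:discr}). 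The same conjugation shows $\tilde{\mathscr{H}}^{\delta}(\tilde{\mathbf{k}})\tilde{v}^{\delta}=\tilde{\lambda}^{\delta}\tilde{v}^{\delta}$. Thus $\tilde{\lambda}^{\delta}$ is an eigenvalue of the fiber operator, and Proposition \ref{prop:spec_dec} places it in $\sigma(\tilde{\mathscr{H}}^{\delta})$. If $\tilde{\mathbf{k}}$ is not in $\tilde{\Gamma}^{*}$, we first shift it by a reciprocal lattice vector $\tilde{\tau}^{*}\in\tilde{\mathcal{R}}^{*}$. Condition (\ref{eq:bloch}) is unchanged by such a shift, because $e^{\mathrm{i}\tilde{\tau}^{*}\cdot\tilde{\tau}}=1$ for $\tilde{\tau}\in\tilde{\mathcal{R}}$.

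The main obstacle I expect is bookkeeping rather than analysis: making sure the Bloch solution is nontrivial and genuinely in $H^{2}_{\mathrm{loc}}$. Two points need care. First, the periodic extension of an $H^{2}(\mathbb{T}^{2d})$ function must be $H^{2}$ across cell boundaries. Second, the word ``solution'' should be read in the strong ($L^{2}_{\mathrm{loc}}$) sense, so that the conjugation identity can be applied pointwise almost everywhere. Both follow from the definition of $H^{2}(\mathbb{T}^{2d})$ as functions on the torus. The essential input is the discreteness from Proposition \ref{prop:disrete_po}: it turns spectral membership into an honest eigenfunction. Without it we would only obtain Weyl sequences, not a Bloch solution.
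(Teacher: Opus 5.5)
Your proposal is correct and follows essentially the same route as the paper. In both directions you reduce, through Proposition~\ref{prop:spec_dec} and the discreteness in Proposition~\ref{prop:disrete_po}, to an eigenpair of a fiber operator. The differences are minor: you work with the conjugated operator $\tilde{\mathscr{H}}^{\delta}(\tilde{\mathbf{k}})$ on $H^{2}(\mathbb{T}^{2d})$ and reach $\mathbb{R}^{2d}$ through the conjugation identity on $H^{2}_{\mathrm{loc}}$, whereas the paper uses $\tilde{\mathscr{H}}^{\delta}|_{H^{2}_{\tilde{\mathbf{k}}}(\tilde{\Gamma})}$ and extends the equation off the unit cell via commutation with the lattice translations $T_{\tilde{\tau}}$. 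Your explicit requirement $\tilde{u}^{\delta}\neq 0$, which the paper leaves implicit, is genuinely needed for the converse.
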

\begin{proof} If $\tilde{\lambda}^{\delta}\in\sigma(\tilde{\mathscr{H}}^{\delta})$, then we obtain from Proposition \ref{prop:spec_dec} that there exists $\tilde{\mathbf{k}}\in\tilde{\Gamma}^{*}$ satisfying
\begin{align*}
    \tilde{\lambda}^{\delta}\in\sigma(\tilde{\mathscr{H}}^{\delta}(\tilde{\mathbf{k}}))=\sigma(\tilde{\mathscr{H}}^{\delta}|_{H_{\tilde{\mathbf{k}}}^{2}(\tilde{\Gamma})}).
\end{align*}
By Proposition \ref{prop:disrete_po}, we see that $\tilde{\lambda}^{\delta}$ is the eigenvalue of $\tilde{\mathscr{H}}^{\delta}|_{H_{\tilde{\mathbf{k}}}^{2}(\tilde{\Gamma})}$. Thus, there exists the corresponding eigenfunction $\tilde{u}^{\delta}\in H_{\tilde{\mathbf{k}}}^{2}(\tilde{\Gamma})$ satisfying (\ref{eq:bloch}) and
\begin{equation*}
    \tilde{\mathscr{H}}^{\delta}\tilde{u}^{\delta}(\tilde{\mathbf{r}})=\tilde{\lambda}^{\delta}\tilde{u}^{\delta}(\tilde{\mathbf{r}}), \quad \forall\tilde{\mathbf{r}}\in \tilde{\Gamma}. 
\end{equation*}

Next, we show that $\tilde{u}^{\delta}\in H_{\tilde{\mathbf{k}}}^{2}(\tilde{\Gamma})$ can be extended to $\mathbb{R}^{2d}$ and satisfy (\ref{eq:re_ehn}).

Let $T_{\tilde{\tau}} (\tilde{\tau}\in\tilde{\mathcal{R}})$ be translation operators deﬁned by
\begin{align*}
    T_{\tilde{\tau}}f(\tilde{\mathbf{r}}):=f(\tilde{\mathbf{r}}+\tilde{\tau}).
\end{align*}
Then due to the periodicity of the potential $V_{1}+V_{2}$, we see that $\tilde{\mathscr{H}}^{\delta}$ and $T_{\tilde{\tau}} (\tilde{\tau}\in\tilde{\mathcal{R}})$ commute among each other, i.e.,
\begin{align*}
    T_{\tilde{\tau}}\circ\tilde{\mathscr{H}}^{\delta}=\tilde{\mathscr{H}}^{\delta}\circ T_{\tilde{\tau}},\quad \forall\tilde{\tau}\in\tilde{\mathcal{R}}.
\end{align*}
For $\tilde{\mathbf{r}}\in\mathbb{R}^{2d}$, there exists $\tilde{\tau}\in\tilde{\mathcal{R}}$ such that $\tilde{\mathbf{r}}+\tilde{\tau}\in\tilde{\Gamma}$. Then we arrive at 
\begin{equation*}
    \tilde{\mathscr{H}}^{\delta}\tilde{u}^{\delta}(\tilde{\mathbf{r}})=T_{\tilde{\tau}}^{-1}\tilde{\mathscr{H}}^{\delta}T_{\tilde{\tau}}\tilde{u}^{\delta}(\tilde{\mathbf{r}})=T_{\tilde{\tau}}^{-1}\tilde{\mathscr{H}}^{\delta}\tilde{u}^{\delta}(\tilde{\mathbf{r}}+\tilde{\tau})=T_{\tilde{\tau}}^{-1}\tilde{\lambda}^{\delta}\tilde{u}^{\delta}(\tilde{\mathbf{r}}+\tilde{\tau})=\tilde{\lambda}^{\delta}\tilde{u}^{\delta}(\tilde{\mathbf{r}}),
\end{equation*}
which implies $\tilde{u}^{\delta}$ is the Bloch solution to (\ref{eq:re_ehn}).

If there exists $\tilde{\mathbf{k}}\in\tilde{\Gamma}^{*}$ and the Bloch solution $\tilde{u}^{\delta}\in H^{2}_{\mathrm{loc}}(\mathbb{R}^{2d})$ of (\ref{eq:re_ehn}) with some $\tilde{\lambda}^{\delta}\in\mathbb{R}$, then $\tilde{u}^{\delta}$ satisfies (\ref{eq:bloch}) and
\begin{align*}
    \tilde{u}^{\delta}\in H_{\tilde{\mathbf{k}}}^{2}(\tilde{\Gamma}).
\end{align*}
Therefore, $(\tilde{\lambda}^{\delta}, \tilde{u}^{\delta})$ is the eigenpair of $\tilde{\mathscr{H}}^{\delta}|_{H_{\tilde{\mathbf{k}}}^{2}(\tilde{\Gamma})}$. We conclude from Proposition \ref{prop:spec_dec} that
\begin{align*}
    \tilde{\lambda}^{\delta}\in \sigma(\tilde{\mathscr{H}}^{\delta}|_{H_{\tilde{\mathbf{k}}}^{2}(\tilde{\Gamma})})=\sigma(\tilde{\mathscr{H}}^{\delta}(\tilde{\mathbf{k}}))\subset\bigcup_{\tilde{\mathbf{k}}\in\tilde{\Gamma}^{*}}\sigma(\tilde{\mathscr{H}}^{\delta}(\tilde{\mathbf{k}}))=\sigma(\tilde{\mathscr{H}}^{\delta}).
\end{align*}
This completes the proof.
\end{proof}

\subsection{Relationship between $\sigma_{\mathcal{G}}(\tilde{\mathscr{H}})$ and $\sigma(\tilde{\mathscr{H}}^{\delta})$}
In this section, we study the relationship between the spectra of the extended Schrödinger operator $\tilde{\mathscr{H}}$ and the regularized Schrödinger operator $\tilde{\mathscr{H}}^{\delta}$.
The following lemma plays a crucial role in our analysis (see  \cite{de2008intermediate}).
\begin{lemma}\label{lemma:limit}
Let $\mathscr{L}_{n} (n\geqslant1)$ and $\mathscr{L}$ be self-adjoint operators acting on the Hilbert space $H$. If there holds for the corresponding resolvents that
\begin{align*}
    R(\mathrm{i}; \mathscr{L}_{n})\xrightarrow{s}R(\mathrm{i}; \mathscr{L}),
\end{align*}
and $\lambda\in\sigma(\mathscr{L})$,
then there exists a sequence $\{\lambda_{n}\}_{n\geqslant1}$ with $\lambda_{n}\in\sigma(\mathscr{L}_{n})$ such that
\begin{align*}
    \lambda_{n}\rightarrow\lambda,
\end{align*}
where 
\begin{align*}
     R(\mathrm{i}, \mathscr{L}_{n}):=(\mathscr{L}_{n}-\mathrm{i}I)^{-1},\quad R(\mathrm{i}, \mathscr{L}):=(\mathscr{L}-\mathrm{i}I)^{-1}.
\end{align*}
\end{lemma}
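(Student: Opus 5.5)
The plan is to prove Lemma~\ref{lemma:limit} by contradiction through the spectral theorem, using the strong resolvent convergence to produce approximate eigenvectors for $\mathscr{L}_{n}$ near $\lambda$. Suppose the conclusion fails. Then $\operatorname{dist}(\lambda,\sigma(\mathscr{L}_{n}))$ does not tend to $0$, so there are $\varepsilon>0$ and a subsequence (still indexed by $n$) with $(\lambda-\varepsilon,\lambda+\varepsilon)\cap\sigma(\mathscr{L}_{n})=\emptyset$. Once we know that $\operatorname{dist}(\lambda,\sigma(\mathscr{L}_n))\to0$, choosing $\lambda_n\in\sigma(\mathscr{L}_n)$ to nearly attain the distance gives the required sequence. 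Note that $\sigma(\mathscr{L}_n)\neq\emptyset$ because the operators are self-adjoint.

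\textbf{Step 1: pass from resolvents at $\mathrm{i}$ to bounded functions.} By the Stone--Weierstrass theorem, linear combinations of $(x-\mathrm{i})^{-1}$, $(x+\mathrm{i})^{-1}$ and their products are dense in $C_0(\mathbb{R})$. Strong convergence of $R(\mathrm{i};\mathscr{L}_n)$ also gives strong convergence of the adjoints $R(-\mathrm{i};\mathscr{L}_n)$, because
\begin{align*}
\Vert (R_n^{*}-R^{*})u\Vert^{2}=\langle (R_n-R)u,(R_n-R)u\rangle+\text{(cross terms)},
\end{align*}
and these are controlled by the resolvent identity together with the normality of $R_n$. Alternatively, one can use the standard fact that strong resolvent convergence at one nonreal point implies it at all nonreal points. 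Since products of uniformly bounded strongly convergent sequences converge strongly, we obtain $f(\mathscr{L}_n)\xrightarrow{s}f(\mathscr{L})$ for every $f\in C_0(\mathbb{R})$, using the uniform bound $\Vert f(\mathscr{L}_n)\Vert\le\Vert f\Vert_\infty$ and an $\varepsilon/3$ argument.

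\textbf{Step 2: derive the contradiction.} Pick a continuous $f\ge0$ supported in $(\lambda-\varepsilon,\lambda+\varepsilon)$ with $f(\lambda)=1$. Because the support of $f$ misses $\sigma(\mathscr{L}_n)$, the spectral theorem gives $f(\mathscr{L}_n)=0$ for all $n$ in the subsequence. On the other hand, $\lambda\in\sigma(\mathscr{L})$ and $f>0$ near $\lambda$, so the spectral projection of $\mathscr{L}$ onto $\{f>1/2\}$ is nonzero; equivalently, $f(\mathscr{L})\neq0$. Choosing $u$ with $f(\mathscr{L})u\neq0$ contradicts $0=f(\mathscr{L}_n)u\to f(\mathscr{L})u$.

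The main obstacle is Step 1, namely justifying convergence of the functional calculus from convergence at the single point $\mathrm{i}$. I would handle it by citing the standard result (Reed--Simon, Theorem VIII.20) or by the density argument sketched above. Everything else is routine spectral calculus.
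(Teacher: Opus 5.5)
Your proof is correct. The paper gives no proof of Lemma \ref{lemma:limit} and only cites \cite{de2008intermediate}. Your argument is the standard textbook one (e.g.\ Reed--Simon, Vol.~I, Thm.~VIII.24(a)): upgrade strong resolvent convergence to $f(\mathscr{L}_n)\to f(\mathscr{L})$ strongly for $f\in C_0(\mathbb{R})$, then take a bump $f$ at $\lambda$ with $f(\mathscr{L}_n)=0$ but $f(\mathscr{L})\neq0$.

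You correctly flag the one delicate point. The hypothesis is only at $\mathrm{i}$, while Stone--Weierstrass needs the conjugate $(x+\mathrm{i})^{-1}$, i.e.\ strong convergence of $R(\mathrm{i};\mathscr{L}_n)^{*}$. Your displayed justification is loose, since the resolvent identity plays no role. The clean argument has two parts:
\begin{itemize}
\item $R_n^{*}u\rightharpoonup R^{*}u$ weakly always;
\item normality of $R_n$ and of $R$ gives $\Vert R_n^{*}u\Vert=\Vert R_nu\Vert\to\Vert Ru\Vert=\Vert R^{*}u\Vert$.
\end{itemize}
Weak convergence plus convergence of norms is strong convergence.

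For comparison, there is a shorter route for this lemma alone. It uses convergence at $\mathrm{i}$ only, needs no functional calculus, and fits the paper's Lemma \ref{lem:app_p}.
\begin{itemize}
\item If $(\lambda-\varepsilon,\lambda+\varepsilon)\cap\sigma(\mathscr{L}_n)=\emptyset$, then $\Vert(\mathscr{L}_n-\lambda)v\Vert\geqslant\varepsilon\Vert v\Vert$ on the domain of $\mathscr{L}_n$.
\item Take $v=R(\mathrm{i};\mathscr{L}_n)u$ and use $(\mathscr{L}_n-\lambda)R(\mathrm{i};\mathscr{L}_n)=I+(\mathrm{i}-\lambda)R(\mathrm{i};\mathscr{L}_n)$. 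The bound becomes $\Vert u+(\mathrm{i}-\lambda)R(\mathrm{i};\mathscr{L}_n)u\Vert\geqslant\varepsilon\Vert R(\mathrm{i};\mathscr{L}_n)u\Vert$.
\item This inequality survives the strong limit. It then says $\Vert(\mathscr{L}-\lambda)v\Vert\geqslant\varepsilon\Vert v\Vert$ for every $v$ in the domain of $\mathscr{L}$, which is the range of $R(\mathrm{i};\mathscr{L})$.
\item Hence no Weyl sequence exists for $\lambda$, so $\lambda\notin\sigma(\mathscr{L})$.
\end{itemize}
Your route costs more but yields more: strong convergence of $f(\mathscr{L}_n)$, which is what one needs for convergence of spectral projections. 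In both versions, stating the conclusion as $\operatorname{dist}(\lambda,\sigma(\mathscr{L}_n))\to0$ also justifies passing from sequences to the continuous parameter $\delta\to0^{+}$ used in Theorem \ref{thm:spectttt}.
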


The following theorem shows that for any $\tilde{\mathbf{k}}\in \tilde{\Gamma}^{*}$, the spectrum of the degenerate elliptic operator $\tilde{\mathscr{H}}(\tilde{\mathbf{k}})$ on the domain $\mathcal{G}(\mathbb{T}^{2d})$ can be approximated by the spectra of a family of regularized elliptic operators $\left\{\tilde{\mathscr{H}}^{\delta}(\tilde{\mathbf{k}})\right\}_{\delta>0}$ on the domain $H^{2}(\mathbb{T}^{2d})$.

\begin{theorem}\label{thm:spectttt}
For any $\tilde{\mathbf{k}}\in \tilde{\Gamma}^{*}$ and 
$\tilde{\lambda}\in\sigma_{\mathcal{G}}(\tilde{\mathscr{H}}(\tilde{\mathbf{k}}))$, there exists  $\tilde{\lambda}^{\delta}\in\sigma(\tilde{\mathscr{H}}^{\delta}(\tilde{\mathbf{k}}))$ such that
\begin{align*}
\tilde{\lambda}=\lim_{\delta\rightarrow0^{+}}\tilde{\lambda}^{\delta}.
\end{align*}
\end{theorem}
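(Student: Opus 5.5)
The plan is to apply Lemma \ref{lemma:limit} on the Hilbert space $H=L^{2}(\mathbb{T}^{2d})$. We take $\mathscr{L}=\tilde{\mathscr{H}}(\tilde{\mathbf{k}})$ on $\mathcal{G}(\mathbb{T}^{2d})$ and $\mathscr{L}_{n}=\tilde{\mathscr{H}}^{\delta_{n}}(\tilde{\mathbf{k}})$ on $H^{2}(\mathbb{T}^{2d})$, for an arbitrary sequence $\delta_{n}\to0^{+}$. All these operators are self-adjoint, so it suffices to prove
\begin{align*}
R(\mathrm{i};\tilde{\mathscr{H}}^{\delta}(\tilde{\mathbf{k}}))\xrightarrow{s}R(\mathrm{i};\tilde{\mathscr{H}}(\tilde{\mathbf{k}}))\quad\text{as }\delta\to0^{+}.
\end{align*}
Once this holds along every sequence $\delta_n\to 0^+$, the lemma gives $\tilde{\lambda}^{\delta_n}\in\sigma(\tilde{\mathscr{H}}^{\delta_n}(\tilde{\mathbf{k}}))$ with $\tilde{\lambda}^{\delta_n}\to\tilde{\lambda}$. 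To obtain a family indexed by continuous $\delta$, I would choose $\tilde{\lambda}^{\delta}$ as a point of the closed set $\sigma(\tilde{\mathscr{H}}^{\delta}(\tilde{\mathbf{k}}))$ nearest to $\tilde{\lambda}$. Then $\operatorname{dist}(\tilde{\lambda},\sigma(\tilde{\mathscr{H}}^{\delta}(\tilde{\mathbf{k}})))\to0$, because otherwise some sequence $\delta_n\to0^{+}$ would contradict the lemma.

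For the strong resolvent convergence I would use the standard core criterion (e.g.\ Reed--Simon, Thm.~VIII.25). If there is a common core $D$ for $\mathscr{L}$ such that $\mathscr{L}_{n}\varphi\to\mathscr{L}\varphi$ for each $\varphi\in D$, then strong resolvent convergence follows. I take $D$ to be the trigonometric polynomials on $\mathbb{T}^{2d}$, i.e.\ finite combinations of $e^{\mathrm{i}\tilde{\mathbf{G}}\cdot\tilde{\mathbf{r}}}$ with $\tilde{\mathbf{G}}\in\tilde{\mathcal{R}}^{*}$. On $D$ the difference is explicit:
\begin{align*}
\left(\tilde{\mathscr{H}}^{\delta}(\tilde{\mathbf{k}})-\tilde{\mathscr{H}}(\tilde{\mathbf{k}})\right)\varphi=\delta\Big(-\tfrac12\sum_{i}(\partial_{\mathbf{r}_i}-\partial_{\mathbf{r}_i'})^{2}-\mathrm{i}(\mathbf{k}-\mathbf{k}')\cdot(\nabla_{\mathbf{r}}-\nabla_{\mathbf{r}'})+\tfrac12|\mathbf{k}-\mathbf{k}'|^{2}\Big)\varphi .
\end{align*}
The operator in parentheses maps a trigonometric polynomial to another one with bounded $L^{2}$ norm, so the difference is $O(\delta)$ in $L^{2}$ and tends to $0$.

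The step I expect to be the main obstacle is verifying that $D$ is a core for the closure $\tilde{\mathscr{H}}(\tilde{\mathbf{k}})$. By construction, $\tilde{\mathscr{H}}(\tilde{\mathbf{k}})$ is the closure of the operator on $H^{2}(\mathbb{T}^{2d})$, so it is enough to show that $D$ is dense in $H^{2}(\mathbb{T}^{2d})$ for the graph norm. The potential $V_{1}+V_{2}$ is bounded, and the constant-coefficient part has a symbol bounded by $C(1+|\tilde{\mathbf{G}}|^{2})$. Hence the graph norm is dominated by $C\Vert\cdot\Vert_{H^{2}(\mathbb{T}^{2d})}$, and truncating the Fourier series of $v\in H^{2}(\mathbb{T}^{2d})$ converges in $H^{2}$, and therefore in graph norm. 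This shows $D$ is a core for $\tilde{\mathscr{H}}(\tilde{\mathbf{k}})$, and also that $D\subset H^{2}(\mathbb{T}^{2d})$ lies in the domain of each $\tilde{\mathscr{H}}^{\delta}(\tilde{\mathbf{k}})$.

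An alternative route, if preferred, is the second resolvent identity:
\begin{align*}
R(\mathrm{i};\mathscr{L}_\delta)-R(\mathrm{i};\mathscr{L})=-R(\mathrm{i};\mathscr{L}_\delta)(\mathscr{L}_\delta-\mathscr{L})R(\mathrm{i};\mathscr{L}).
\end{align*}
Applied to $f=(\mathscr{L}-\mathrm{i})\varphi$ with $\varphi\in D$, and using the uniform bound $\Vert R(\mathrm{i};\mathscr{L}_\delta)\Vert\le1$, this gives convergence on the set $(\mathscr{L}-\mathrm{i})D$. That set is dense in $L^{2}(\mathbb{T}^{2d})$ because $D$ is a core, and an $\varepsilon/3$ argument then extends the convergence to all of $L^{2}(\mathbb{T}^{2d})$.
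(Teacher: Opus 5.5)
Your proof is correct and has the same skeleton as the paper's: strong resolvent convergence $R(\mathrm{i};\tilde{\mathscr{H}}^{\delta}(\tilde{\mathbf{k}}))\to R(\mathrm{i};\tilde{\mathscr{H}}(\tilde{\mathbf{k}}))$ from the second resolvent identity, the uniform bound $\Vert R(\mathrm{i};\cdot)\Vert\leqslant 1$ and a density argument, followed by Lemma~\ref{lemma:limit}. The real difference is the dense set, and it matters.

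The paper applies the identity to $f\in C^{\infty}(\mathbb{T}^{2d})$. That requires $R(\mathrm{i};\tilde{\mathscr{H}}(\tilde{\mathbf{k}}))f\in H^{2}(\mathbb{T}^{2d})$, which the paper justifies by asserting that this resolvent maps $C^{\infty}(\mathbb{T}^{2d})$ into itself. For a degenerate operator with merely continuous potentials there is no elliptic regularity behind that claim, and it fails in general. Take $d=1$, $V_{2}\equiv0$, $\tilde{\mathbf{k}}=0$, $f\equiv1$. Then the resolvent gives $g(\mathbf{r})$ with $g''=2((V_{1}-\mathrm{i})g-1)$, so $g\notin C^{3}$ near any point where $g\neq0$ unless $V_{1}$ is $C^{1}$ there.

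You instead take $f\in(\mathscr{L}-\mathrm{i})D$. Then $R(\mathrm{i};\mathscr{L})f=\varphi\in D\subset H^{2}(\mathbb{T}^{2d})$ by construction. Density of $(\mathscr{L}-\mathrm{i})D$ follows from $D$ being a core. That in turn follows from the definition of $\tilde{\mathscr{H}}(\tilde{\mathbf{k}})$ as the closure of its restriction to $H^{2}(\mathbb{T}^{2d})$, together with $\Vert\tilde{\mathscr{H}}(\tilde{\mathbf{k}})v\Vert\leqslant C\Vert v\Vert_{H^{2}}$. Your route therefore needs no regularity theory for the degenerate operator at all. This is exactly the content of the Reed--Simon core criterion you cite, and your second route simply unpacks it.

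Your nearest-point selection of $\tilde{\lambda}^{\delta}$ also fills a small gap. Lemma~\ref{lemma:limit} only provides sequences, and your selection passes from these to a family indexed by $\delta\to0^{+}$, a step the paper leaves implicit.
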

\begin{proof}
Since both $\tilde{\mathscr{H}}(\tilde{\mathbf{k}})|_{\mathcal{G}(\mathbb{T}^{2d})}$ and $\tilde{\mathscr{H}}(\tilde{\mathbf{k}})^{\delta}|_{H^{2}(\mathbb{T}^{2d})}$ are self-adjoint for any $\tilde{\mathbf{k}}\in\tilde{\Gamma}^{*}$, 
$\sigma_{\mathcal{G}}(\tilde{\mathscr{H}}(\tilde{\mathbf{k}}))\subset\mathbb{R}$ and
\begin{align*}
    \operatorname{dist}(\mathrm{i}, \sigma_{\mathcal{G}}(\tilde{\mathscr{H}}(\tilde{\mathbf{k}})))=\inf_{\lambda\in\sigma_{\mathcal{G}}(\tilde{\mathscr{H}}(\tilde{\mathbf{k}}))}\operatorname{dist}(\mathrm{i}, \sigma_{\mathcal{G}}(\tilde{\mathscr{H}}(\tilde{\mathbf{k}})))\geqslant1,\quad \forall \tilde{\mathbf{k}}\in\tilde{\Gamma}^{*},
\end{align*}
which implies that
\begin{align}\label{ine:modh}
    \left\Vert R(\mathrm{i}; \tilde{\mathscr{H}}(\tilde{\mathbf{k}})) \right\Vert_{L^{2}(\mathbb{T}^{2d})\rightarrow L^{2}(\mathbb{T}^{2d})}\leqslant\frac{1}{ \operatorname{dist}(\mathrm{i}, \sigma_{\mathcal{G}}(\tilde{\mathscr{H}}(\tilde{\mathbf{k}})))}\leqslant1, \quad \forall \tilde{\mathbf{k}}\in\tilde{\Gamma}^{*}.
\end{align}
Similarly, we obtain
\begin{align}\label{ine:modhe}
    \left\Vert R(\mathrm{i}; \tilde{\mathscr{H}}^{\delta}(\tilde{\mathbf{k}})) \right\Vert_{L^{2}(\mathbb{T}^{2d})\rightarrow L^{2}(\mathbb{T}^{2d})}\leqslant1, \quad \forall \tilde{\mathbf{k}}\in\tilde{\Gamma}^{*}.
\end{align}

Note that for $f\in C^{\infty}(\mathbb{T}^{2d})$ and $\tilde{\mathbf{k}}\in\tilde{\Gamma}^{*}$, we have
\begin{align*}
    \left\Vert \tilde{\mathscr{H}}^{\delta}(\tilde{\mathbf{k}})f-\tilde{\mathscr{H}}(\tilde{\mathbf{k}})f\right\Vert_{L^{2}(\mathbb{T}^{2d})}=&\left\Vert\frac{\delta}{2}\left(\mathrm{i}\left(\nabla_{\mathbf{r}}-\nabla_{\mathbf{r}'}\right)-\left(\mathbf{k}-\mathbf{k}'\right)\right)^{2}f\right\Vert_{L^{2}(\mathbb{T}^{2d})}\leqslant C\delta\Vert f\Vert_{H^{2}(\mathbb{T}^{2d})},
\end{align*}
where the constant $C>0$ is independent of $\delta$ and $\tilde{\mathbf{k}}\in\tilde{\Gamma}^{*}$, which together with (\ref{ine:modhe}) and the fact
\begin{align*}
    R(\mathrm{i}; \tilde{\mathscr{H}}(\tilde{\mathbf{k}}))f\in C^{\infty}(\mathbb{T}^{2d}),\quad\forall\tilde{\mathbf{k}}\in\tilde{\Gamma}^{*}
\end{align*}
leads to
\begin{equation}
   \begin{aligned}\label{ine:1_2}
    &\left\Vert R(\mathrm{i}; \tilde{\mathscr{H}}(\tilde{\mathbf{k}}))f-R(\mathrm{i}; \tilde{\mathscr{H}}^{\delta}(\tilde{\mathbf{k}}))f\right\Vert_{L^{2}(\mathbb{T}^{2d})}\\=&\left\Vert R(\mathrm{i}; \tilde{\mathscr{H}}^{\delta}(\tilde{\mathbf{k}}))\left(\tilde{\mathscr{H}}^{\delta}(\tilde{\mathbf{k}})-\tilde{\mathscr{H}}(\tilde{\mathbf{k}})\right)R(\mathrm{i}; \tilde{\mathscr{H}}(\tilde{\mathbf{k}}))f\right\Vert_{L^{2}(\mathbb{T}^{2d})}\\\leqslant&\left\Vert R(\mathrm{i}; \tilde{\mathscr{H}}^{\delta}(\tilde{\mathbf{k}}))\right\Vert_{L^{2}(\mathbb{T}^{2d})\rightarrow L^{2}(\mathbb{T}^{2d})}\left\Vert\left(\tilde{\mathscr{H}}^{\delta}(\tilde{\mathbf{k}})-\tilde{\mathscr{H}}(\tilde{\mathbf{k}})\right)R(\mathrm{i}; \tilde{\mathscr{H}}(\tilde{\mathbf{k}}))f\right\Vert_{L^{2}(\mathbb{T}^{2d})}
    \\\leqslant& C\delta \left\Vert R(\mathrm{i}; \tilde{\mathscr{H}}(\tilde{\mathbf{k}}))f\right\Vert_{H^{2}(\mathbb{T}^{2d})}.
\end{aligned} 
\end{equation}

We see that $C^{\infty}(\mathbb{T}^{2d})$ is dense in $L^{2}(\mathbb{T}^{2d})$. Then, for any $g\in L^{2}(\mathbb{T}^{2d})$, there exists a sequence $\{g_{n}\}_{n\geqslant1}\subset C^{\infty}(\mathbb{T}^{2d})$ such that
\begin{align*}
    \Vert g_{n}-g\Vert_{L^{2}(\mathbb{T}^{2d})}\rightarrow0,\quad \text{as}\quad n\rightarrow\infty,
\end{align*}
which together with (\ref{ine:modh}), (\ref{ine:modhe}) and (\ref{ine:1_2}) implies that 
\begin{align*}
    &\left\Vert R(\mathrm{i}; \tilde{\mathscr{H}}(\tilde{\mathbf{k}}))g-R(\mathrm{i}; \tilde{\mathscr{H}}^{\delta}(\tilde{\mathbf{k}}))g\right\Vert_{L^{2}(\mathbb{T}^{2d})}\\\leqslant&\left\Vert \left(R(\mathrm{i}; \tilde{\mathscr{H}}(\tilde{\mathbf{k}}))-R(\mathrm{i}; \tilde{\mathscr{H}}^{\delta}(\tilde{\mathbf{k}}))\right)(g-g_{n})\right\Vert_{L^{2}(\mathbb{T}^{2d})}\\&+\left\Vert \left(R(\mathrm{i}; \tilde{\mathscr{H}}(\tilde{\mathbf{k}}))-R(\mathrm{i}; \tilde{\mathscr{H}}^{\delta}(\tilde{\mathbf{k}}))\right)g_{n}\right\Vert_{L^{2}(\mathbb{T}^{2d})}\\\leqslant&2\Vert g_{n}-g\Vert_{L^{2}(\mathbb{T}^{2d})}+C\delta \left\Vert R(\mathrm{i}; \tilde{\mathscr{H}}(\tilde{\mathbf{k}}))g_{n}\right\Vert_{H^{2}(\mathbb{T}^{2d})}\rightarrow0,
\end{align*}
as $\delta\rightarrow0^{+}$ and $n\rightarrow\infty$.
That is, 
\begin{align*}
    R(\mathrm{i}; \tilde{\mathscr{H}}(\tilde{\mathbf{k}})^{\delta})\xrightarrow{s}R(\mathrm{i}; \tilde{\mathscr{H}}(\tilde{\mathbf{k}})),\quad \text{as}\quad \delta\rightarrow0^{+}, \quad\forall \tilde{\mathbf{k}}\in\tilde{\Gamma}^{*},
\end{align*}
which together with Lemma \ref{lemma:limit} completes the proof.
\end{proof}

Theorem \ref{thm:spectttt} tells that $\sigma_{\mathcal{G}}(\tilde{\mathscr{H}}(\tilde{\mathbf{k}}))$ can be approximated by $\sigma(\tilde{\mathscr{H}}^{\delta}(\tilde{\mathbf{k}}))$ as $\delta\rightarrow0^{+}$ for $\tilde{\mathbf{k}}\in\tilde{\Gamma}^{*}$, which together with the relationship between $\sigma_{\mathcal{G}}(\tilde{\mathscr{H}})$ and $\bigcup_{\tilde{\mathbf{k}}\in\tilde{\Gamma}^{*}}\sigma_{\mathcal{G}}(\tilde{\mathscr{H}}(\tilde{\mathbf{k}}))$ shown as in Theorem \ref{prop:dense} yields the following corollary.
\begin{corollary}\label{cor:dens}
For $\tilde{\lambda}\in\sigma_{\mathcal{G}}(\tilde{\mathscr{H}})$, there exists $\tilde{\lambda}^{\delta}\in\sigma(\tilde{\mathscr{H}}^{\delta})$ such that
\begin{align*}
\tilde{\lambda}=\lim_{\delta\rightarrow0^{+}}\tilde{\lambda}^{\delta}.
\end{align*}
\end{corollary}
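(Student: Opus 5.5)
The proof will be a diagonal argument combining Theorem \ref{prop:dense}, Theorem \ref{thm:spectttt} and Proposition \ref{prop:spec_dec}.

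Fix $\tilde{\lambda}\in\sigma_{\mathcal{G}}(\tilde{\mathscr{H}})$. By Theorem \ref{prop:dense}, for each $n\geqslant1$ there are $\tilde{\mathbf{k}}_{n}\in\tilde{\Gamma}^{*}$ and $\tilde{\lambda}_{n}\in\sigma_{\mathcal{G}}(\tilde{\mathscr{H}}(\tilde{\mathbf{k}}_{n}))$ with $|\tilde{\lambda}-\tilde{\lambda}_{n}|<1/n$. For each fixed $n$, Theorem \ref{thm:spectttt} applied at $\tilde{\mathbf{k}}_{n}$ gives a family $\tilde{\lambda}^{\delta}_{n}\in\sigma(\tilde{\mathscr{H}}^{\delta}(\tilde{\mathbf{k}}_{n}))$ with $\tilde{\lambda}^{\delta}_{n}\to\tilde{\lambda}_{n}$ as $\delta\to0^{+}$. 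Hence there is a threshold $\delta_{n}>0$ such that $|\tilde{\lambda}^{\delta}_{n}-\tilde{\lambda}_{n}|<1/n$ for all $\delta\in(0,\delta_{n})$. Replacing $\delta_{n}$ by $\min\{\delta_{1},\ldots,\delta_{n},1/n\}$, we may assume the thresholds are strictly decreasing to $0$.

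Next we define the family indexed by $\delta$. For $\delta\in[\delta_{n+1},\delta_{n})$ set $\tilde{\lambda}^{\delta}:=\tilde{\lambda}^{\delta}_{n}$. For $\delta\geqslant\delta_{1}$, pick any element of $\sigma(\tilde{\mathscr{H}}^{\delta})$; this set is nonempty because $\tilde{\mathscr{H}}^{\delta}$ is self-adjoint. By Proposition \ref{prop:spec_dec},
\begin{align*}
\tilde{\lambda}^{\delta}_{n}\in\sigma(\tilde{\mathscr{H}}^{\delta}(\tilde{\mathbf{k}}_{n}))\subset\bigcup_{\tilde{\mathbf{k}}\in\tilde{\Gamma}^{*}}\sigma(\tilde{\mathscr{H}}^{\delta}(\tilde{\mathbf{k}}))=\sigma(\tilde{\mathscr{H}}^{\delta}),
\end{align*}
so $\tilde{\lambda}^{\delta}\in\sigma(\tilde{\mathscr{H}}^{\delta})$ for every $\delta>0$. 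Moreover, for $\delta<\delta_{n}$ we have $\delta\in[\delta_{m+1},\delta_{m})$ for some $m\geqslant n$, and therefore
\begin{align*}
|\tilde{\lambda}^{\delta}-\tilde{\lambda}|\leqslant|\tilde{\lambda}^{\delta}_{m}-\tilde{\lambda}_{m}|+|\tilde{\lambda}_{m}-\tilde{\lambda}|<2/m\leqslant2/n,
\end{align*}
which gives $\tilde{\lambda}^{\delta}\to\tilde{\lambda}$ as $\delta\to0^{+}$.

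The main subtlety is that the approximating quasi-momentum $\tilde{\mathbf{k}}_{n}$ depends on the accuracy level, so the convergence in Theorem \ref{thm:spectttt} is not uniform in $\tilde{\mathbf{k}}$. The diagonal choice of the thresholds $\delta_{n}$ handles this without requiring any uniformity. A secondary point is that Theorem \ref{thm:spectttt} must be read as producing, for each $\delta$, an element of $\sigma(\tilde{\mathscr{H}}^{\delta}(\tilde{\mathbf{k}}_{n}))$ with the stated limit. Lemma \ref{lemma:limit} is phrased for sequences, so if needed we first apply it along a sequence $\delta\to0^{+}$ and then fill in intermediate values of $\delta$. That filling works because strong resolvent convergence along every sequence gives the spectral approximation $\operatorname{dist}(\tilde{\lambda}_{n},\sigma(\tilde{\mathscr{H}}^{\delta}(\tilde{\mathbf{k}}_{n})))\to0$ as $\delta\to0^{+}$, and we may take $\tilde{\lambda}^{\delta}_{n}$ to be a nearest point, which exists because the spectrum is closed.
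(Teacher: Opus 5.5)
Your proof is correct and follows the same route as the paper. You choose $\tilde{\mathbf{k}}_{n}$ and $\tilde{\lambda}_{n}$ by Theorem~\ref{prop:dense}, approximate each $\tilde{\lambda}_{n}$ on its fiber by Theorem~\ref{thm:spectttt}, and pass to $\sigma(\tilde{\mathscr{H}}^{\delta})$ via Proposition~\ref{prop:spec_dec}.

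Your explicit thresholds $\delta_{n}$ are more careful than the paper's diagonal choice $\tilde{\lambda}^{\delta}:=\tilde{\lambda}^{\delta}_{\delta}$ (that is, $\varepsilon=\delta$). That choice does not by itself justify the limit, because the rate of convergence in Theorem~\ref{thm:spectttt} depends on $\varepsilon$ through $\tilde{\mathbf{k}}_{\varepsilon}$ and $\tilde{\lambda}_{\varepsilon}$. The only slip is that $\min\{\delta_{1},\ldots,\delta_{n},1/n\}$ is merely non-increasing, not strictly decreasing; your interval argument works under that weaker property anyway.
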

\begin{proof}
For $\tilde{\lambda}\in\sigma_{\mathcal{G}}(\tilde{\mathscr{H}})$, we obtain from Theorem \ref{prop:dense} that for $\varepsilon>0$, there exists $\tilde{\mathbf{k}}_{\varepsilon}\in \tilde{\Gamma}^{*}$ and $\tilde{\lambda}_{\varepsilon}\in\sigma(\tilde{\mathscr{H}}(\tilde{\mathbf{k}}_{\varepsilon}))$ such that
\begin{align*}
    \left|\tilde{\lambda}-\tilde{\lambda}_{\varepsilon}\right|<\varepsilon.
\end{align*}
For $\tilde{\lambda}_{\varepsilon}\in\sigma(\tilde{\mathscr{H}}(\tilde{\mathbf{k}}_{\varepsilon}))$, we see from Theorem \ref{thm:spectttt} that there exists $\tilde{\lambda}_{\varepsilon}^{\delta}\in\sigma(\tilde{\mathscr{H}}^{\delta}(\tilde{\mathbf{k}}_{\varepsilon}))$ for $\delta>0$ satisfying
\begin{align*}
    \tilde{\lambda}_{\varepsilon}=\lim_{\delta\rightarrow0^{+}}\tilde{\lambda}_{\varepsilon}^{\delta}.
\end{align*}
Consider the family $\{\tilde{\lambda}_{\delta}^{\delta}\}_{\delta>0}\subset\bigcup_{\tilde{\mathbf{k}}\in\tilde{\Gamma}^{*}}\sigma(\tilde{\mathscr{H}}^{\delta}(\tilde{\mathbf{k}}))=\sigma(\tilde{\mathscr{H}}^{\delta})$. If we set $\tilde{\lambda}^{\delta}:=\tilde{\lambda}_{\delta}^{\delta}$ for $\delta>0$, then
\begin{align*}
\tilde{\lambda}=\lim_{\delta\rightarrow0^{+}}\tilde{\lambda}^{\delta}.
\end{align*}
This completes the proof.
\end{proof}
\section{Spectrum of the Schrödinger operator $\mathscr{H}$}

Building on our previous analysis of the extended Schrödinger operator $\tilde{\mathscr{H}}$, together with the introduction of a regularization technique and the analysis of the corresponding regularized operator $\tilde{\mathscr{H}}^{\delta}$, we now return to the spectrum of the original Schrödinger operator $\mathscr{H}$ for the incommensurate system.
\subsection{Relationship between $\sigma(\mathscr{H})$ and $\sigma(\tilde{\mathscr{H}}^{\delta})$}
In this subsection, we study the relationship between the spectra of the Schrödinger operator $\mathscr{H}$ for the incommensurate system and the regularized Schrödinger operator $\tilde{\mathscr{H}}^{\delta}$. We first establish the relationship between the spectrum of $\mathscr{H}$ and the spectrum of the extended Schrödinger operator $\tilde{\mathscr{H}}$. The lemma below records the standard spectral properties of self-adjoint operators \cite{reed1980methods} and will be used in our analysis.
 
\begin{lemma}\label{lem:app_p}
Let $\mathcal{L}$ be a self-adjoint operator in the Hilbert space $H$ on the domain $\mathcal{D}$. Then, $\lambda\in \sigma(\mathcal{L})$ is equivalent to the fact that there exists a sequence $\{\varphi_{n}\}_{n=1}^{\infty}\subset\mathcal{D}$ such that $\Vert \varphi_{n}\Vert=1$ and
\begin{align*}
    \left\Vert(\mathcal{L}-\lambda I)\varphi_{n}\right\Vert\rightarrow0,\quad \text{as}\quad n\rightarrow\infty.
\end{align*}
\end{lemma}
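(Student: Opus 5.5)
The statement is Weyl's criterion for self-adjoint operators. I will prove the two directions separately, using only the spectral theorem (or the resolvent bound for self-adjoint operators) and the definition of the spectrum.

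\emph{Sufficiency.} Suppose a sequence $\{\varphi_{n}\}\subset\mathcal{D}$ with $\Vert\varphi_{n}\Vert=1$ and $\Vert(\mathcal{L}-\lambda I)\varphi_{n}\Vert\to0$ exists, and argue by contradiction. If $\lambda\notin\sigma(\mathcal{L})$, then $(\mathcal{L}-\lambda I)^{-1}$ is bounded from $H$ onto $\mathcal{D}$. Hence
\begin{align*}
1=\Vert\varphi_{n}\Vert\leqslant\Vert(\mathcal{L}-\lambda I)^{-1}\Vert\,\Vert(\mathcal{L}-\lambda I)\varphi_{n}\Vert\to0,
\end{align*}
which is impossible. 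This direction does not even need self-adjointness.

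\emph{Necessity.} Let $\lambda\in\sigma(\mathcal{L})$. Since $\mathcal{L}$ is self-adjoint, $\lambda\in\mathbb{R}$, and I will use the projection-valued measure $E$ of $\mathcal{L}$. For each $n$, set $P_{n}:=E\big((\lambda-\tfrac1n,\lambda+\tfrac1n)\big)$. I claim $P_{n}\neq0$. If $P_{n}=0$, then $E$ is supported outside the interval, so the function $(\mu-\lambda)^{-1}$ is bounded by $n$ on the support of $E$. The functional calculus then gives a bounded inverse of $\mathcal{L}-\lambda I$, contradicting $\lambda\in\sigma(\mathcal{L})$. So we may choose $\varphi_{n}\in\operatorname{Ran}P_{n}$ with $\Vert\varphi_{n}\Vert=1$. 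Such $\varphi_{n}$ lies in $\mathcal{D}$, because $\int\mu^{2}\,d\langle E\varphi_{n},\varphi_{n}\rangle<\infty$ for a spectral measure supported on a bounded interval. Moreover
\begin{align*}
\Vert(\mathcal{L}-\lambda I)\varphi_{n}\Vert^{2}=\int_{(\lambda-1/n,\lambda+1/n)}|\mu-\lambda|^{2}\,d\langle E(\mu)\varphi_{n},\varphi_{n}\rangle\leqslant n^{-2}\to0 .
\end{align*}

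\emph{Alternative route for necessity.} One can avoid the spectral measure by using the identity $\Vert(\mathcal{L}-z I)^{-1}\Vert=1/\operatorname{dist}(z,\sigma(\mathcal{L}))$ for $z\notin\sigma(\mathcal{L})$. Take $z_{n}=\lambda+\mathrm{i}/n$, so that $\Vert(\mathcal{L}-z_{n}I)^{-1}\Vert=n$. Pick $f_{n}$ with $\Vert f_{n}\Vert=1$ and $\Vert(\mathcal{L}-z_{n})^{-1}f_{n}\Vert\geqslant n/2$. Define $\varphi_{n}$ as the normalization of $(\mathcal{L}-z_{n})^{-1}f_{n}$. Then
\begin{align*}
\Vert(\mathcal{L}-z_{n})\varphi_{n}\Vert\leqslant 2/n,
\end{align*}
and therefore $\Vert(\mathcal{L}-\lambda)\varphi_{n}\Vert\leqslant 2/n+1/n\to0$.

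The main obstacle is justifying $P_{n}\neq0$ in the first route, equivalently the norm–distance identity for the resolvent in the second. Both are standard consequences of the spectral theorem, and I would cite \cite{reed1980methods} for them.
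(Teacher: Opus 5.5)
Your proof is correct. The paper gives no argument of its own for this lemma: it records it as the standard Weyl criterion with a citation to Reed--Simon, and your sufficiency argument together with either necessity route is the textbook proof.

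One small point about your second route: it only needs the inequality $\Vert(\mathcal{L}-z)^{-1}\Vert\geqslant 1/\operatorname{dist}(z,\sigma(\mathcal{L}))$. That inequality follows from a Neumann series for any closed operator, and non-real $z$ lies in the resolvent set by the elementary bound $\Vert(\mathcal{L}-z)\varphi\Vert\geqslant|\operatorname{Im}z|\,\Vert\varphi\Vert$. So the step you flagged as the main obstacle does not actually require the spectral theorem.
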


The following conclusion shows the spectral consistency between the Schrödinger operator $\mathscr{H}$ and its associated extended Schrödinger operator $\tilde{\mathscr{H}}$ after the self-adjoint extension.

\begin{theorem}\label{prop:susbs}
There hold that
\begin{enumerate}[label=(\alph*)]
\item  
\begin{align*}
\sigma(\mathscr{H})\subset\sigma_{\mathcal{G}}(\tilde{\mathscr{H}});
\end{align*}
\item if $\tilde{\lambda}\in \sigma_{\mathcal{G}}(\tilde{\mathscr{H}})$ satisfies the generalized eigenfunction condition, namely, there exists $\tilde{\mathbf{k}}_{*}\in \tilde{\Gamma}^{*}$ and $0\neq\tilde{v}_{*}\in \mathcal{G}(\mathbb{T}^{2d})$ satisfying  
\begin{align}\label{eq:thk}
    \tilde{\mathscr{H}}(\tilde{\mathbf{k}}_{*})\tilde{v}_{*}=\tilde{\lambda}\tilde{v}_{*},
\end{align}
then 
\begin{align*}
    \tilde{\lambda}\in \sigma(\mathscr{H}).
\end{align*}
\end{enumerate}
\end{theorem}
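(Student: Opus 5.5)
The plan is to argue both inclusions with Weyl sequences (Lemma \ref{lem:app_p}) in the sheared coordinates $\mathbf{s}=\mathbf{r}$, $\mathbf{t}=\mathbf{r}'-\mathbf{r}$. In these variables $\partial_{\mathbf{r}_i}+\partial_{\mathbf{r}_i'}$ becomes $\partial_{\mathbf{s}_i}$, so
\begin{align*}
\tilde{\mathscr{H}}=-\tfrac12\Delta_{\mathbf{s}}+V_1(\mathbf{s})+V_2(\mathbf{s}+\mathbf{t}).
\end{align*}
Thus $\tilde{\mathscr{H}}$ is a direct integral over $\mathbf{t}\in\mathbb{R}^d$ of the shifted operators
\begin{align*}
\mathscr{H}_{\mathbf{t}}:=-\tfrac12\Delta+V_1+V_2(\cdot+\mathbf{t}),
\end{align*}
and $\mathscr{H}_{\mathbf{0}}=\mathscr{H}$. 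The change of variables has unit Jacobian, so it is unitary on $L^2(\mathbb{R}^{2d})$ and preserves $H^2(\mathbb{R}^{2d})$.

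For (a), take $\lambda\in\sigma(\mathscr{H})$ and a Weyl sequence $\varphi_n\in H^2(\mathbb{R}^d)$ for $\mathscr{H}$ with $\Vert\varphi_n\Vert=1$. Take $\psi_m\in C_c^\infty(\mathbb{R}^d)$ with $\Vert\psi_m\Vert=1$ and support in $\{|\mathbf{t}|<1/m\}$, and set $\tilde u_{n,m}(\mathbf{s},\mathbf{t})=\varphi_n(\mathbf{s})\psi_m(\mathbf{t})$. This function lies in $H^2(\mathbb{R}^{2d})\subset\mathcal{G}$, has unit norm, and
\begin{align*}
(\tilde{\mathscr{H}}-\lambda)\tilde u_{n,m}=\psi_m(\mathbf{t})\,(\mathscr{H}-\lambda)\varphi_n(\mathbf{s})+\psi_m(\mathbf{t})\bigl(V_2(\mathbf{s}+\mathbf{t})-V_2(\mathbf{s})\bigr)\varphi_n(\mathbf{s}).
\end{align*}
The first term has norm $\Vert(\mathscr{H}-\lambda)\varphi_n\Vert\to0$. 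The second is bounded by $\sup_{|\mathbf{t}|<1/m}\Vert V_2(\cdot+\mathbf{t})-V_2\Vert_{L^\infty}$, which tends to $0$ because $V_2$ is continuous and periodic, hence uniformly continuous. A diagonal choice $n=m$ gives a Weyl sequence for $\tilde{\mathscr{H}}$, so $\lambda\in\sigma_{\mathcal{G}}(\tilde{\mathscr{H}})$ by Lemma \ref{lem:app_p}.

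For (b), set $\tilde u_*=e^{\mathrm{i}\tilde{\mathbf{k}}_*\cdot\tilde{\mathbf{r}}}\tilde v_*$. This is a locally $L^2$ function with $\tilde{\mathscr{H}}\tilde u_*=\tilde\lambda\tilde u_*$ in the distributional sense. In the $(\mathbf{s},\mathbf{t})$ variables the equation involves only $\mathbf{s}$-derivatives, so a Fubini argument shows that for a.e. $\mathbf{t}$ the slice $u_{\mathbf{t}}(\mathbf{s}):=\tilde u_*(\mathbf{s},\mathbf{s}+\mathbf{t})$ satisfies $\mathscr{H}_{\mathbf{t}}u_{\mathbf{t}}=\tilde\lambda u_{\mathbf{t}}$ distributionally. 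It is therefore in $H^2_{\mathrm{loc}}$ by interior elliptic regularity in $\mathbf{s}$. Moreover $|u_{\mathbf{t}}(\mathbf{s})|=|\tilde v_*(\mathbf{s},\mathbf{s}+\mathbf{t})|$ is the restriction of a $\tilde{\mathcal R}$-periodic function to an affine $d$-plane.

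Next, a Schnol-type cutoff applies. Let $\chi_R=\chi(\cdot/R)$ and $\phi_R=\chi_Ru_{\mathbf{t}}/\Vert\chi_Ru_{\mathbf{t}}\Vert$. Then $(\mathscr{H}_{\mathbf{t}}-\tilde\lambda)\phi_R$ involves only $\nabla\chi_R\cdot\nabla u_{\mathbf{t}}$ and $\Delta\chi_R\,u_{\mathbf{t}}$. These are $O(R^{-1})$ times the local $H^1$ mass of $u_{\mathbf{t}}$ on the annulus, which is $O(R^{d})$ by interior estimates and uniform local bounds coming from periodicity. We need $\Vert\chi_Ru_{\mathbf{t}}\Vert^2\gtrsim R^d$. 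The plan is to obtain this from incommensurability: the $d$-plane $\{(\mathbf{s},\mathbf{s}+\mathbf{t})\}$ projects onto a dense (equidistributed) subgroup of $\mathbb{T}^{2d}$, so the mean of $|u_{\mathbf{t}}|^2$ equals $\fint_{\tilde\Gamma}|\tilde v_*|^2>0$ for a.e. $\mathbf{t}$. This yields $\tilde\lambda\in\sigma(\mathscr{H}_{\mathbf{t}})$ for a.e. $\mathbf{t}$. If full density fails, a fallback is to average in $\mathbf{t}$ over a small ball, which also uses the $\mathbf{t}$-integrated mass.

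It remains to show that $\sigma(\mathscr{H}_{\mathbf{t}})=\sigma(\mathscr{H})$ for every $\mathbf{t}$. Two facts give this. First, translation by $\mathbf{a}\in\mathcal R_1$ conjugates $\mathscr{H}_{\mathbf{t}}$ to $\mathscr{H}_{\mathbf{t}+\mathbf{a}}$, and shifting $\mathbf{t}$ by an element of $\mathcal R_2$ leaves $\mathscr{H}_{\mathbf{t}}$ unchanged, so the spectrum is constant on cosets of $\mathcal R_1+\mathcal R_2$. Second, $\Vert\mathscr{H}_{\mathbf{t}}-\mathscr{H}_{\mathbf{t}'}\Vert\le\Vert V_2(\cdot+\mathbf{t})-V_2(\cdot+\mathbf{t}')\Vert_\infty$, so the spectrum is Hausdorff-continuous in $\mathbf{t}$. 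Given density of $\mathcal R_1+\mathcal R_2$ in $\mathbb{R}^d$, these two facts make the spectrum independent of $\mathbf{t}$.

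The main obstacles are the positivity of the mean mass of the slices $u_{\mathbf{t}}$, which requires the equidistribution argument, and the density of $\mathcal R_1+\mathcal R_2$ in $\mathbb{R}^d$. Neither follows verbatim from the stated definition of incommensurability, so the plan is to show or assume that the incommensurability hypothesis yields both.
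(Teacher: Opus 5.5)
This is essentially the paper's argument: for (a) a product Weyl sequence concentrating on the diagonal, where your shear $(\mathbf{s},\mathbf{t})$ replaces the paper's $(\mathbf{p},\mathbf{q})$ and neatly puts the whole perturbation on $V_2$; and for (b) slicing the Bloch eigenfunction into a generalized eigenfunction of a shifted operator, a Schnol cutoff controlled by ergodic averages, and then translation covariance, density and $L^\infty$-continuity in the shift to land in $\sigma(\mathscr{H})$. The density you flag is exactly what the paper asserts in (\ref{eq:bar_tor}); it is equivalent to $\mathcal{R}_1^*\cap\mathcal{R}_2^*=\{0\}$, which for $d=2$ does not follow from the real-space definition of incommensurability and so must be assumed, and your $O(R^d)$ annulus bound should come from the same a.e.\ ergodic theorem (your a.e.-$\mathbf{t}$ choice is in fact more careful than the paper's unique-ergodicity step) rather than from uniform local bounds, which periodicity does not provide for slices along a dense plane.
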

\begin{proof}
(a) Consider the following extended Schrödinger operator:
\begin{align*}
    \tilde{\mathscr{H}}:=-\frac{1}{2} \sum_{i=1}^{d}(\partial_{\mathbf{r}_{i}}+\partial_{\mathbf{r}_{i}'})^{2}+V_{1}(\mathbf{r})+V_{2}(\mathbf{r}'),
\end{align*}
where $\tilde{\mathbf{r}}:=(\mathbf{r}, \mathbf{r}')\in\mathbb{R}^{d}\times\mathbb{R}^{d}$ and $\mathbf{r}=(\mathbf{r}_{1},\ldots,\mathbf{r}_{d})$, $\mathbf{r}'=(\mathbf{r}_{1}',\ldots,\mathbf{r}_{d}')$.

Let
\begin{align*}
    \mathbf{p}_{i}=\frac{1}{2}\left(\mathbf{r}_{i}+\mathbf{r}_{i}'\right),\quad \mathbf{q}_{i}=\frac{1}{2}\left(\mathbf{r}_{i}-\mathbf{r}_{i}'\right),\quad i=1,\ldots,d,
\end{align*}
and we rewrite the extended Schrödinger operator $\tilde{\mathscr{H}}$ as
\begin{align*}
    \tilde{\mathscr{H}}_{\mathbf{p}\mathbf{q}}:=-\frac{1}{2}\sum_{i=1}^{d}\partial_{\mathbf{p}_{i}}^{2}+V_{1}(\mathbf{p}+\mathbf{q})+V_{2}(\mathbf{p}-\mathbf{q}),
\end{align*}
where $\mathbf{p}=(\mathbf{p}_{1}, \ldots, \mathbf{p}_{d})$ and $\mathbf{q}=(\mathbf{q}_{1}, \ldots, \mathbf{q}_{d})$. It is observed that 
\begin{align*}
    \tilde{\mathscr{H}}_{\mathbf{p}\mathbf{q}}|_{\mathbf{p}=\mathbf{r}, \mathbf{q}=\mathbf{0}}=\mathscr{H}.
\end{align*}

If $\lambda\in \sigma(\mathscr{H})$, then by Lemma \ref{lem:app_p}, there exists a sequence $\{\varphi_{n}\}_{n=1}^{\infty}\subset H^{2}(\mathbb{R}^{d})$ such that $\Vert \varphi_{n}\Vert_{L^{2}(\mathbb{R}^{d})}=1$ and
\begin{align}\label{lim:low}
    \left\Vert(\mathscr{H}-\lambda I)\varphi_{n}\right\Vert_{L^{2}(\mathbb{R}^{d})}\rightarrow0,\quad \text{as}\quad n\rightarrow\infty.
\end{align}
Define a sequence $\{\psi_{n}\}_{n=1}^{\infty}\subset \mathcal{G}$ by
\begin{align*}
    \psi_{n}(\mathbf{r}, \mathbf{r}'):=\varphi_{n}(\frac{\mathbf{r}+\mathbf{r}'}{2})\eta_{n}(\frac{\mathbf{r}-\mathbf{r}'}{2}),\quad (\mathbf{r}, \mathbf{r}')\in \mathbb{R}^{2d},
\end{align*}
where the sequence $\{\eta_{n}\}_{n=1}^{\infty}\subset C^{\infty}(\mathbb{R}^{2d})$ centered at $\mathbf{0}$ with $\Vert \eta_{n}\Vert_{L^{2}(\mathbb{R}^{d})}=1$ converges to the Dirac delta $\delta(\cdot)$ as $n\rightarrow\infty$. A typical example for $\{\eta_{n}\}_{n=1}^{\infty}$ is the Gaussian functions. 

We observe that 
\begin{align*}
    \left\Vert\psi_{n}(\mathbf{r}, \mathbf{r}')\right\Vert_{L^{2}(\mathbb{R}^{2d})}=&\left\Vert\varphi_{n}(\frac{\mathbf{r}+\mathbf{r}'}{2})\eta_{n}(\frac{\mathbf{r}-\mathbf{r}'}{2})\right\Vert_{L^{2}(\mathbb{R}^{2d})}=\left|\frac{\partial(\mathbf{p,q})}{\partial(\mathbf{r,r'})}\right|\left\Vert\varphi_{n}(\mathbf{p})\eta_{n}(\mathbf{q})\right\Vert_{L^{2}(\mathbb{R}^{2d})}\\=&\left|\frac{\partial(\mathbf{p,q})}{\partial(\mathbf{r,r'})}\right|\left\Vert\varphi_{n}\right\Vert_{L^{2}(\mathbb{R}^{d})}\left\Vert\eta_{n}\right\Vert_{L^{2}(\mathbb{R}^{d})}=\left(\frac{1}{2}\right)^{2d-1},\quad n=1, 2, \ldots.
\end{align*}

Note that 
    \begin{align*}
     \left(\tilde{\mathscr{H}}\psi_{n}\right)(\mathbf{r}, \mathbf{r}')=&\left(-\frac{1}{2} \sum_{i=1}^{d}(\partial_{\mathbf{r}_{i}}+\partial_{\mathbf{r}_{i}'})^{2}+V_{1}(\mathbf{r})+V_{2}(\mathbf{r}')\right)\varphi_{n}(\frac{\mathbf{r}+\mathbf{r}'}{2})\eta_{n}(\frac{\mathbf{r}-\mathbf{r}'}{2})\\=&\left(-\frac{1}{2}\sum_{i=1}^{d}\partial_{\mathbf{p}_{i}}^{2}+V_{1}(\mathbf{p}+\mathbf{q})+V_{2}(\mathbf{p}-\mathbf{q})\right)\varphi_{n}(\mathbf{p})\eta_{n}(\mathbf{q}),
        \end{align*}
which yields that
\begin{equation}
    \begin{aligned}\label{eq:sum}
    &\left(\tilde{\mathscr{H}}\psi_{n}\right)(\mathbf{r}, \mathbf{r}')-\lambda\psi_{n}(\mathbf{r}, \mathbf{r}')\\=&-\frac{1}{2}\sum_{i=1}^{d}\partial_{\mathbf{p}_{i}}^{2}\varphi_{n}(\mathbf{p})\eta_{n}(\mathbf{q})+\left(V_{1}(\mathbf{p}+\mathbf{q})+V_{2}(\mathbf{p}-\mathbf{q})-\lambda\right)\varphi_{n}(\mathbf{p})\eta_{n}(\mathbf{q})\\=&
    \left(\mathscr{H}-\lambda I\right)\varphi_{n}(\mathbf{p})\eta_{n}(\mathbf{q})+\left(V_{1}(\mathbf{p}+\mathbf{q})+V_{2}(\mathbf{p}-\mathbf{q})-V_{1}(\mathbf{p})-V_{2}(\mathbf{p})\right)\varphi_{n}(\mathbf{p})\eta_{n}(\mathbf{q}).
\end{aligned}
\end{equation}

For the first term, we obtain from (\ref{lim:low}) that
\begin{equation}
    \begin{aligned}\label{eq:first_term}
    &\left\Vert \left(\mathscr{H}-\lambda I\right)\varphi_{n}(\mathbf{p})\eta_{n}(\mathbf{q})\right\Vert_{L^{2}(\mathbb{R}^{2d})}\\=&\left\Vert \left(\mathscr{H}-\lambda I\right)\varphi_{n}(\mathbf{p})\right\Vert_{L^{2}(\mathbb{R}^{d})}\left\Vert\eta_{n}(\mathbf{q})\right\Vert_{L^{2}(\mathbb{R}^{d})}=\left\Vert \left(\mathscr{H}-\lambda I\right)\varphi_{n}\right\Vert_{L^{2}(\mathbb{R}^{d})}\rightarrow0,
\end{aligned}
\end{equation}
as $n\rightarrow\infty$. 

For the second term, we may conduct as follows
\begin{equation}
    \begin{aligned}\label{eq:sec_term}
    &\left\Vert\left(V_{1}(\mathbf{p}+\mathbf{q})+V_{2}(\mathbf{p}-\mathbf{q})-V_{1}(\mathbf{p})-V_{2}(\mathbf{p})\right)\varphi_{n}(\mathbf{p})\eta_{n}(\mathbf{q})\right\Vert_{L^{2}(\mathbb{R}^{2d})}^{2}\\=&\int_{\Vert\mathbf{q}\Vert>\gamma}\left|\left(V_{1}(\mathbf{p}+\mathbf{q})+V_{2}(\mathbf{p}-\mathbf{q})-V_{1}(\mathbf{p})-V_{2}(\mathbf{p})\right)\varphi_{n}(\mathbf{p})\eta_{n}(\mathbf{q})\right|^{2}d\mathbf{p}d\mathbf{q}\\&+\int_{\Vert\mathbf{q}\Vert\leqslant\gamma}\left|\left(V_{1}(\mathbf{p}+\mathbf{q})+V_{2}(\mathbf{p}-\mathbf{q})-V_{1}(\mathbf{p})-V_{2}(\mathbf{p})\right)\varphi_{n}(\mathbf{p})\eta_{n}(\mathbf{q})\right|^{2}d\mathbf{p}d\mathbf{q}\\\leqslant&4\left(\Vert V_{1}\Vert_{L^{\infty}}+\Vert V_{2}\Vert_{L^{\infty}}\right)\left\Vert\varphi_{n}\right\Vert_{L^{2}(\mathbb{R}^{d})}^{2}\int_{\Vert\mathbf{q}\Vert>\gamma}\left|\eta_{n}(\mathbf{q})\right|^{2}d\mathbf{q}\\&+\max_{\mathbf{p}, \Vert\mathbf{q}\Vert\leqslant\gamma}\left|\left(V_{1}(\mathbf{p}+\mathbf{q})+V_{2}(\mathbf{p}-\mathbf{q})-V_{1}(\mathbf{p})-V_{2}(\mathbf{p})\right)\right|^{2}\left\Vert\varphi_{n}\right\Vert_{L^{2}(\mathbb{R}^{d})}^{2}\int_{\Vert\mathbf{q}\Vert\leqslant\gamma}\left|\eta_{n}(\mathbf{q})\right|^{2}d\mathbf{q}.
\end{aligned}
\end{equation}
Since $\{\eta_{n}\}_{n=1}^{\infty}\subset C^{\infty}(\mathbb{R}^{2d})$ centered at $\mathbf{0}$, we have 
\begin{align*}
    \int_{\Vert\mathbf{q}\Vert>\gamma}\left|\eta_{n}(\mathbf{q})\right|^{2}d\mathbf{q}\rightarrow0,\quad \text{as}\quad n\rightarrow\infty,\quad\forall\gamma>0.
\end{align*}
By the uniform continuity and the periodicity of $V_{j} (j=1,2)$, there holds that
\begin{align*}
    \max_{\mathbf{p}, \Vert\mathbf{q}\Vert\leqslant\gamma}\left|\left(V_{1}(\mathbf{p}+\mathbf{q})+V_{2}(\mathbf{p}-\mathbf{q})-V_{1}(\mathbf{p})-V_{2}(\mathbf{p})\right)\right|\rightarrow0,  \quad\text{as}\quad \gamma\rightarrow0^{+}.
\end{align*}
Then, we obtain that 
\begin{align*}
    \left\Vert\left(V_{1}(\mathbf{p}+\mathbf{q})+V_{2}(\mathbf{p}-\mathbf{q})-V_{1}(\mathbf{p})-V_{2}(\mathbf{p})\right)\varphi_{n}(\mathbf{p})\eta_{n}(\mathbf{q})\right\Vert_{L^{2}(\mathbb{R}^{2d})}^{2}\rightarrow0,\quad \text{as}\quad n\rightarrow\infty,
\end{align*}
which together with  (\ref{eq:sum}), (\ref{eq:first_term}) and (\ref{eq:sec_term}) produces 
\begin{align*}
\left\Vert\tilde{\mathscr{H}}\psi_{n}-\lambda\psi_{n}\right\Vert_{L^{2}(\mathbb{R}^{2d})}\rightarrow0,\quad \text{as}\quad n\rightarrow\infty.
\end{align*}
Therefore, we derive from Lemma \ref{lem:app_p}  that $\lambda\in\sigma_{\mathcal{G}}(\tilde{\mathscr{H}})$ and then $\sigma(\mathscr{H})\subset\sigma_{\mathcal{G}}(\tilde{\mathscr{H}})$.

\hspace{10pt}

(b)
Let $\tilde{\lambda}\in\sigma_{\mathcal{G}}(\tilde{\mathscr{H}})$ and suppose there exists $\tilde{\mathbf{k}}_{*}\in \tilde{\Gamma}^{*}$ and $0\neq\tilde{v}_{*}\in \mathcal{G}(\mathbb{T}^{2d})$ satisfying (\ref{eq:thk}).

For $\mathbf{p}=(\mathbf{p}_{1}, \ldots, \mathbf{p}_{d})$, $\mathbf{q}=(\mathbf{q}_{1}, \ldots, \mathbf{q}_{d})$ and $(\mathbf{p}, \mathbf{q})\in \mathbb{T}_{\mathbf{pq}}^{2d}$, we set
\begin{align*}
    \tilde{v}_{*}(\tilde{\mathbf{r}})= \tilde{v}_{*}(\mathbf{r}, \mathbf{r}')=\tilde{v}_{*}(\mathbf{p}+\mathbf{q}, \mathbf{p}-\mathbf{q})\triangleq\tilde{\textbf{v}}_{*}(\mathbf{p},\mathbf{q}).
\end{align*}
Similarly, for $\tilde{u}_{*}(\tilde{\mathbf{r}})=e^{\mathrm{i}\tilde{\mathbf{k}}_{*}\cdot\tilde{\mathbf{r}}}\tilde{v}_{*}(\tilde{\mathbf{r}})$, denote
\begin{align*}
    \tilde{u}_{*}(\tilde{\mathbf{r}})\triangleq\tilde{\textbf{u}}_{*}(\mathbf{p},\mathbf{q}).
\end{align*}
By simple calculations, we see that 
\begin{align}\label{eq:hhh}
\tilde{\mathscr{H}}_{\mathbf{p}\mathbf{q}}\tilde{\textbf{u}}_{*}=\tilde{\lambda}\tilde{\textbf{u}}_{*}.
\end{align}

Note that the assumption $\tilde{v}_{*}\in \mathcal{G}(\mathbb{T}^{2d})$, together with $V_{j}\in C^{0}(\Gamma_{j}) (j=1,2)$, implies that
\begin{align*}
\infty>&\int_{\mathbb{T}^{2d}}\left|\tilde{\mathscr{H}}\tilde{v}_{*}\right|^{2}d\tilde{\mathbf{r}}\geqslant\frac{1}{8}\int_{\mathbb{T}^{2d}}\left|\sum_{i=1}^{d}(\partial_{\mathbf{r}_{i}}+\partial_{\mathbf{r}_{i}'})^{2}\tilde{v}_{*}\right|^{2}d\tilde{\mathbf{r}}-\int_{\mathbb{T}^{2d}}\left|\left(V_{1}(\mathbf{r})+V_{2}(\mathbf{r}')\right)\tilde{v}_{*}\right|^{2}d\tilde{\mathbf{r}}\\\geqslant&\frac{1}{8}\int_{\mathbb{T}^{2d}}\left|\sum_{i=1}^{d}(\partial_{\mathbf{r}_{i}}+\partial_{\mathbf{r}_{i}'})^{2}\tilde{v}_{*}\right|^{2}d\tilde{\mathbf{r}}-\left(\Vert V_{1}\Vert_{L^{\infty}}+\Vert V_{2}\Vert_{L^{\infty}}\right)^{2}\left\Vert\tilde{v}_{*} \right\Vert^{2}_{L^{2}(\mathbb{T}^{2d})}.
\end{align*}
Then, we have
\begin{align*}
\int_{\mathbb{T}_{\mathbf{pq}}^{2d}}\left|\sum_{i=1}^{d}\partial_{\mathbf{p}_{i}}^{2}\tilde{\textbf{v}}_{*}\right|^{2}<\infty,
\end{align*}
which together with Fubini's theorem yields that for almost all $\mathbf{q}$, there hold
\begin{align*}
    \tilde{\textbf{v}}_{*}(\cdot,\mathbf{q})\in H^{2}_{\mathrm{loc}}(\mathbb{R}^{d}),\quad\text{and}\quad\tilde{\textbf{u}}_{*}(\cdot,\mathbf{q})\in H^{2}_{\mathrm{loc}}(\mathbb{R}^{d}).
\end{align*}
Due to the fact that $\tilde{v}_{*}\neq0$, by Fubini's theorem again, there exists  $\mathbf{q_{0}}$ such that 
\begin{align*}
    0\neq\tilde{\textbf{u}}_{*}(\cdot,\mathbf{q_{0}})\in H^{2}_{\mathrm{loc}}(\mathbb{R}^{d}).
\end{align*}
Note that $\mathbf{q_{0}}$ is not guaranteed to be $\mathbf{0}$ here.

Denote the phase by
\begin{align*}
   \mathbf{\tilde{q}_{0}}=(\mathbf{q_{0}}, -\mathbf{q_{0}}),
\end{align*}
and for $\mathbf{r}\in\mathbb{R}^{2d}$,
\begin{align*}
    u_{*}(\mathbf{r}):=& \tilde{u}_{*}(\mathbf{r}+\mathbf{q_{0}}, \mathbf{r}-\mathbf{q_{0}})=\tilde{\textbf{u}}_{*}(\mathbf{r},\mathbf{q_{0}}), \\
    v_{*}(\mathbf{r}):=& \tilde{v}_{*}(\mathbf{r}+\mathbf{q_{0}}, \mathbf{r}-\mathbf{q_{0}})=\tilde{\textbf{v}}_{*}(\mathbf{r},\mathbf{q_{0}}). 
\end{align*}
Thus, $0\neq u_{*}\in H^{2}_{\mathrm{loc}}(\mathbb{R}^{d})$. By (\ref{eq:hhh}), there holds that
\begin{align*}
    \mathscr{H}_{\mathbf{\tilde{q}_{0}}}u_{*}:=-\frac{1}{2}\Delta u_{*}+\left(V_{1}(\mathbf{r}+\mathbf{q_{0}})+V_{2}(\mathbf{r}-\mathbf{q_{0}})\right)u_{*}=\tilde{\lambda}u_{*}.
\end{align*}

Next, we prove that $\tilde{\lambda}\in \sigma(\mathscr{H}_{\mathbf{\tilde{q}_{0}}})$. Let $\chi\in C_{c}^{\infty}(\mathbb{R})$ be the truncation function
\begin{equation*}
    \chi(t)=\left\{\begin{array}{ll}
         1, \quad t\leqslant0,\\
         0, \quad t\geqslant1.\\
    \end{array}\right.
\end{equation*}
Given $R>0$, define the radial truncation
\begin{align*}
     \chi_{R}(\mathbf{r}):=\chi(|\mathbf{r}|-R),\quad \mathbf{r}\in\mathbb{R}^{d}.
\end{align*}
Obviously, 
\begin{align*}
    \operatorname{supp}(\chi_{R})\subset B_{R}:=&\{\mathbf{r}: |\mathbf{r}|\leqslant R\}, \\\operatorname{supp}(\nabla \chi_{R}), \operatorname{supp}(\Delta \chi_{R})\subset S_{R}:=&\{\mathbf{r}: R\leqslant|\mathbf{r}|\leqslant R+1\}.
\end{align*}

If we set $\psi_{R}:=\chi_{R}u_{*}$, then 
\begin{align*}
    \mathscr{H}_{\mathbf{\tilde{q}_{0}}}\psi_{R}-\tilde{\lambda}\psi_{R}=&\left(\mathscr{H}_{\mathbf{\tilde{q}_{0}}}-\tilde{\lambda}I\right)\left(\chi_{R}u_{*}\right)\\=&\chi_{R}\left(\mathscr{H}_{\mathbf{\tilde{q}_{0}}}-\tilde{\lambda}I\right)u_{*}+[\mathscr{H}_{\mathbf{\tilde{q}_{0}}}, \chi_{R}]u_{*}=[\mathscr{H}_{\mathbf{\tilde{q}_{0}}}, \chi_{R}]u_{*}=[\Delta, \chi_{R}]u_{*}.
\end{align*}
After simple calculations, we have
\begin{align*}
    \left\Vert \mathscr{H}_{\mathbf{\tilde{q}_{0}}}\psi_{R}-\tilde{\lambda}\psi_{R}\right\Vert_{L^{2}(\mathbb{R}^{d})}^{2}= \left\Vert [\Delta, \chi_{R}]u_{*}\right\Vert_{L^{2}(\mathbb{R}^{d})}^{2}\leqslant C_{*}\int_{S_{R}}\left(\left|u_{*}\right|+\left|\nabla u_{*}\right|+\left|\Delta u_{*}\right|\right)^{2}d\mathbf{r},
\end{align*}
where $C_{*}>0$ is some constant that is independent of $R$. Note that
\begin{align*}
    u_{*}(\mathbf{r})=e^{\mathrm{i}\tilde{\mathbf{k}}_{*}\cdot(\mathbf{r}+\mathbf{q_{0}}, \mathbf{r}-\mathbf{q_{0}})}v_{*}(\mathbf{r}).
\end{align*}
 We obtain
\begin{align}\label{ine:s_r}
     \left\Vert \mathscr{H}_{\mathbf{\tilde{q}_{0}}}\psi_{R}-\tilde{\lambda}\psi_{R}\right\Vert_{L^{2}(\mathbb{R}^{d})}^{2}\leqslant \tilde{C}_{*}\int_{S_{R}}\left(\left|v_{*}\right|+\left|\nabla v_{*}\right|+\left|\Delta v_{*}\right|\right)^{2}d\mathbf{r},
\end{align}
where $\tilde{C}_{*}>0$ is some constant that is independent of $R$.

Since two Bravais lattices $\mathcal{R}_{1}$ and $\mathcal{R}_{2}$ are incommensurate, the identities below follow from the unique ergodicity of linear flows on tori (see, e.g., \cite{ding2010statistical},  \cite{walters2000introduction})
\begin{align*}
    \lim_{R\rightarrow\infty}\frac{1}{|B_{R}|}\int_{B_{R}}\left(\left|v_{*}\right|+\left|\nabla v_{*}\right|+\left|\Delta v_{*}\right|\right)^{2}d\mathbf{r}=&\frac{1}{\operatorname{Vol}(\mathbb{T}^{2d})}\int_{\mathbb{T}^{2d}}\left(\left|\tilde{v}_{*}\right|+\left|\nabla \tilde{v}_{*}\right|+\left|\Delta \tilde{v}_{*}\right|\right)^{2}d\tilde{\mathbf{r}}\triangleq J_{1},\\
    \lim_{R\rightarrow\infty}\frac{1}{|B_{R}|}\int_{B_{R}}\left|v_{*}\right|^{2}d\mathbf{r}=&\frac{1}{\operatorname{Vol}(\mathbb{T}^{2d})}\int_{\mathbb{T}^{2d}}\left|\tilde{v}_{*}\right|^{2}d\tilde{\mathbf{r}}\triangleq J_{2},
\end{align*}
which leads to that for $\varepsilon>0$ and $R\gg1$, there holds that
\begin{equation}
    \begin{aligned}\label{ine:R_e}
    |B_{R}|(J_{1}-\varepsilon)\leqslant&\int_{B_{R}}\left(\left|v_{*}\right|+\left|\nabla v_{*}\right|+\left|\Delta v_{*}\right|\right)^{2}d\mathbf{r}\leqslant |B_{R}|(J_{1}+\varepsilon),\\
    &\int_{B_{R}}\left|v_{*}\right|^{2}d\mathbf{r}\geqslant |B_{R}|(J_{2}-\varepsilon).
\end{aligned}
\end{equation}

Therefore, we obtain from (\ref{ine:s_r}) and (\ref{ine:R_e}) that
\begin{align*}
    \frac{\left\Vert \mathscr{H}_{\mathbf{\tilde{q}_{0}}}\psi_{R}-\tilde{\lambda}\psi_{R}\right\Vert_{L^{2}(\mathbb{R}^{d})}^{2}}{\left\Vert \psi_{R}\right\Vert_{L^{2}(\mathbb{R}^{d})}^{2}}\leqslant&\tilde{C}_{*}\frac{\int_{S_{R}}\left(\left|v_{*}\right|+\left|\nabla v_{*}\right|+\left|\Delta v_{*}\right|\right)^{2}d\mathbf{r}}{\int_{B_{R}}\left|v_{*}\right|^{2}d\mathbf{r}}\\=&\tilde{C}_{*}\frac{\left(\int_{B_{R+1}}-\int_{B_{R}}\right)\left(\left|v_{*}\right|+\left|\nabla v_{*}\right|+\left|\Delta v_{*}\right|\right)^{2}d\mathbf{r}}{\int_{B_{R}}\left|v_{*}\right|^{2}d\mathbf{r}}\\\leqslant&\tilde{C}_{*}\frac{\left(|B_{R+1}|-|B_{R}|\right)J_{1}+\varepsilon\left(|B_{R+1}|+|B_{R}|\right)}{ |B_{R}|(J_{2}-\varepsilon)}\rightarrow0
\end{align*}
as $R\rightarrow\infty$ and $\varepsilon\rightarrow0^{+}$, which together with Lemma \ref{lem:app_p} implies that $\tilde{\lambda}\in \sigma(\mathscr{H}_{\mathbf{\tilde{q}_{0}}})$.

We then turn to prove that
\begin{align*}
\tilde{\lambda}\in \sigma(\mathscr{H}).
\end{align*}

Define the translation flow on the torus $\mathbb{T}^{2d}$ by
\begin{align*}
    T_{\mathbf{t}}\tilde{\mathbf{r}}=(\mathbf{r}+\mathbf{t},\mathbf{r}'+\mathbf{t}),
\qquad \mathbf{t}\in\mathbb{R}^{d},
\end{align*}
for $\tilde{\mathbf{r}}=(\mathbf{r},\mathbf{r}')\in\mathbb{T}^{2d}.$ Note that $\mathcal{R}_{1}$ and $\mathcal{R}_{2}$ are incommensurate.
We see that, for each phase $\tilde{\mathbf{r}}\in\mathbb{T}^{2d}$, its orbit $\{T_{\mathbf{t}}\tilde{\mathbf{r}}\}_{\mathbf{t}\in\mathbb{R}^{d}}$ is dense in the whole phase space, i.e., 
\begin{align}\label{eq:bar_tor}
    \overline{\{T_{\mathbf{t}}\tilde{\mathbf{r}}:\mathbf{t}\in\mathbb{R}^{d}\}}
=\mathbb{T}^{2d}.
\end{align}

We claim that for $\mathbf{\tilde{q}_{0}}=(\mathbf{q_{0}}, -\mathbf{q_{0}})$, there holds 
\begin{align}\label{eq:spec_trans}
\sigma(\mathscr{H}_{T_{\mathbf{t}}\mathbf{\tilde{q}_{0}}})=\sigma(\mathscr{H}_{\mathbf{\tilde{q}_{0}}}), \quad \forall \mathbf{t}\in\mathbb{R}^{d}.
\end{align}
Given $\mathbf{t}\in\mathbb{R}^{d}$, define the translation operator $U_{\mathbf{t}}:L^2(\mathbb{R}^{d})\rightarrow L^2(\mathbb{R}^{d})$ by
\begin{align*}
    (U_{\mathbf{t}}f)(\mathbf{r})=f(\mathbf{r}+\mathbf{t}).
\end{align*}
We see that $U_{\mathbf{t}}$ is unitary for any $\mathbf{t}\in\mathbb{R}^{d}$. By some simple calculations, we have
\begin{align*} \mathscr{H}_{T_{\mathbf{t}}\mathbf{\tilde{q}_{0}}}=U_{\mathbf{t}}\mathscr{H}_{\mathbf{\tilde{q}_{0}}}U_{\mathbf{t}}^{-1}.
\end{align*}
Since unitary equivalent operators have identical spectra, we conclude $(\ref{eq:spec_trans})$ holds.

By (\ref{eq:bar_tor}), there exists a sequence $\mathbf{t}_n\in\mathbb{R}^{d}$ such that
\begin{align*}
T_{\mathbf{t}_n}\mathbf{\tilde{q}_{0}}\rightarrow\mathbf{0}
\end{align*}
in the torus $\mathbb{T}^{2d}$ as $n\rightarrow\infty$.
Equivalently, there hold that
\begin{align*}   \mathbf{q_{0}}+\mathbf{t}_n\rightarrow\mathbf{0},
\qquad
-\mathbf{q_{0}}+\mathbf{t}_n\rightarrow\mathbf{0}
\end{align*}
modulo the corresponding periods.
Consequently, due to the uniform continuity of the potential $V_{j} (j=1,2)$, we have
\begin{equation}
   \begin{aligned}\label{lim:poten}
    \left\Vert V_{1}(\cdot+\mathbf{q_{0}}+\mathbf{t}_n)-V_{1}(\cdot)\right\Vert_{L^{\infty}}&\rightarrow0,\\\left\Vert V_{2}(\cdot-\mathbf{q_{0}}+\mathbf{t}_n)-V_{2}(\cdot)\right\Vert_{L^{\infty}}&\rightarrow0
\end{aligned} 
\end{equation}
as $n\rightarrow\infty$.

Consider the resolvents 
\begin{align*}
    R(\mathrm{i},\mathscr{H})=(\mathscr{H}-\mathrm{i}I)^{-1},\quad  R(\mathrm{i},\mathscr{H}_{T_{\mathbf{t}_{n}}\mathbf{\tilde{q}_{0}}})=(\mathscr{H}_{T_{\mathbf{t}_{n}}\mathbf{\tilde{q}_{0}}}-\mathrm{i}I)^{-1}
\end{align*}
Following a similar argument to the proof in (\ref{ine:modh}), we obtain that 
\begin{align*}
    \left\Vert R(\mathrm{i}; \mathscr{H}) \right\Vert\leqslant1,\quad \left\Vert R(\mathrm{i}; \mathscr{H}_{T_{\mathbf{t}_{n}}\mathbf{\tilde{q}_{0}}}) \right\Vert\leqslant1
\end{align*}
Then, the resolvent identity and (\ref{lim:poten}) give
\begin{align*}
    \left\Vert R(\mathrm{i},\mathscr{H})-R(\mathrm{i},\mathscr{H}_{T_{\mathbf{t}_{n}}\mathbf{\tilde{q}_{0}}})\right\Vert=&\left\Vert R(\mathrm{i},\mathscr{H})\left(\mathscr{H}-\mathscr{H}_{T_{\mathbf{t}_{n}}\mathbf{\tilde{q}_{0}}}\right)R(\mathrm{i},\mathscr{H}_{T_{\mathbf{t}_{n}}\mathbf{\tilde{q}_{0}}})\right\Vert\\\leqslant&\left\Vert V_{1}(\cdot+\mathbf{q_{0}}+\mathbf{t}_n)-V_{1}(\cdot)\right\Vert_{L^{\infty}}+\left\Vert V_{2}(\cdot-\mathbf{q_{0}}+\mathbf{t}_n)-V_{2}(\cdot)\right\Vert_{L^{\infty}}\rightarrow0,
\end{align*}
as $n\rightarrow\infty$, which yields that
\begin{align*}
\mathscr{H}_{T_{\mathbf{t}_{n}}\mathbf{\tilde{q}_{0}}}\rightarrow \mathscr{H}
\quad\text{in norm resolvent sense}.
\end{align*}
Thus, the spectrum is upper semicontinuous with respect to this convergence \cite{de2008intermediate}. Namely,
\begin{align*}
    \limsup_{n\rightarrow\infty}
\sigma(\mathscr{H}_{T_{\mathbf{t}_{n}}\mathbf{\tilde{q}_{0}}})\subset\sigma(\mathscr{H}),
\end{align*}
which together with (\ref{eq:spec_trans}) yields that
\begin{align*}
    \tilde{\lambda}\in \sigma(\mathscr{H}_{\mathbf{\tilde{q}_{0}}})=\limsup_{n\rightarrow\infty}
\sigma(\mathscr{H}_{T_{\mathbf{t}_{n}}\mathbf{\tilde{q}_{0}}})\subset\sigma(\mathscr{H}).
\end{align*}
This completes the proof.
\end{proof}
\begin{remark}
 Theorem \ref{prop:susbs} tells that the spectrum $\sigma(\mathscr{H})$ is contained in the spectrum $\sigma_{\mathcal{G}}(\tilde{\mathscr{H}})$, and each spectral point in $\sigma_{\mathcal{G}}(\tilde{\mathscr{H}})$ satisfying the generalized eigenfunction condition is shown to correspond to a spectral point of the original operator $\mathscr{H}$.

    We mention that the generalized eigenfunction condition is a common assumption in the spectral
analysis of PDEs. 
 For instance, Bloch theory tells the existence of bounded Bloch solutions associated with every spectral point, and the Ishii--Pastur--Shnol theory \cite{damanik2022one} relates the spectrum of ergodic Schr\"odinger operators to generalized eigenfunctions with controlled growth. In our future work, we will explore the connection between the spectral points of the extended Schrödinger operator $\tilde{\mathscr{H}}$ after the self-adjoint extension and its  generalized eigenfunctions.
\end{remark}

The following conclusion follows from Corollary \ref{cor:dens} and Theorem \ref{prop:susbs}, which tells that the spectrum of the quasi-periodic Schrödinger operator $\mathscr{H}$ on the domain $H^{2}(\mathbb{R}^{d})$ can be approximated by the spectra of regularized Schrödinger operators $\left\{\tilde{\mathscr{H}}^{\delta}\right\}_{\delta>0}$ on the domain $H^{2}(\mathbb{R}^{2d})$, which are elliptic, retain periodicity, and enjoy favorable analytic and spectral properties as shown in Section 4. 
\begin{theorem}\label{thm:apppppp}
For $\lambda\in\sigma(\mathscr{H})$, there exists $\tilde{\lambda}^{\delta}\in\sigma(\tilde{\mathscr{H}}^{\delta})$ such that
\begin{align}\label{eq:lambdddd}
\lambda=\lim_{\delta\rightarrow0^{+}}\tilde{\lambda}^{\delta}.
\end{align}
\end{theorem}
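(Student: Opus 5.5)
The plan is to chain the two earlier results directly. Given $\lambda\in\sigma(\mathscr{H})$, Theorem \ref{prop:susbs}(a) gives $\lambda\in\sigma_{\mathcal{G}}(\tilde{\mathscr{H}})$. We then apply Corollary \ref{cor:dens} with $\tilde{\lambda}:=\lambda$. It yields a family $\tilde{\lambda}^{\delta}\in\sigma(\tilde{\mathscr{H}}^{\delta})$ with $\tilde{\lambda}^{\delta}\to\lambda$ as $\delta\to0^{+}$, which is exactly (\ref{eq:lambdddd}). Part (b) of Theorem \ref{prop:susbs} is not needed, because only the inclusion $\sigma(\mathscr{H})\subset\sigma_{\mathcal{G}}(\tilde{\mathscr{H}})$ is used.

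The only delicate point is the diagonal choice $\tilde{\lambda}^{\delta}:=\tilde{\lambda}^{\delta}_{\delta}$ hidden inside Corollary \ref{cor:dens}. This mixes the approximation parameter $\varepsilon$ from Theorem \ref{prop:dense} with the regularization parameter $\delta$ from Theorem \ref{thm:spectttt}, and convergence is not automatic. To make the statement fully rigorous, I would redo this step with an explicit selection.

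Fix a sequence $\varepsilon_m=1/m$. For each $m$, Theorem \ref{prop:dense} gives $\tilde{\mathbf{k}}_m$ and $\tilde{\lambda}_m\in\sigma_{\mathcal{G}}(\tilde{\mathscr{H}}(\tilde{\mathbf{k}}_m))$ with $|\lambda-\tilde{\lambda}_m|<1/m$. Theorem \ref{thm:spectttt} then gives $\tilde{\lambda}^{\delta}_m\in\sigma(\tilde{\mathscr{H}}^{\delta}(\tilde{\mathbf{k}}_m))$ with $\tilde{\lambda}^{\delta}_m\to\tilde{\lambda}_m$. Hence there is $\delta_m>0$, which we may take decreasing to $0$, such that $|\tilde{\lambda}^{\delta}_m-\tilde{\lambda}_m|<1/m$ for all $0<\delta<\delta_m$.

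For $\delta\in[\delta_{m+1},\delta_m)$, set $\tilde{\lambda}^{\delta}:=\tilde{\lambda}^{\delta}_m$. This lies in $\sigma(\tilde{\mathscr{H}}^{\delta}(\tilde{\mathbf{k}}_m))\subset\sigma(\tilde{\mathscr{H}}^{\delta})$ by Proposition \ref{prop:spec_dec}, and it satisfies $|\tilde{\lambda}^{\delta}-\lambda|<2/m$. As $\delta\to0^{+}$ we have $m\to\infty$, which gives the limit. The choice of $\tilde{\lambda}^{\delta}$ for $\delta\ge\delta_1$ is irrelevant, so any element of $\sigma(\tilde{\mathscr{H}}^{\delta})$ can be used there; this set is nonempty because each $\sigma(\tilde{\mathscr{H}}^{\delta}(\tilde{\mathbf{k}}))$ is.

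I expect this bookkeeping to be the main obstacle: the convergence rate in Theorem \ref{thm:spectttt} depends on $\tilde{\mathbf{k}}$, so no uniformity is available, and the $\delta$-thresholds have to be chosen sequentially in $m$.
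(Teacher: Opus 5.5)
Your argument is the paper's own: Theorem \ref{prop:susbs}(a) places $\lambda$ in $\sigma_{\mathcal{G}}(\tilde{\mathscr{H}})$, and Corollary \ref{cor:dens} then supplies a family $\tilde{\lambda}^{\delta}\in\sigma(\tilde{\mathscr{H}}^{\delta})$ converging to $\lambda$. Your sequential choice of thresholds $\delta_m$ is a correct and genuinely needed repair of the corollary's diagonal step: the paper's choice $\tilde{\lambda}^{\delta}:=\tilde{\lambda}^{\delta}_{\delta}$ (that is, $\varepsilon=\delta$) does not by itself force convergence, because the rate in Theorem \ref{thm:spectttt} depends on $\tilde{\mathbf{k}}_{\varepsilon}$.
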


\subsection{Electronic state distribution}

We see that for the regularized model (\ref{eq:re_ehn}) as a classical periodic system, the electronic state distribution is characterized by the probability density, which is defined as the squared modulus of the Bloch solutions. To describe the electronic state distribution in the incommensurate system (\ref{eq:incom_sys}), we can establish the definition of probability density via the asymptotic behavior of the regularized model.

Following \cite{MARTIN2020}, we introduce the definition of the set of probability densities for the systems (\ref{eq:incom_sys}) and (\ref{eq:re_ehn}).
\begin{definition}\label{def:sol_spa}
For any compact set $K\in\mathbb{R}^{d}$, 
for $\delta>0$ and $\tilde{\lambda}^{\delta}\in \sigma(\tilde{\mathscr{H}}^{\delta}) $, define 
\begin{align*}
     \Theta_{\delta}(\tilde{\lambda}^{\delta})=\left\{\left|\tilde{u}^{\delta}(\mathbf{r}, \mathbf{r})\right|^{2}: \text{given } \tilde{\lambda}^{\delta},\text{ }\tilde{u}^{\delta} \text{solves (\ref{eq:re_ehn}) and satisfies }\left\Vert \tilde{u}^{\delta}(\mathbf{r}, \mathbf{r})\right\Vert_{L^{2}(K)}=1 \right\}.
 \end{align*}
For $\lambda\in \sigma(\mathscr{H})$, define 
 \begin{align*}
\Theta_{0}(\lambda)
=
\Bigl\{
\varrho:\;
&\exists \{\delta_{\ell}\}_{\ell\geqslant1}\text{ and }\{\delta_{\ell_i}\}\subset \{\delta_{\ell}\} \text{ with }\;
\tilde{\lambda}^{\delta_{\ell_{i}}}\in\sigma(\tilde{\mathscr H}^{\delta_{\ell_{i}}}),
\;
\tilde{\lambda}^{\delta_{\ell_{i}}}\to\lambda,\text{ s.t. }\\
&|\tilde{u}^{\delta_{\ell_{i}}}(\mathbf r,\mathbf r)|^2
\in
\Theta_{\delta_{\ell_{i}}}(\tilde{\lambda}^{\delta_{\ell_{i}}}),
\text{ and }
\langle\varrho,\varphi\rangle
=
\lim_{i\to\infty}
\left\langle
|\tilde{u}^{\delta_{\ell_{i}}}(\mathbf r,\mathbf r)|^2,
\varphi
\right\rangle,
\quad
\forall\,\varphi\in C^\infty(K)
\Bigr\}.
\end{align*}
\end{definition}

The following theorem establishes the well-posedness of the probability density (in the distribution sense) of the incommensurate system (\ref{eq:incom_sys}), and shows that it can be approximated by the probability densities generated by the Bloch solutions of the regularized model (\ref{eq:re_ehn}).
\begin{theorem}\label{thm:solut}
For any compact set $K\in\mathbb{R}^{d}$ and  $\lambda\in\sigma(\mathscr{H})$ satisfying $\lambda=\lim_{\delta\rightarrow0^{+}}\tilde{\lambda}^{\delta}$ with $\tilde{\lambda}^{\delta}\in\sigma(\tilde{\mathscr{H}}^{\delta})$, there hold
\begin{enumerate}[label=(\alph*)]
    \item $\Theta_{0}(\lambda)\neq\{0\}$;
    \item \begin{align*}
\lim_{\delta \to 0^+} \operatorname{dist}_{K, \varphi}\bigl( \Theta_{\delta}(\tilde{\lambda}^{\delta}), \Theta_{0}(\lambda)\bigr) = 0, ~~~ \text{for} ~ \varphi \in C^{\infty}(K),
\end{align*}
where the distance from sets $X$ to $Y$ with respect to the compact set $K \subset \mathbb{R}^d$ and the test function $\varphi \in C^{\infty}(K)$ is given by
\begin{align*}
\operatorname{dist}_{K, \varphi}(X, Y) \coloneqq \sup_{x \in X} \inf_{y \in Y} \langle x - y, \varphi \rangle;
\end{align*}
\item if $\varrho\in\Theta_{0}(\lambda)\setminus\{0\}$ and $\varrho\in L^{1}(K)$, then there exists $\{\delta_{\ell_{i}}\}$ such that
\begin{align*}
\lim_{i\rightarrow\infty}\left\Vert\varrho-\left|\tilde{u}^{\delta_{\ell_{i}}}(\mathbf{r}, \mathbf{r})\right|^{2}\right\Vert_{L^{1}(K)}=0.
\end{align*}
\end{enumerate}

\end{theorem}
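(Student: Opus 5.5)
The backbone of the argument is weak-$*$ compactness of probability measures on the compact set $K$. For each $\delta>0$, Proposition \ref{thm:thm_bloch} gives a Bloch solution $\tilde{u}^{\delta}=e^{\mathrm{i}\tilde{\mathbf{k}}\cdot\tilde{\mathbf{r}}}\tilde{v}^{\delta}$ of (\ref{eq:re_ehn}) with $\tilde{v}^{\delta}\in H^{2}(\mathbb{T}^{2d})$. In dimension $2d\leqslant 4$, elliptic regularity for the uniformly elliptic operator $\tilde{\mathscr{H}}^{\delta}(\tilde{\mathbf{k}})$, bootstrapped with the bounded potential, makes $\tilde{v}^{\delta}$ continuous enough for its restriction to the diagonal $\{(\mathbf{r},\mathbf{r})\}$ to be a meaningful nonzero function. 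If the trace happens to vanish on $K$, I would first use the translation freedom in the $\mathbf{q}$-direction, as in the proof of Theorem \ref{prop:susbs}(b), to move to a slice where it does not.

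After normalizing $\Vert\tilde{u}^{\delta}(\mathbf{r},\mathbf{r})\Vert_{L^{2}(K)}=1$, every element $\rho_{\delta}:=|\tilde{u}^{\delta}(\mathbf{r},\mathbf{r})|^{2}$ of $\Theta_{\delta}(\tilde{\lambda}^{\delta})$ is a nonnegative density with total mass $1$ on $K$. Thus $\Theta_{\delta}(\tilde{\lambda}^{\delta})$ is a nonempty subset of the unit ball of $C(K)^{*}$.

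For (a), take any sequence $\delta_{\ell}\to0^{+}$ and one element $\rho_{\delta_{\ell}}$ for each $\ell$. By Banach--Alaoglu (using separability of $C(K)$) there is a subsequence $\delta_{\ell_{i}}$ with $\rho_{\delta_{\ell_{i}}}\rightharpoonup\varrho$ weak-$*$ as measures, so $\langle\rho_{\delta_{\ell_{i}}},\varphi\rangle\to\langle\varrho,\varphi\rangle$ for all $\varphi\in C^{\infty}(K)$. Since $\tilde{\lambda}^{\delta_{\ell_{i}}}\to\lambda$ by hypothesis (see Theorem \ref{thm:apppppp}), this $\varrho$ belongs to $\Theta_{0}(\lambda)$. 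Testing with $\varphi\equiv1$ gives $\langle\varrho,1\rangle=1$, so $\varrho\neq0$ and $\Theta_{0}(\lambda)\neq\{0\}$.

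For (b), argue by contradiction. Suppose there are $\eta>0$, a sequence $\delta_{\ell}\to0$ and elements $x_{\ell}\in\Theta_{\delta_{\ell}}(\tilde{\lambda}^{\delta_{\ell}})$ with $\inf_{y\in\Theta_{0}(\lambda)}\langle x_{\ell}-y,\varphi\rangle\geqslant\eta$. Apply the compactness step to $\{x_{\ell}\}$: some subsequence converges weakly to a limit $\varrho$, and by the definition of $\Theta_{0}(\lambda)$ this $\varrho$ lies in $\Theta_{0}(\lambda)$. Then $\langle x_{\ell_{i}}-\varrho,\varphi\rangle\to0$, which contradicts the lower bound $\eta$. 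The point to watch is that $\Theta_{0}(\lambda)$ is defined through subsequences, so it is closed under exactly these diagonal limits.

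Part (c) is the main obstacle, because weak convergence together with $\varrho\in L^{1}$ does not in general give strong $L^{1}$ convergence. My plan is to get compactness of the densities themselves. I would try uniform-in-$\delta$ bounds on $\nabla_{\mathbf{r}}\tilde{u}^{\delta}(\mathbf{r},\mathbf{r})$ in $L^{2}(K)$. The diagonal derivative is exactly $(\nabla_{\mathbf{r}}+\nabla_{\mathbf{r}'})\tilde{u}^{\delta}$, and the equation controls it uniformly in $\delta$: multiply (\ref{eq:re_ehn}) by a cutoff times $\overline{\tilde{u}^{\delta}}$ and integrate by parts to obtain a Caccioppoli-type estimate for the nondegenerate direction. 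With such bounds, $|\tilde{u}^{\delta}|^{2}$ is bounded in $W^{1,1}(K)$, and Rellich compactness gives strong $L^{1}$ convergence along a further subsequence. The limit must coincide with the weak limit $\varrho$. The difficulty is converting bounds in the $(\mathbf{p},\mathbf{q})$ bulk into bounds on the trace $\mathbf{q}=0$, which is not uniform as $\delta\to0$. If that fails, the fallback is to choose the subsequence via the Dunford--Pettis criterion: show equi-integrability from the $L^{2}$ normalization together with $\varrho\in L^{1}$, then upgrade to norm convergence using Scheffé's lemma, since the masses $\int_{K}\rho_{\delta}=1$ converge to $\int_{K}\varrho=1$ once $\varrho$ is known to carry full mass.
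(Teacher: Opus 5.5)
Your parts (a) and (b) follow the paper's argument: weak-$*$ compactness of the normalized measures $|\tilde{u}^{\delta}(\mathbf{r},\mathbf{r})|^{2}\,d\mathbf{r}$ in $C(K)^{*}$ via Banach--Alaoglu, and testing with $\varphi\equiv1$ to get $\langle\varrho,1\rangle=1$. Your contradiction argument for (b) is a correct and explicit version of the paper's one-line appeal to ``the arbitrariness of $\{\delta_{\ell}\}$''. It controls only the one-sided quantity, which suffices once $\operatorname{dist}_{K,\varphi}$ is read with an absolute value, as is surely intended.

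One side step should be dropped. Shifting in the $\mathbf{q}$-direction to avoid a vanishing trace does not produce an element of $\Theta_{\delta}(\tilde{\lambda}^{\delta})$: the function $\tilde{u}^{\delta}(\mathbf{r}+\mathbf{q}_{0},\mathbf{r}'-\mathbf{q}_{0})$ solves the regularized equation with the shifted potential $V_{1}(\mathbf{r}+\mathbf{q}_{0})+V_{2}(\mathbf{r}'-\mathbf{q}_{0})$, not (\ref{eq:re_ehn}). Like the paper, you simply have to presuppose $\Theta_{\delta}(\tilde{\lambda}^{\delta})\neq\emptyset$.

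The genuine gap is (c). You rightly single it out, but neither of your routes closes it.

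\emph{Primary route.} Testing (\ref{eq:re_ehn}) with $\chi^{2}\overline{\tilde{u}^{\delta}}$ controls $(\nabla_{\mathbf{r}}+\nabla_{\mathbf{r}'})\tilde{u}^{\delta}$ uniformly in $\delta$. The transversal derivative, however, is controlled only through $\delta\Vert\chi(\nabla_{\mathbf{r}}-\nabla_{\mathbf{r}'})\tilde{u}^{\delta}\Vert^{2}$. A trace on the codimension-$d$ diagonal needs more than $d/2$ transversal derivatives, so any trace bound obtained this way blows up as $\delta\to0^{+}$. Moreover, these are bulk bounds in terms of $\Vert\tilde{u}^{\delta}\Vert_{L^{2}(\operatorname{supp}\chi)}$, while your normalization is on the trace, and there is no reverse inequality. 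In the $(\mathbf{p},\mathbf{q})$ variables the limit operator is fibered in $\mathbf{q}$, so nothing prevents $\tilde{u}^{\delta}$ from oscillating or concentrating transversally.

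\emph{Fallback.} This fails for purely measure-theoretic reasons. Dunford--Pettis yields only weak $L^{1}$ compactness. Scheff\'e's lemma requires a.e. convergence, which weak convergence does not give. Equi-integrability does not follow from $\int_{K}\rho_{\delta}\,d\mathbf{r}=1$ and $\varrho\in L^{1}(K)$. Concretely, on $K=[0,2\pi]^{d}$ the densities $\rho_{n}=(2\pi)^{-d}(1+\sin(n\mathbf{r}_{1}))$ are nonnegative, have mass one, are uniformly bounded (hence equi-integrable), and converge weak-$*$ to $\varrho=(2\pi)^{-d}\in L^{1}(K)$. Yet $\Vert\rho_{n}-\varrho\Vert_{L^{1}(K)}=2/\pi$ for every $n$.

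The paper's own proof does not supply the missing ingredient either. It passes directly from (\ref{eq:weak-star-convergence})--(\ref{eq:weak-star-convergence-limit}) to the $L^{1}$ limit, which is precisely the implication you correctly flagged as false in general. Part (c) needs a genuinely additional ingredient, for example a uniform-in-$\delta$ compactness estimate for the diagonal traces of suitably chosen solutions $\tilde{u}^{\delta}$. Neither your proposal nor the paper provides one.
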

\begin{proof}
Given the compact set $K \subset \mathbb{R}^d$ and any sequence $\{\delta_{\ell}\}_{\ell\geqslant1}$, we define the Radon measure induced by the Bloch solutions as
\begin{align}
\label{eq:m_delta_L}
\mu^{\delta_{\ell}}:=\left|\tilde{u}^{\delta_{\ell}}(\mathbf{r}, \mathbf{r})\right|^{2}d\mathbf{r}.
\end{align}
 That is, for any $A \subset K$, there holds 
\begin{align*}
\mu^{\delta_{\ell}}(A) = \int_{A} \left|\tilde{u}^{\delta_{\ell}}(\mathbf{r}, \mathbf{r})\right|^{2}d\mathbf{r}.
\end{align*}

Note that $\mu^{\delta_{\ell}}(K)=1$ for $\ell\geqslant1$. We have that $\{\mu^{\delta_{\ell}}\}$ forms a bounded family in the measure space $\mathcal{M}(K)$. Since $K$ is compact, the Riesz representation theorem shows $\mathcal{M}(K) = C(K)^*$. Thus, by the Banach-Alaoglu Theorem, there exists some finite Radon measure $\mu$ and a subsequence $\{\delta_{\ell_i}\} \subset \{\delta_{\ell}\}$ such that $\mu^{\delta_{\ell_{i}}}$ weak-$*$ converges to $\mu$, i.e.,
\begin{align*}
\mu^{\delta_{\ell_{i}}}\rightharpoonup^* \mu, ~~~ \text{as} ~~ i \to \infty.
\end{align*}
That is, for $\varphi \in C^{\infty}(K)$, we have
\begin{align}
\label{eq:weak-star-convergence}
\int_K \varphi d \mu^{\delta_{\ell_{i}}} \rightarrow \int_K \varphi d\mu, ~~~ \text{as} ~~ i \to \infty.
\end{align}
This measure $\mu$ induces an probability density $\varrho$ in the sense of distributions, namely,
\begin{align}
\label{eq:weak-star-convergence-limit}
\inner{\varrho, \varphi}\coloneqq \int_{K} \varphi d \mu.
\end{align}
Notably, $\varrho\neq 0$ since if we choose the test function $\varphi  = 1$, then by (\ref{eq:weak-star-convergence}) and (\ref{eq:weak-star-convergence-limit}), we obtain that
\begin{align*}
\inner{\varrho, 1}= \mu(K) &= \lim_{i \to \infty} \mu^{\delta_{\ell_{i}}}(K)=1.
\end{align*}
Therefore, $\Theta_{0}(\lambda)\neq\{0\}$. Moreover, due to the arbitrariness of $\{\delta_{\ell}\}$,
there holds that for any test function $\varphi \in C^{\infty}(K)$,
\begin{align*}
\lim_{\delta \to 0^+} \operatorname{dist}_{K, \varphi}\bigl( \Theta_{\delta}(\tilde{\lambda}^{\delta}), \Theta_{0}(\lambda)\bigr) = 0.
\end{align*}

Consequently, if $\varrho\in\Theta_{0}(\lambda)$ and $\varrho\in L^{1}(K)$, we obtain from (\ref{eq:weak-star-convergence}) and (\ref{eq:weak-star-convergence-limit}) that 
\begin{align*}
\lim_{i\rightarrow\infty}\left\Vert\varrho-\left|\tilde{u}^{\delta_{\ell_{i}}}(\mathbf{r}, \mathbf{r})\right|^{2}\right\Vert_{L^{1}(K)}=0.
\end{align*}
This completes the proof.
\end{proof}

    Theorem \ref{thm:solut} establishes the existence and nontriviality of the limiting electronic state distribution in the incommensurate system, ensuring that the proposed quantity is well-defined. The analysis and numerical experiments in our forthcoming work \cite{ding2026regularized} will demonstrate that the induced  density can effectively characterize the properties of the system.

\section{Concluding remarks}
In this paper, we have tried to carry out the spectral analysis of the Schrödinger operator for the incommensurate system, which is a typical quasi-periodic system. We embed the system into higher dimensions and investigate the extended Schrödinger operator. We establish the spectral consistency between the extended Schrödinger operator after applying the self-adjoint extension and the original Schr\"odinger operator for the incommensurate system.

By introducing a regularization technique to deal with the degenerate ellipticity of the extended Schrödinger operator, we have obtained that the spectrum of the Schrödinger operator for the incommensurate system can be approximated by the spectra of a family of regularized Schrödinger operators, which are elliptic, retain periodicity, and enjoy favorable analytic and spectral properties. In addition, we have shown the
well-posedness of the probability density for the incommensurate system. The probability density can provide a description of the distribution of electronic
states of the system, and can be approximated by the
ones generated by the Bloch solutions to the regularized model. The regularized model is well-posed: all relevant physical observable quantities
are rigorously defined, and the classical Bloch theory, together with a broad range of efficient numerical algorithms developed for periodic systems, is applicable. Our analysis provides theoretical support and an efficient approach for the computation of observable quantities for the incommensurate systems, and can be extended to systems in any dimension and with any number of stacked material layers. It is our ongoing work to apply the regularized model based computational approach to quantum incommensurate systems \cite{ding2026regularized}.


	\bibliographystyle{siamplain}
	\bibliography{references}
\end{document}